\def\Journal#1#2#3#4#5#6{#1, #2, {#3} {\bf #4}, (#6) #5.}
\def\CMP{Commun. Math. Phys.}
\def\PRD{Phys. Rev. D}
\def\CQG{Class. Quantum Grav.}
\def\PRL{Phys. Rev. Lett.}
\def\JMP{J. Math. Phys.}
\newcommand{\bm}[1]{\mbox{\boldmath $#1$}}
\def\X{\mathfrak{X}}
\def\F{{\mathcal F}}
\def\xif{\mbox{\boldmath $ \xi $}}
\def\k{\kappa_\xi}
\def\be{\begin{equation}}
\def\ee{\end{equation}}
\def\bea{\begin{eqnarray}}
\def\eea{\end{eqnarray}}
\def\bean{\begin{eqnarray*}}
\def\eean{\end{eqnarray*}}
\def\={\stackrel{\Sigma}{=}}
\newlength{\cellwidth}
\newcolumntype{P}[1]{>{\centering\arraybackslash}p{#1}}
\newcolumntype{M}[1]{>{\centering\arraybackslash}m{#1}}
\def\aa{F}
\def\eqHxi{\stackrel{\H_{\xi}}{=}}
\def\eqHzeta{\stackrel{\H_{\zeta}}{=}}
\def\eqHeta{\stackrel{\H_{\eta}}{=}}
\def\eqHhat{\stackrel{\Hhat}{=}}
\def\Hone{\H_{\xi}}
\def\Hhat{\widehat{\H}}
\def\a{\bm{a}}
\def\b{\bm{b}}
\def\Kill{{\mathcal A}}
\def\eqH{\stackrel{\H_{\xi}}{=}}
\def\ads{\mbox{(A)dS}}
\def\tinyads{\mbox{\tiny (A)dS}}
\def\antids{\mbox{AdS}}
\def\M{\mathbb{M}}
\def\Mink{\M^{1,n}}
\def\Minkpq{\M^{p,q}}
\def\H{{\mathcal H}}
\def\la{\langle}
\def\ra{\rangle}
\newtheorem{theorem}{Theorem}
\newtheorem{definition}{Definition}
\newtheorem{remark}{Remark}
\newtheorem{corollary}{Corollary}
\newtheorem{lemma}{Lemma}
\newcounter{mnotecount}[section]
\renewcommand{\themnotecount}{\thesection.\arabic{mnotecount}}
\newcommand{\mnote}[1]
{\protect{\stepcounter{mnotecount}}$^{\mbox{\footnotesize
$
\bullet$\themnotecount}}$ \marginpar{
\raggedright\tiny\em
$\!\!\!\!\!\!\,\bullet$\themnotecount: #1} }
\newcommand{\mnotex}[1]
{\protect{\stepcounter{mnotecount}}$^{\mbox{\footnotesize $\bullet$\themnotecount}}$ 
\marginpar{
\raggedright\tiny\em
$\!\!\!\!\!\!\,\bullet$\themnotecount: #1} }
\begin{document}

\title{Multiple Killing Horizons}
\author[1]{Marc Mars}
\author[2]{Tim-Torben Paetz}
\author[3]{Jos\'e M. M. Senovilla}
\affil[1]{Instituto de F\'isica Fundamental y Matem\'aticas, Universidad de Salamanca, Plaza de la Merced s/n, 37008 Salamanca, Spain}
\affil[2]{Gravitational Physics, University of Vienna, Boltzmanngasse 5, 1090 Vienna, Austria}
\affil[3]{Departamento de F\'isica Te\'orica e Historia de la Ciencia, Universidad del Pa\'is Vasco UPV/EHU, Apartado 644, 48080 Bilbao, Spain}

\maketitle

\vspace{-0.2em}

\begin{abstract}
Killing horizons which can be such for two or more linearly independent Killing vectors are studied. We provide a rigorous definition and then show that the set of Killing vectors sharing a Killing horizon is a Lie algebra $\Kill_\H$ of dimension at most the dimension of the spacetime. We prove that one cannot attach different surface gravities to such multiple Killing horizons, as they have an essentially unique {\em non-zero} surface gravity (or none). $\Kill_\H$ always contains an Abelian (sub)-algebra ---whose elements all have vanishing surface gravity--- of dimension equal to or one less than dim $\Kill_\H$. There arise only two inequivalent possibilities, depending on whether or not there exists the non-zero surface gravity. We show the connection with Near Horizon geometries, and also present a linear system of PDEs, the {\em master equation}, for the proportionality function on the horizon between two Killing vectors of a multiple Killing horizon, with its integrability conditions. We provide explicit examples of all possible types of multiple Killing horizons, as well as a full classification of them in maximally symmetric spacetimes. 
\end{abstract}





\section{Introduction}

The notion of Killing horizon captures the idea that a 
Killing vector $\xi$ in a spacetime $(M,g)$ may change causal character
precisely on a null hypersurface. In more precise terms, a
Killing  horizon $\mathcal{H}_{\zeta}$  of a Killing vector
$\zeta$ in a spacetime $(M,g)$ is a null hypersurface where $\zeta$
is null, non-zero and tangent. Killing horizons play a fundamental role in general relativity, in particular in the context of black holes in equilibrium: By Hawking's rigidity theorem   the event horizon of a stationary, asymptotically flat black hole spacetime (supplemented by certain additional assumptions, cf.\ \cite{IK} for a review),  is a Killing  horizon.  In fact one often uses 
the notion of Killing horizon to provide a quasi-local definition of
equilibrium black hole. Killing horizons are also relevant particular
cases of more general notions 
such as isolated horizons, weakly isolated horizons, non-expanding horizons
or totally geodesic null hypersurfaces, which have been
extensively studied in the literature (see 
\cite{Haj1, Haj2, Ash1,
Ash1.5, Ash1.7, Ash2, GourgoulhonJaramillo2006, JaramilloPRD,
Lew1, Lew2, Mars2012}
and references therein).
Some physically interesting spacetimes, such as pp-waves, can even be foliated
by Killing horizons.

Now, it can happen that a null hypersurface, or at least a portion thereof, is simultaneously the Killing
horizon of two or more independent Killing vectors.
In fact this is a situation known to happen e.g.\ in Minkowski spacetime where, in standard coordinates,
the null hypersurface $\{ t=x >0\}$ is a Killing horizon of the null translational Killing $\partial_t+ \partial_x$
as well as of the boost in the $x$ direction $x \partial_t + t \partial_x$.
This article initiates a series of papers where the existence and properties
of these \emph{multiple Killing horizons (MKHs)}
are analyzed in detail.
From a mathematical viewpoint, this problem turns out to be 
remarkably rich and elegant.
Moreover, it leads to
some questions relevant on its own, such as for instance whether near horizon geometries of a multiple degenerate Killing horizon
depend on the Killing vector with respect to which the near horizon limit is performed.

MKHs are also interesting from a physical point of view. 
As mentioned above the event horizon of stationary black holes is a Killing horizon.
Given a   suitably normalized Killing vector $\zeta$ with associated  horizon $\mathcal{H}_{\zeta}$
one can introduce a function $\kappa_{\zeta}$ which provides a measure for  the deviation of the Killing parameter from an affine parameter along the null geodesic generators of $\mathcal{H}_{\zeta}$. 
Under suitable asymptotics of the spacetime this function is interpreted as the ``surface gravity'' of the black hole, as it 
 determines the redshifted force on a near-horizon test body viewed from infinity \cite{Wald}. It turns out that the surface gravity is constant on
$\mathcal{H}_{\zeta}$ under fairly
general circumstances \cite{Wald}, and this establishes the zeroth law of black
hole thermodynamics. The interpretation of the surface gravity as a 
temperature of the black hole is reinforced by the first and second laws of 
black hole thermodynamics, and turned into a physical certainty
by the Hawking emission process and the corresponding 
Hawking temperature. Thus, when dealing with a
MKH an immediate question arises. To a 
MKH one can ascribe different surface gravities (one for each choice 
of independent Killing vector) and hence also different
temperatures to the black hole. What is the physical meaning of this and what
are its physical consequences? As we will see presently, we find a number of MKH's interesting properties that help in resolving this problem.

In this first paper we  focus on the basic concepts and properties of MKHs.
In Section~\ref{basics} we provide a rigorous definition of a MKH and prove a first  property, namely
that all surface gravities are always constant without any further assumptions.

In Section~\ref{sec:types} we analyze the set of all Killing vectors sharing a null hypersurface $\H$ as MKH, and prove that they constitute a Lie algebra ---denoted by $\Kill_\H$.
It is further shown  that one merely has to distinguish two cases: either the Lie algebra is Abelian, in which case all Killing vectors are degenerate 
at the horizon (i.e. have vanishing surface gravities), or it is not Abelian, in which case it contains an Abelian subalgebra of co-dimension 1, and one can
find a basis of Killing vectors  such that all  except one are degenerate.
In the first case we call the MKH \emph{fully degenerate}, in the latter one \emph{non-fully degenerate} or just \emph{non-degenerate}. This result states, in particular, that to any MKH one can ascribe a {\it single} non-zero surface
gravity (or temperature) and this is associated to a single Killing 
generator (up to scale, naturally).
Another general property obtained in this section is that,
letting  $n+1$ denote the spacetime dimension, the maximal dimension $m$ of the Lie algebra $\Kill_\H$ is $n$ in the fully degenerate case while it is $n+1$ in the
non-degenerate case.

Section~\ref{sec:Examples} is devoted to explicit examples of spacetimes with MKHs.
In particular we  provide an example which shows that  MKHs with compact cross-sections exist (which might be regarded as particularly relevant from a physical point of view).
Moreover, we  show that  MKHs exist for any $m\in\{2,\dots, n \}$ and $m\in\{2,\dots,n+1\}$ in the fully degenerate and non-degenerate cases, respectively.
In fact, once a spacetime with a fully degenerate MKH has been given  for some  $m\in\{2,\dots, n\}$ an associated  spacetime with non-degenerate MKH is  obtained
by computing its near horizon geometry \cite{KL}. The reason for that is that when performing the near horizon limit an additional Killing vector, which is   non-degenerate, (and possibly others) is added.

Given a spacetime with a MKH $\mathcal{H}$ the various Killing vectors are parallel on $\mathcal{H}$. In Section~\ref{sec:master} we
derive an equation which is satisfied by the proportionality function between two such Killing vectors.
The so-obtained linear PDE system will be called \emph{master equation}.  We also determine its first integrability conditions.

In Section~\ref{App:MaxSym} we provide a complete classification of MKHs for maximally symmetric spacetimes, i.e.\ for Minkowski and (Anti-)de Sitter spacetimes. For the convenience of the reader some details of the 
proof have been  shifted to Appendix~\ref{ApB}. In Appendix \ref{app:1}
we recall (and prove, for completeness) a known property of the zeros of a 
Killing vector.

Let us conclude the introduction with an outlook.
In the subsequent papers  we will face the question raised above concerning the uniqueness 
of  near horizon geometries which arise from a MKH with at least two degenerate Killing vectors.
We  will further analyze the master equation in more detail. Moreover,  we will   construct vacuum spacetimes with MKHs
via characteristic initial value problems. In this case, and
assuming further
that the initial surface is arranged to form a
bifurcate horizon, the master equation evaluated on the bifurcation surface turns out to be not only necessary but also sufficient for
the existence of a MKH in the emerging spacetime.

\subsection{Notation}
$(M,g)$ denotes a connected, oriented and time-oriented $(n+1)$-dimensional Lorentzian manifold with metric $g$ of signature $(-,+,\dots,+)$.
We sometimes call $(M,g)$ the spacetime.
Unless otherwise stated, all submanifolds will be without boundary.
The topological closure of a set $A$ is denoted by $\overline{A}$.
Given a vector (field) $v$ in $TM$, $\bm{v}$ denotes the corresponding one-form, i.e., the metrically
related covector. Similarly, $\omega$ denotes the vector obtained by raising indices of
a one-form $\bm{\omega}$. In general, $\mathfrak{X}(M)$ denotes the set of smooth vector fields on a differentiable manifold $M$.

We will use index-free as well as index notation. Lowercase Greek letters $\alpha,\beta,\dots $ are spacetime indices and run from $0$ to $n$. Small Latin indices $a,b,\dots,h$ are hypersurface indices and take values from $1$ to $n$. Capital Latin indices $A,B,\dots$ are co-dimension-2 submanifold indices running from $2$ to $n$. Finally, small Latin indices $i,j,\dots$ will enumerate the different Killing vectors of multiple Killing horizons and will take values in $\{1,\dots , m\}$, where $m\leq n+1$.

\section{Multiple Killing Horizons: Basics}
\label{basics}
We start by recalling the notion of a Killing Horizon, which will be the basis of the entire paper. 
This notion is only relevant when the spacetime dimension
is at least two, which we assume from now on.
\begin{definition}[Killing horizon of a Killing $\xi$]
A smooth null hypersurface $\H_{\xi}$ embedded  in  a spacetime $(M,g)$ is 
a {\bf Killing horizon  of a Killing vector $\xi$} of $(M,g)$ 
if and only if $\xi$ is null on $\H_\xi$, nowhere zero on $\H_\xi$ and tangent to $\H_\xi$.
Killing horizons can have either one or several connected components, but in the latter case we require that the interior of its closure is a smooth connected hypersurface.
\end{definition}
The reason to allow for multiple connected components will become clear later, as this is needed in our main definition \ref{def:MKH}, and will be illustrated in the examples of section \ref{sec:Examples}.

A more general notion is that of a {\bf Killing prehorizon}. Its definition 
is the same as for Killing horizon except that 
the condition that $\H_{\xi}$ is embedded is replaced by injectively immersed. We will also need the related concept of bifurcation at Killing horizons \cite{Boyer,KW,RW}.
\begin{definition}[Bifurcate Killing horizon]
Let $\xi$ be a Killing vector on $(M,g)$ which has a connected and spacelike co-dimension two submanifold $S$ of fixed points (i.e, such that $\xi|_S =0$). Then, the 
set of points along all null geodesics orthogonal to $S$ comprises what is called a {\bf bifurcate Killing horizon} with respect to $\xi$.
\end{definition}

Observe that the null geodesics orthogonal to $S$ generate two transversal null hypersurfaces $\H_1$ and $\H_2$.
The portions $\H_1^+$ and $\H_2^+$ to the future of $S$, as well as the portions $\H_1^-$ and $\H_2^-$ to its past, are all connected Killing horizons. Moreover, $\H_1^+ \cup \H_1^- \subset \H_1$ is also a Killing horizon according
to our definition (since its closure is $\H_1$, which is 
open and connected). The same holds for 
$\H_2^+ \cup \H_2^-$.  Note that $\H_1$, $\H_2$ are {\it not}
Killing horizons.
 The union $\H_1^+\cup \H_2^+\cup\H_1^-\cup \H_2^-\cup S = \H_1\cup \H_2$ is the bifurcate Killing horizon.

Our main goal is the study of the following particular class of Killing Horizons.
\begin{definition}[Multiple Killing horizon (MKH)]\label{def:MKH}
A null hypersurface $\H$ embedded  in a spacetime $(M,g)$ is 
a {\bf multiple Killing horizon} if
 $(M,g)$ admits 
Killing horizons  $\H_{\xi_i}$, $i\in\{1,\dots ,m\}$ with $m\geq 2$,
associated to linearly independent Killing vectors
$\xi_i$ satisfying
\begin{align*}
\overline{\H} = \overline{\H}_{\xi_1} = \dots = \overline{\H}_{\xi_m}.
\end{align*}
\end{definition}
Note that if $\H$ is a MKH, so
it is any open subset of $\H$ whose closure is connected. Observe also that 
any hypersurface  containing $\H$ and contained in $\overline{\H}$
is also a MKH. This stems
from the fact that the definition involves $\overline{\H}$.
The reason behind taking this closure in the definition is that
it is not generally true that, say,
$\H_{\xi_1} = \H_{\xi_2}$ and only their closures agree (see section \ref{sec:Examples} for some illustrative examples).
Nonetheless, the case when $\H_{\xi_i} = \H_{\xi_j}$ for all $i,j\in\{1,\dots, m\}$ seems to be still feasible, though it is much rarer.
%
%

Killing (pre)horizons of a Killing vector $\xi$ have an associated notion of {\bf surface gravity}, which is a smooth function $\kappa_{\xi} : \H_{\xi} \longrightarrow\mathbb{R}$ defined by
\begin{align}
\nabla_{\xi} \xi \eqH  \kappa_{\xi} \xi \quad 
\quad \mbox{or equivalently} \quad \quad \mbox{grad} (g(\xi,\xi)) \eqH  -2 \kappa_{\xi} \xi
\label{SurGrav}.
\end{align}
If this function vanishes, then $\H_\xi$ is said to be {\em degenerate}. It is very easy to check that $\k$ is constant along the null generators of $\H_\xi$, that is
\be
\xi (\k) =0 .
\ee
One can show that $\k$ has the following useful representation \cite{Wald,FN} (a justification will be provided later in section \ref{sec:master})
\be
\k^2 \eqH -\frac{1}{2} \nabla_\mu\xi_\nu \nabla^\mu \xi^\nu  \label{kappasquare}
\ee
which allows us to prove that $\k$ actually extends as a smooth function to the whole connected 
$\overline\H_\xi$, despite the fact that $\H_\xi$ may have several connected components.

We are going to prove that, actually, for any MKH all possible surface gravities are constant. To that end, we need an intermediate basic result. Let $\H$ be a MKH with respect to the Killing vectors $\xi$ and $\eta$. 
Set 
$$\Hhat := \H_{\xi} \cap \H_{\eta}$$
and let $\aa : \Hhat \longrightarrow \mathbb{R}$ be the scalar function defined by
\be
\eta \eqHhat \aa \xi \label{etaxi} .
\ee
By construction $\aa$ is well-defined, smooth and nowhere zero. This function extends
smoothly (and uniquely) to all $\H_{\xi}$ but the extension may have zeroes. Furthermore, $F$ cannot be constant on any open subset ${\cal U} \subset \H$. This follows from the fact that the set of fixed points of a Killing vector cannot have co-dimension one (this is known, but we include a proof in Appendix \ref{app:1}) and the Killing vector $\eta -F_0 \xi$ would vanish on ${\cal U}$ if $F |_{\cal U} = \mbox{const} := F_0$.

\begin{lemma}\label{eq:xiF}
Let $\H$ be a MKH with respect to the Killing vectors $\xi$ and $\eta$ and denote by $\kappa_{\xi}$ and $\kappa_{\eta}$ the surface gravities of $\xi$ on $\H_\xi$ and $\eta$ on $\H_\eta$, respectively. Then 
\begin{align}
\kappa_{\eta} \eqHxi \xi (\aa) + \aa \kappa_{\xi}. 
\label{PDEaa}
\end{align}
\end{lemma}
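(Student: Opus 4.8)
The plan is to work on $\Hhat = \H_\xi \cap \H_\eta$, where both $\xi$ and $\eta$ are null, tangent, and nowhere zero, and where $\eta = F\xi$ holds with $F$ smooth and nowhere vanishing. The idea is to compute $\nabla_\eta \eta$ in two ways: once using the defining relation \eqref{SurGrav} for $\kappa_\eta$, and once by substituting $\eta = F\xi$ and expanding with the Leibniz rule for the covariant derivative. Comparing the two expressions, together with the fact that $\xi$ is null on $\H_\xi$ (hence $\nabla_\xi\xi \eqH \kappa_\xi\xi$) and the observation that $\xi(F)$ is exactly the directional derivative along the generator, should yield \eqref{PDEaa} after dividing by the nowhere-zero factor $F$.

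**The key computational step.**

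Concretely, I would write, on $\Hhat$,
\begin{align*}
\nabla_\eta \eta = \nabla_{F\xi}(F\xi) = F\,\nabla_\xi(F\xi) = F\big(\xi(F)\,\xi + F\,\nabla_\xi\xi\big) \eqHhat F\,\xi(F)\,\xi + F^2 \kappa_\xi \,\xi,
\end{align*}
where in the last equality I used $\nabla_\xi\xi \eqHhat \kappa_\xi\xi$. On the other hand, by the definition of $\kappa_\eta$ on $\H_\eta$,
\begin{align*}
\nabla_\eta\eta \eqHhat \kappa_\eta\,\eta = \kappa_\eta F\,\xi.
\end{align*}
Equating the two right-hand sides gives $\kappa_\eta F\,\xi \eqHhat \big(F\,\xi(F) + F^2\kappa_\xi\big)\xi$; since $\xi$ is nowhere zero on $\Hhat$ and $F$ is nowhere zero there, we may cancel $\xi$ and divide by $F$ to obtain $\kappa_\eta \eqHhat \xi(F) + F\kappa_\xi$. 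Finally, because $\Hhat$ is dense in $\H_\xi$ (its closure is $\overline\H_\xi$) and because $\kappa_\eta$, $\kappa_\xi$ extend smoothly to $\overline\H_\xi$ while $F$ and $\xi(F)$ are smooth on $\H_\xi$, the identity propagates by continuity to all of $\H_\xi$, which is the claimed \eqref{PDEaa}.

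**Anticipated obstacle.**

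The routine Leibniz expansion is harmless; the subtle point is the passage from $\Hhat$ to all of $\H_\xi$. One must be careful that the relation $\eta \eqHhat F\xi$, valid only where $\eta$ is null, does not in general hold on $\H_\xi \setminus \Hhat$, so \eqref{PDEaa} is really an identity obtained on the dense subset $\Hhat$ and then extended. I would make sure the smoothness/continuity argument is airtight: the left side $\kappa_\eta$ is a priori defined on $\H_\eta$, but by \eqref{kappasquare} it extends smoothly to $\overline\H_\eta = \overline\H = \overline\H_\xi$, so the equation \eqref{PDEaa} makes sense as an equation of smooth functions on $\H_\xi$ once $F$ is taken to be the smooth extension mentioned before the lemma. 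A secondary point worth stating explicitly is why $\nabla_\xi\xi$, a priori equal to $\kappa_\xi\xi$ only on $\H_\xi$, may be used inside the computation on $\Hhat$: this is simply because $\Hhat \subset \H_\xi$.
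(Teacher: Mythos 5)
Your proposal is correct and follows essentially the same route as the paper: both compute $\nabla_\eta\eta$ on $\Hhat$ once via the definition of $\kappa_\eta$ and once by Leibniz expansion of $\eta = F\xi$, then cancel the nowhere-zero factor $F\xi$. Your continuity argument for extending the identity from $\Hhat$ to $\H_\xi$ is the same in substance as the paper's, which phrases it via the fact that the fixed-point set of $\eta$ (the zeros of $F$) has codimension at least two, so its complement is dense.
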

\begin{proof}
A direct calculation using (\ref{etaxi}) provides 
$$
\kappa_\eta \eta \eqH \nabla_{\eta} \eta  \eqH \left( 
F^2 \k +F\nabla_\xi F \right)\xi 
$$
from where we deduce
$$
\xi (F)+\k F \eqH \kappa_\eta  \label{kappai}
$$
which holds even at the fixed points of $\eta$ (where $F$ vanishes), because the set of fixed points of $\eta$ can have at most co-dimension 2, and thus it follows by continuity.
\end{proof}

As mentioned above, the surface gravities are constant along the null generators, so
the PDE (\ref{PDEaa}) can be explicitly integrated. 
Let $\tau : \H_{\xi} \cap \H_{\eta}  \longrightarrow \mathbb{R}$
be a (smooth) scalar function satisfying $\xi (\tau ) = 1$. Obviously $\tau$ is not univocally defined, as it is affected by the freedom:
\be
\tau \longrightarrow \tau+\tau_0, \hspace{1cm} \xi (\tau_0)=0. \label{ufreedom}
\ee
This freedom can be fixed by giving initial data on any cut $S_0\subset \H$ transversal to $\xi$ but, for the time being, we leave this free.
Now define  $Q_{\xi} : \Hone 
\longrightarrow \mathbb{R}$ by
\begin{align}
Q_{\xi} := \left \{ 
\begin{array}{ll} 
- \frac{1}{\kappa_{\xi}} \left ( e^{-\kappa_{\xi} \tau} - 1 \right ) 
\quad \quad & \mbox{ if } \kappa_{\xi} \neq 0,  \\
 \tau  \quad \quad & \mbox{ if } \kappa_{\xi} = 0.  
\end{array}
\right . \label{Qxi}
\end{align}
This is a smooth function on $\H_{\xi}$ irrespectively of whether
$\kappa_{\xi}$ has zeros or not. Note also that
$\xi (Q_{\xi} ) = e^{-\kappa_{\xi} \tau}$. Then, the general solution of (\ref{PDEaa}) is given in terms of a smooth nowhere zero function  $f : \H_{\xi} \cap \H_{\eta}
\longrightarrow \mathbb{R}$ satisfying $\xi (f )=0$, by 
\begin{align} 
\aa = f e^{-\kappa_{\xi} \tau} + \kappa_{\eta} Q_{\xi}.
\label{expaa}
\end{align}
Indeed
\begin{align*}
\xi (\aa) + \aa \kappa_{\xi} 
&= \xi \left ( f e^{-\kappa_{\xi} \tau} + \kappa_{\eta} Q_{\xi} \right )
 + \kappa_{\xi} \left ( f e^{-\kappa_{\xi} \tau} + \kappa_{\eta} Q_{\xi}  \right ) \\
& =  - \kappa_{\xi} f e^{-\kappa_{\xi} \tau } + \kappa_{\eta} e^{-\kappa_{\xi} \tau} 
+ \kappa_{\xi} \left ( f e^{-\kappa_{\xi} \tau} + \kappa_{\eta} Q_{\xi}  \right )
= \kappa_{\eta} \left ( e^{-\kappa_{\xi} \tau}  + \kappa_{\xi} Q_{\xi} \right )
= \kappa_{\eta}. 
\end{align*}
As before $f$ extends smoothly to $\H_{\xi}$, possibly with zeroes. 

We can now prove that in MKHs, all the surface
gravities are necessarily constant. 
\begin{theorem}
Let $\H$ be a multiple Killing horizon and $\H_{\xi}$, $\H_{\eta}$
be Killing horizons satisfying $\overline{\H}_{\xi} =
\overline{\H}_{\eta} = \overline{\H}$. Then the respective surface gravities
$\kappa_{\xi}$ and $\kappa_{\eta}$ are constant.
\end{theorem}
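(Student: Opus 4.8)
\emph{Plan.} The plan is to show that $\kappa_\xi$ is locally constant on $\H_\xi$; since it extends smoothly to the connected set $\overline\H_\xi=\overline\H$, it will then be constant, and the same will hold for $\kappa_\eta$ (by the symmetric argument, or simply interchanging the two Killing vectors). As $\xi(\kappa_\xi)=0$ is already known, this amounts to proving that the derivatives of $\kappa_\xi$ in directions transverse to the null generators of $\H_\xi$ vanish.

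First I would record an identity valid for any single Killing horizon. On $\H_\xi$ the one-form $\bm{\xi}$ pulls back to zero (it is proportional to the conormal of $\H_\xi$), hence so does $d\bm{\xi}=2\nabla\bm{\xi}$, and therefore $\nabla_\mu\xi_\nu\eqHxi\xi_\nu\omega_\mu-\xi_\mu\omega_\nu$ for some one-form $\bm{\omega}$ on $\H_\xi$ with $\omega(\xi)\eqHxi\kappa_\xi$; equivalently $\nabla_X\xi\eqHxi\omega(X)\,\xi$ for every $X$ tangent to $\H_\xi$. Differentiating this relation along a tangential $X$, using the Killing identity $\nabla_\alpha\nabla_\mu\xi_\nu=R_{\nu\mu\alpha\beta}\xi^\beta$, and contracting first with $\xi^\mu$ and then with a transversal vector $N$ with $g(N,\xi)=-1$, all $\nabla\bm{\omega}$ terms drop out and one is left with
\begin{align*}
X(\kappa_\xi)\eqHxi -\,R_{\nu\mu\alpha\beta}\,N^\nu\xi^\mu X^\alpha\xi^\beta
\end{align*}
for all $X$ tangent to $\H_\xi$ (taking $X=\xi$ reproduces $\xi(\kappa_\xi)=0$). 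The same identity also follows by differentiating (\ref{kappasquare}) tangentially along $\H_\xi$.

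Then I would feed in the MKH structure. The same identity holds for $\eta$ on $\H_\eta$, with a transversal $N^\eta$, $g(N^\eta,\eta)=-1$. On $\Hhat=\H_\xi\cap\H_\eta$ one has $\eta\eqHhat\aa\,\xi$ with $\aa\neq0$, so taking $N^\eta=N/\aa$ and using multilinearity of the curvature tensor (the powers of $\aa$ collapse to one factor) gives
\begin{align*}
X(\kappa_\eta)\eqHhat\aa\,X(\kappa_\xi)
\end{align*}
for all $X$ tangent to $\Hhat$. Around a point of $\Hhat$ I would choose a cross-section, a coordinate $\tau$ with $\xi=\partial_\tau$, and cross-sectional fields $e_A$ commuting with $\xi$. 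Since $\kappa_\xi$ and $\kappa_\eta$ are constant along the generators (the latter because $\eta\eqHhat\aa\,\xi$ with $\aa\neq0$), the functions $e_A(\kappa_\xi)$ and $e_A(\kappa_\eta)$ are $\tau$-independent, whereas $\aa$ generically is not; hence on any open set where some $e_A(\kappa_\xi)\neq0$ the displayed relation forces $\aa=e_A(\kappa_\eta)/e_A(\kappa_\xi)$ to be $\tau$-independent, i.e.\ $\xi(\aa)=0$. Lemma~\ref{eq:xiF} then gives $\kappa_\eta=\aa\,\kappa_\xi$ there, and reinserting this into the relation for every $e_B$ yields $\kappa_\xi\,e_B(\aa)=0$; as $e_A(\kappa_\xi)\neq0$ makes $\{\kappa_\xi\neq0\}$ dense, $e_B(\aa)=0$ for all $B$, so together with $\xi(\aa)=0$ the function $\aa$ is constant on a non-empty open subset of $\H$ --- contradicting the fact, recalled just before Lemma~\ref{eq:xiF}, that $\aa$ is nowhere locally constant. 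Hence $e_A(\kappa_\xi)\equiv0$ for all $A$, so $d\kappa_\xi$ vanishes on the dense set $\Hhat$ and $\kappa_\xi$ is constant on $\overline\H$; finally $e_A(\kappa_\eta)\eqHhat\aa\,e_A(\kappa_\xi)=0$ and $\eta(\kappa_\eta)=0$, and since $\eta$ and the $e_A$ span the tangent space of $\Hhat$, $\kappa_\eta$ is constant too.

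The hard part is the first step: extracting the ``tangential derivative of the surface gravity $=$ mixed curvature component'' identity in a form clean enough that the proportionality $\eta\eqHhat\aa\,\xi$ can be transported through it --- everything afterwards is the essentially elementary observation that the $\tau$-dependence of $\aa$ is incompatible with a transversally varying surface gravity, together with the no-codimension-one-fixed-point-set property of Killing vectors. A minor subtlety is that $\H_\xi$ may be disconnected although $\overline\H$ is connected, which is precisely why one concludes by checking $d\kappa_\xi=0$ on the dense set $\Hhat$ and then invoking the smooth extension of $\kappa_\xi$ to the connected $\overline\H$.
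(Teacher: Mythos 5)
Your proposal is correct and follows essentially the same route as the paper's proof: the same curvature identity obtained by differentiating $\nabla_\xi\xi=\kappa_\xi\xi$ tangentially, the same key relation $X(\kappa_\eta)\eqHhat \aa\,X(\kappa_\xi)$, and the same contradiction with the codimension-$\geq 2$ property of Killing fixed-point sets via the $\tau$-dependence of $\aa$. The only cosmetic differences are that you scalarize the identity by contracting with a transversal $N$ and deduce $\xi(\aa)=0$ from $\aa$ being a ratio of $\tau$-independent quantities rather than by differentiating the relation along $\xi$.
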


\begin{remark}\label{constantSurface}
Constancy of the surface gravity is known to hold in several circumstances,
namely when the Killing generator is integrable \cite{RW} (i.e.
$\bm{\xi} \wedge d \bm{\xi} =0$)), or when the Einstein
tensor of $(M,g)$ satisfies the dominant energy condition \cite[Chapter 12]{Wald}, or for bifurcate Killing horizons \cite{KW,FN}. 
For multiple Killing horizons the constancy of the surface gravity turns
out to be a universal property.
\end{remark}

\begin{proof}
In the multiple horizon case
we work on $\Hhat := \H_{\xi} \cap \H_{\eta}$. Since
$\overline{\H_{\xi} \cap \H_{\eta}} = \overline{\H_{\xi}}
= \overline{\H_{\eta}} = \overline{\H}$, proving constancy on this
set also proves it in the respective Killing horizons.  

Any Killing horizon has a vanishing second fundamental form relative to the one-form $\xif$, as follows from the fact that\footnote{We use the same symbol
$X$ to denote a vector field $X \in \X(\Hhat)$ and its image in 
$T_{\Hhat} M$ under the embedding from $\H$ into $M$.
The precise meaning will be clear
from the context.}
$$
g(X,\nabla_X \xi)\eqHhat 0 , \hspace{1cm} \forall X\in \mathfrak{X} (\Hhat)
$$
if $\xi$ is a Killing vector. 
 This implies the existence of a one-form $\bm{\varphi}\in \Lambda(\Hhat)$ such that
\be
\nabla_X \xi \eqHhat \bm{\varphi} (X) \xi , \hspace{1cm} \forall X\in \mathfrak{X} (\Hhat). \label{varphi} 
\ee
Taking the covariant derivative along $X$ of the first in (\ref{SurGrav}) and using (\ref{varphi}) 
$$
\bm{\varphi} (X)\xi^\rho \nabla_\rho \xi^\mu +\xi^\rho X^\sigma\nabla_\sigma\nabla_\rho \xi^\mu \eqHhat X(\k) \xi^\mu +\k \bm{\varphi} (X) \xi^\mu .
$$
But any Killing vector satisfies \cite{Wald}
\be
\nabla_\sigma\nabla_\rho \xi_\mu =\xi_\nu R^\nu{}_{\sigma\rho\mu} \label{D2xi}
\ee
where $R^\nu{}_{\sigma\rho\mu} $ is the Riemann tensor of $(M,g)$, so that using (\ref{SurGrav}) again in the previous expression we arrive at
\be
\xi^\rho X^\sigma R^\nu{}_{\sigma\rho\mu}\xi_\nu  \eqHhat X(\k) \xi_ \mu \label{Xxi} .
\ee
The same calculation for $\eta$ leads to
$$
\eta^\rho X^\sigma R^\nu{}_{\sigma\rho\mu}\eta_\nu  \eqHhat X(\kappa_\eta) \eta_ \mu 
$$ so that using here (\ref{etaxi}) and combining with (\ref{Xxi}) we get
\begin{align*}
X (\kappa_{\eta} ) \eqHhat \aa X(\kappa_{\xi}), \hspace{1cm} \forall X\in \mathfrak{X}(\Hhat)
\end{align*}
where $X$ is any vector field tangent to $\Hhat$. Now, the combination of this with (\ref{kappai}) gives the desired result, as $F$ given in (\ref{expaa}) has $\tau$-dependence while the surface gravities do not. To be precise, choose any $X \in \X(\Hhat)$ such that $[\xi ,X]=0$ and take the directional derivative along $\xi$ of the previous expression
$$
\xi (X (\kappa_{\eta} )) \eqHhat \xi(\aa) X(\kappa_{\xi}) +\aa \xi (X(\kappa_\xi)) \hspace{5mm} \Longrightarrow \hspace{3mm}
X (\xi (\kappa_\eta)) \eqHhat \xi(\aa) X(\kappa_{\xi}) + \aa X(\xi(\kappa_\xi)) 
$$
and now use (\ref{kappai}) and $\xi (\kappa_\eta ) =F^{-1} \eta (\kappa_\eta) =0$ to get
$$
\xi(\aa) X(\k)=0
$$
which holds for arbitrary $X\in \mathfrak{X}(\Hhat)$ as long as it commutes with $\xi$. 
If $X (\kappa_{\xi} ) \neq 0$ on some open, connected
and non-empty subset ${\cal U}\subset \Hhat$, then $\xi(\aa)\stackrel{{\cal U}}{=}0$ would necessarily follow, so that from (\ref{PDEaa}) $\kappa_{\eta} = \aa \kappa_{\xi}$ would hold on ${\cal U}$ . By restricting  ${\cal U}$ 
if necessary
we would then have that $\kappa_{\xi}$ vanishes nowhere in this set, and consequently
\begin{align*}
& \aa X(\kappa_{\xi} ) = 
X (\kappa_{\eta} ) \stackrel{{\cal U}}{=} 
X  ( \aa \kappa_{\xi}) \stackrel{{\cal U}}{=} 
X (\aa) \kappa_{\xi}
 + \aa X(\kappa_{\xi})  \\
&  \Longrightarrow \quad \quad
X(\aa) \kappa_{\xi} \stackrel{{\cal U}}{=}
0 \hspace{2cm}
\quad \quad \Longrightarrow \quad \quad  X(\aa) 
\stackrel{{\cal U}}{=} 0
\end{align*}
implying that $\aa $ would be a constant on ${\cal U}$, say $\aa_0$. But then
the Killing vector $\eta - \aa_0 \xi$ would vanish on a co-dimension one
subset of the spacetime, hence everywhere, and $\eta$ would not be linearly independent of $\xi$,
against hypothesis.

Hence, $X(\kappa_{\xi})$ must vanish on a dense subset of $\Hhat$ ---for arbitrary $X$ subject to $[\xi,X]=0$---, then also $X(\kappa_{\eta})$ vanishes there, and both $\kappa_{\xi}$ and
$\kappa_{\eta}$ are constant on any connected component of
$\H_{\xi} \cap  \H_{\eta}$.
By continuity of $\kappa_{\xi}$ on $\H_{\xi}$ it follows that this surface
gravity is constant on $\overline \H$ and the same holds for $\kappa_{\eta}$. 
\end{proof}

\section{Multiple Killing Horizons: Lie algebra and types}\label{sec:types}
In this section, we start by proving that the set of all Killing vectors in $(M,g)$ with a common multiple Killing horizon constitute a Lie sub-algebra of the Killing Lie algebra, and we also find the possible structure constants and dimensions. This will then allow for distinguishing between different types of MKHs, which will be rigorously defined.

For any spacetime $(M,g)$ we denote by $\Kill_M$ the Lie algebra of Killing
vectors. This is a finite dimensional vector space of dimension bounded above 
by $(n+1)(n+2)/2$. Consider a multiple Killing horizon $\H$ and define
$\Kill_\H$
as the  union of the trivial Killing 
vector  and the collection of Killing vectors $\xi$ which admit
a Killing horizon $\H_{\xi}$ satisfying $\overline{\H} = \overline{\H_{\xi}}$.
It turns out that $\Kill_{\H}$ is a Lie sub-algebra of 
$\Kill_{M}$.  
\begin{theorem}
\label{KillH}
Let $\H$ be a multiple Killing horizon in a spacetime 
$(M,g)$ of arbitrary dimension at least two. Then $\Kill_{\H}$ is a Lie sub-algebra
of the Killing algebra $\Kill_{M}$ of $(M,g)$. 
\end{theorem}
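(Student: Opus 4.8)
The plan is to verify that $\Kill_\H$ is a vector subspace of $\Kill_M$ that is closed under the Lie bracket. Since $\Kill_M$ is already a Lie algebra, once these two closure properties are established the Jacobi identity and bilinearity are inherited automatically, so those do not need separate treatment. The trivial Killing vector is in $\Kill_\H$ by fiat, so the content is entirely about nonzero elements.

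For closure under scalar multiples the argument is immediate: if $\xi\in\Kill_\H$ with Killing horizon $\H_\xi$ and $\lambda\in\mathbb{R}\setminus\{0\}$, then $\lambda\xi$ is null, nowhere zero and tangent on the very same hypersurface $\H_\xi$, so $\overline{\H_{\lambda\xi}}=\overline{\H_\xi}=\overline\H$. Closure under addition is the first place where something must actually be proved: given $\xi,\eta\in\Kill_\H\setminus\{0\}$, I would work on $\Hhat=\H_\xi\cap\H_\eta$, whose closure is $\overline\H$ by Definition~\ref{def:MKH}. On $\Hhat$ one has $\eta\eqHhat F\xi$ with $F$ smooth and nowhere zero, so $\xi+\eta\eqHhat(1+F)\xi$ is null and tangent to $\Hhat$; it is nonzero precisely on the (open, dense in $\overline\H$) subset where $F\neq -1$. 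The subtlety is that $F$ could equal $-1$ on a closed set of $\Hhat$, making $\xi+\eta$ vanish there; but by the remark preceding Lemma~\ref{eq:xiF} the set where a Killing vector vanishes has co-dimension at least two and in particular cannot contain an open piece of $\H$, so the complement $\H' := \{\, p\in\Hhat : (1+F)(p)\neq 0\,\}$ is open with $\overline{\H'}=\overline\H$. Hence $\H'$ is a Killing horizon of $\xi+\eta$ with $\overline{\H'}=\overline\H$, so $\xi+\eta\in\Kill_\H$ (here one also uses the clause in Definition~1 allowing several connected components as long as the interior of the closure is connected, which holds since $\overline{\H'}=\overline\H$ and $\H$ is a connected hypersurface).

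The remaining and most delicate step is closure under the bracket: I must show $[\xi,\eta]\in\Kill_\H$. Since $[\xi,\eta]$ is automatically a Killing vector, the task is to show that on (an open dense subset of) $\Hhat$ the bracket is null, tangent and, where nonzero, proportional to $\xi$, and that the hypersurface on which it is nonzero has closure $\overline\H$. Using $\eta\eqHhat F\xi$ one computes $[\xi,\eta]=[\xi,F\xi]=\xi(F)\,\xi$ on $\Hhat$ — wait, that identity only holds when $\eta$ and $F\xi$ agree as vector fields, which they do only on $\Hhat$, not in a neighbourhood, so $[\xi,\eta]$ is not literally $[\xi,F\xi]$; instead one must argue that the tangential part of $[\xi,\eta]$ along $\Hhat$ is governed by the restriction of $\eta$ to $\Hhat$. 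The clean way is: $[\xi,\eta]=\nabla_\xi\eta-\nabla_\eta\xi$, and both $\nabla_\xi\eta$ and $\nabla_\eta\xi$ evaluated on $\Hhat$ depend only on the values of the vector fields along $\Hhat$ together with the connection, so substituting $\eta\eqHhat F\xi$ and using $\nabla_X\xi\eqHhat\bm\varphi(X)\xi$ (equation~(\ref{varphi})) one gets $[\xi,\eta]\eqHhat\big(\xi(F)+F\,\bm\varphi(\xi)-\bm\varphi(F\xi)\big)\xi=\xi(F)\,\xi$, so $[\xi,\eta]$ is proportional to $\xi$ on $\Hhat$, in particular null and tangent. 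It is nonzero exactly where $\xi(F)\neq 0$; but $F$ cannot be constant on any open subset of $\H$ (again by the Appendix~\ref{app:1} argument, since otherwise $\eta-F_0\xi$ would vanish on an open piece of $\H$), so $\xi(F)$ is not identically zero on any open set, hence $\{\,p\in\Hhat:\xi(F)(p)\neq 0\,\}$ is open and dense in $\overline\H$. If this set is nonempty — which it is, unless $[\xi,\eta]$ is the trivial Killing vector, in which case we are already done — its closure is $\overline\H$, so it is a Killing horizon of $[\xi,\eta]$ with the required closure, and $[\xi,\eta]\in\Kill_\H$.

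The main obstacle is therefore the bookkeeping around zeros and non-embeddedness: none of $\xi+\eta$, $[\xi,\eta]$ is guaranteed to be nonzero on all of $\Hhat$, so at several points one must pass to the open subset where the relevant proportionality factor ($1+F$, respectively $\xi(F)$) is nonzero and invoke the co-dimension-two (resp. non-co-dimension-one) property of the zero set of a Killing vector, quoted from Appendix~\ref{app:1}, to guarantee that this subset is still dense with closure $\overline\H$ — and hence still qualifies as a Killing horizon under Definition~1. I expect no genuinely hard analysis, only the careful verification that each candidate vector field satisfies all four clauses of the Killing-horizon definition on a suitable dense open piece of $\H$.
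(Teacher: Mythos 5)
Your overall route is the same as the paper's: work on $\Hhat=\H_\xi\cap\H_\eta$, use $\eta\eqHhat F\xi$, check that linear combinations and the bracket are null and tangent there, and invoke the codimension-two property of Killing zero sets (Lemma~\ref{codimension_two}) to pass to the open dense subset where the candidate vector is nonzero. Your computation of $[\xi,\eta]$ via $\nabla_\xi\eta-\nabla_\eta\xi$ is in fact more careful than the paper's shorthand $[\xi,F\xi]$, since it makes explicit why only the values of $\eta$ along $\Hhat$ matter.

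There is, however, one genuinely faulty step in your bracket argument. You claim that because $F$ is not constant on any open subset of $\H$, the derivative $\xi(F)$ cannot vanish identically on any open set, and you use this to conclude that $\{\xi(F)\neq 0\}$ is dense. That implication is false: a function can be nowhere locally constant and still be constant along every integral curve of $\xi$, i.e.\ have $\xi(F)\equiv 0$, if it varies only transversally to $\xi$ within $\H$. This is not a hypothetical pathology — it is exactly what happens for fully degenerate MKHs, where (\ref{PDEaa}) gives $\xi(F)=\kappa_\eta-F\kappa_\xi=0$ identically on $\Hhat$ while $F$ is nowhere locally constant; there $[\xi,\eta]$ vanishes on all of $\Hhat$ and hence (by Lemma~\ref{codimension_two}) everywhere. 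The correct justification of density is the one you already used for $\xi+\eta$: either $[\xi,\eta]$ is the trivial Killing vector (and lies in $\Kill_\H$ by definition), or it is non-trivial, in which case its zero set has codimension at least two in $M$ and therefore meets $\Hhat$ in a set with empty interior, so $\{\xi(F)\neq 0\}$ is open and dense with closure $\overline{\H}$. With that substitution (and the analogous explicit dichotomy for $\xi+\eta=0$, which you omit), your proof is complete and coincides with the paper's.
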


\begin{proof}
First we prove that $\Kill_\H$ is a vector sub-space of $\Kill_M$. 
Let $\xi, \eta \in \Kill_\H$.  We want to show that $a_1 \xi + 
a_2\eta \in \Kill_\H$, for any
$a_1, a_2 \in \mathbb{R}$. If either $\xi$ or $\eta$ is the zero vector, the claim
is obvious. Assume both $\xi$ and $\eta$ are non-trivial. Then
there exists a hypersurface $\widehat{\H}$ 
which is a Killing horizon with respect to both $\xi$ 
and $\eta$ and $\widehat{\H}$ is a dense subset of $\H$. We know (\ref{etaxi})  that $\xi$ and
$\eta$ are proportional (and null) on $\widehat{\H}$, so the  Killing vector
 $\zeta:= a_1 \xi + a_2 \eta$
is also tangent to $\widehat{\H}$ and null there. Moreover, if it vanishes on a dense subset
of $\widehat{\H}$, then by Lemma \ref{codimension_two} in  Appendix 
\ref{app:1}, it vanishes identically, hence
belongs to $\Kill_\H$. Otherwise, there exists an open and dense
$\H_{\zeta}  \subset \widehat{\H}$ where $\zeta$ does not vanish. In other
words, $\H_{\zeta}$ is a Killing horizon of $\zeta$. Given
that $\overline{\H_{\zeta}} = \H$, we conclude 
$\zeta \in \Kill_\H$, as claimed.

It remains to prove that the commutator of any two Killing vectors $\xi,\eta\in \Kill_\H$ also belongs to $\Kill_\H$. Of course, $[\xi,\eta]\in \Kill_M$ for arbitrary $\xi,\eta\in \Kill_\H$, so we only need to show that $[\xi,\eta]$ is null and tangent to (a dense subset of) $\overline\H$.
But we know that the Killing vectors $\xi$ and $\eta$ are related by (\ref{etaxi}). Given also that they are tangent to $\H$, we can compute their commutator there
\be
[\xi,\eta]\eqHxi [\xi,F\xi]\eqHxi \xi(F) \xi \eqHxi (\kappa_\eta - F \k) \xi \label{comm1}
\ee
where in the last step we have used (\ref{PDEaa}). This finishes the proof.
\end{proof}
\begin{definition}[Lie algebra and order of a MKH]
We call $\Kill_\H$ the Lie algebra of the multiple Killing horizon $\H$.

The {\bf order} $m\geq 2$ of a MKH $\H$ is, by definition, the dimension of its Lie algebra $\Kill_\H$.
\end{definition}
We shall sometimes loosely speak of double, triple, quadruple, etcetera, MKHs for $m=2,3,4, \dots$.

We can actually say much more about $\Kill_\H$ and its order.

\begin{theorem}\label{th:algebra}
Let $\Kill_{\H}$ by the Lie algebra of a MKH $\H$ of order $m$  in a spacetime 
$(M,g)$ of arbitrary dimension at least two. Then, $\Kill_\H$ always contains an Abelian sub-algebra $\Kill_\H^{deg}$ of dimension at least $m-1$ whose elements have vanishing surface gravities, that is to say, they all have (the appropriate dense subset of) $\H$ as a {\em degenerate} Killing horizon. If this Abelian sub-algebra $\Kill_\H^{deg}$ has dimension $m-1$, the remaining independent Killing vector (say $\xi$) in $\Kill_\H\setminus \Kill_\H^{deg}$ has $\k\neq 0$ and satisfies
\be
\left[\xi,\eta \right] =-\k \eta , \hspace{1cm} \forall \eta \in \Kill_\H^{deg}.
\ee
\end{theorem}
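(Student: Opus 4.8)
The plan is to exploit the two identities already at our disposal: the proportionality relation $\eta \eqHhat F\xi$ together with its consequence $\xi(F)+F\kappa_\xi \eqHhat \kappa_\eta$ (Lemma~\ref{eq:xiF}), and the commutator formula $[\xi,\eta]\eqHxi (\kappa_\eta - F\kappa_\xi)\xi$ obtained in (\ref{comm1}). The first step is to settle the value of $\kappa_\xi$ for the various elements of $\Kill_\H$. I would argue by contradiction: if two independent Killing vectors $\xi,\eta\in\Kill_\H$ both had non-zero surface gravity, then by the previous theorem $\kappa_\xi$ and $\kappa_\eta$ are non-zero constants, and plugging the explicit solution (\ref{expaa}) for $F$ (with $\kappa_\xi\neq 0$) into $\xi(F)+F\kappa_\xi=\kappa_\eta$ forces $F$ to have genuine $\tau$-dependence through the term $-\tfrac{\kappa_\eta}{\kappa_\xi}(e^{-\kappa_\xi\tau}-1)$; one then checks that the Killing vector $\zeta := \kappa_\xi\,\eta - \kappa_\eta\,\xi$ has vanishing surface gravity (its proportionality function w.r.t.\ $\xi$ is $\kappa_\xi F-\kappa_\eta$, and one computes $\kappa_\zeta\eqHxi\xi(\kappa_\xi F-\kappa_\eta)+(\kappa_\xi F-\kappa_\eta)\kappa_\xi=\kappa_\xi\kappa_\eta-\kappa_\eta\kappa_\xi=0$). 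Hence any two-dimensional subspace of $\Kill_\H$ contains at least one degenerate direction, and more generally the degenerate elements together with $0$ form the set $\Kill_\H^{deg}$.

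The second step is to show $\Kill_\H^{deg}$ is a linear subspace of codimension at most one, and that it is Abelian. Linearity: if $\xi,\eta\in\Kill_\H^{deg}$ are independent then $\kappa_\xi=\kappa_\eta=0$, so Lemma~\ref{eq:xiF} gives $\xi(F)\eqHhat 0$; therefore $a_1\xi+a_2\eta$ has proportionality function $a_1+a_2F$ (w.r.t.\ $\xi$) with $\xi$-derivative zero, and its surface gravity is $\xi(a_1+a_2F)+(a_1+a_2F)\cdot 0=0$, so it lies in $\Kill_\H^{deg}$. Codimension at most one: if $\xi_1,\xi_2\notin\Kill_\H^{deg}$ then by Step~1 applied to the pair, some non-trivial combination $c_1\xi_1+c_2\xi_2$ is degenerate; since $\xi_1$ itself is non-degenerate we must have $c_2\neq 0$, so $\xi_2$ equals a degenerate vector plus a multiple of $\xi_1$, i.e.\ all of $\Kill_\H$ is spanned by $\Kill_\H^{deg}$ and a single non-degenerate $\xi$ — proving $\dim\Kill_\H^{deg}\in\{m-1,m\}$. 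Abelianness: for $\eta,\eta'\in\Kill_\H^{deg}$ write $\eta'\eqHhat G\eta$ with $\eta(G)\eqHhat 0$; then (\ref{comm1}) applied to the pair $(\eta,\eta')$ gives $[\eta,\eta']\eqHxi(\kappa_{\eta'}-G\kappa_\eta)\eta=0$ on a dense subset of $\H$, and since $[\eta,\eta']$ is a Killing vector vanishing on a codimension-one set, by Lemma~\ref{codimension_two} it vanishes identically.

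The third step handles the non-Abelian case $\dim\Kill_\H^{deg}=m-1$: pick $\xi\in\Kill_\H\setminus\Kill_\H^{deg}$, so $\kappa_\xi\neq 0$ (constant). For any $\eta\in\Kill_\H^{deg}$ we have $\kappa_\eta=0$, so (\ref{comm1}) reads $[\xi,\eta]\eqHxi -F\kappa_\xi\,\xi = -\kappa_\xi(F\xi)\eqHxi-\kappa_\xi\,\eta$ on a dense subset of $\H$; but $[\xi,\eta]+\kappa_\xi\eta$ is a Killing vector vanishing on a dense, hence codimension-one, subset, so it vanishes everywhere by Lemma~\ref{codimension_two}, giving $[\xi,\eta]=-\kappa_\xi\eta$ as claimed. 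One should note that this $\kappa_\xi$ is independent of the choice of $\xi$ up to the scaling of $\xi$ (consistent with the essential uniqueness of the non-zero surface gravity asserted in the abstract): replacing $\xi$ by $\xi+\eta_0$ with $\eta_0\in\Kill_\H^{deg}$ leaves $\kappa$ unchanged, while rescaling $\xi$ rescales it.

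The main obstacle I anticipate is Step~1 — establishing that two genuinely independent non-degenerate directions cannot coexist. The cleanest route is the one sketched above: feed the closed-form solution (\ref{expaa}) into the master relation and observe that the combination $\kappa_\xi\eta-\kappa_\eta\xi$ kills the surface gravity identically; this must be done carefully at the zeros of $F$, where one invokes (as in the proof of Lemma~\ref{eq:xiF}) that the relevant identities hold by continuity because the fixed-point set of $\eta-c\,\xi$ has codimension at least two (Lemma~\ref{codimension_two} in Appendix~\ref{app:1}). Everything else is bookkeeping with (\ref{PDEaa}) and (\ref{comm1}).
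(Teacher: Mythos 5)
Your proposal is correct and follows essentially the same route as the paper: both rest on the linearity of the surface gravity in the Killing vector (a direct consequence of Lemma~\ref{eq:xiF}/(\ref{PDEaa})), which yields the codimension-$\leq 1$ degenerate subspace, and then on the commutator formula (\ref{comm1}) together with the codimension-two Lemma~\ref{codimension_two} to get Abelianness and $[\xi,\eta]=-\k\eta$. The only cosmetic difference is that you organize the linearity pairwise (via the combination $\kappa_\xi\eta-\kappa_\eta\xi$) while the paper computes $\nabla_\zeta\zeta$ for a general linear combination at once; your appeal to the explicit solution (\ref{expaa}) is unnecessary, since your own direct computation $\kappa_\zeta=\kappa_\xi\kappa_\eta-\kappa_\eta\kappa_\xi=0$ already does the job.
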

\begin{proof}
Let $\{\eta_i\}$ be a basis of $\Kill_\H$, and let $\zeta \in \Kill_\H$ be non-trivial, otherwise arbitrary. Then
$$
\zeta = b^i \eta_i
$$
where $b^i\in \mathbb{R}$ are constants. 
Fix a non-zero element
$\xi \in \Kill_H$ and let $\H_{\xi} \subset \H$ be its corresponding
Killing horizon.  Expression (\ref{etaxi}) holds for each $\eta_i$ with 
corresponding functions $F_i$. 
From the definition of surface gravity (\ref{SurGrav})
the acceleration of $\zeta$ on $\H_{\xi}$ is
$$
\nabla_\zeta \zeta =b^i b^j \nabla_{\eta_i}  \eta_j  \eqHxi b^i b^j F_i \nabla_\xi (F_j \xi)\eqHxi (b^i F_i) b^j (F_j \k + \xi(F_j))\xi \eqHxi (b^j \kappa_{\eta_j} ) b^i \eta_i \eqHxi  (b^j \kappa_{\eta_j}) \zeta 
$$
where in the penultimate step we have used the PDE (\ref{PDEaa}) for the functions $F_j$. This proves that the surface gravity of $\zeta$ on $\H$ is 
$$
\kappa_\zeta = b^j \kappa_{\eta_j} .
$$
It follows that every $\zeta\in \Kill_\H$ with 
\be
b^j  \kappa_{\eta_j} =0 \label{deg}
\ee
has a vanishing surface gravity. There are at least $m-1$ linearly independent such degenerate Killing vectors, as follows from the following elementary reasoning: the relation (\ref{deg}) can be seen as the scalar product of the constant vectors $(b^j)$ and $(\kappa_{\eta_j})$ on an $m$-dimensional vector space, so that given $(\kappa_{\eta_j})$ as data, there exist $m-1$ linearly independent solutions for $(b^j)$ ---if at least one of the $\kappa_{\eta_j}$ does not vanish. If all the surfaces gravities $\kappa_{\eta_j}$ vanish then every $\zeta \in \Kill_\H$ has vanishing $\kappa_\zeta$ too. \footnote{To avoid cumbersome notation
we define the surface gravity of the zero vector to be zero.}

To end the proof, we use (\ref{comm1}). For, if $\xi$ and $ \eta$ both have vanishing surface gravity, then (\ref{comm1}) informs us that $[\xi,\eta] \eqHxi 0$ and therefore the Killing vector $[\xi,\eta]$ must vanish everywhere. This proves that $\Kill_\H^{deg}$ is Abelian. Similarly, if only $\eta$ has $\kappa_\eta =0$, then (\ref{comm1}) implies that $[\xi,\eta] \eqHxi -\k \eta$, and thus the Killing vector $[\xi,\eta] +\k \eta$ must vanish everywhere, finishing the proof.
\end{proof}
\begin{remark}[Notation]\label{notation}
In summary we have proven that, for multiple Killing horizons of order $m$, there is always a basis of $\Kill_\H$ with $m-1$ degenerate Killings vectors all of them commuting. Therefore, from now on we will use the following useful notation: $\{\eta_i\}$, $i=1,\dots, m$, will always denote a basis of $\Kill_\H$ with $\{\eta_2,\dots,\eta_m\}$ a basis of $\Kill_\H^{deg}$, that is to say,
$$
\kappa_{\eta_2} = \dots = \kappa_{\eta_m} =0.
$$
Then, we will also use the name $\xi =\eta_1$, and $\k$ is arbitrary (it may vanish or not). 
\end{remark}
With this choice of basis we have found all the structure constants of $\Kill_\H$: 
$$
C^i_{jk}=0 , \quad \quad C^j_{1k} =-\k \delta^j_k, \quad \forall k\neq 1 .
$$

\begin{definition}[Fully degenerate MKH]
A multiple Killing horizon $\H$ is said to be {\bf fully degenerate} if $\Kill_\H = \Kill_\H^{deg}$, that is to say, if its Lie algebra is Abelian, and all surface gravities vanish.
\end{definition}
Observe that non-fully degenerate MKHs possess a {\em unique} non-zero surface gravity. To fix the value of this surface gravity requires the use of some normalization for the Killing vector $\xi$, be it at infinity or in some other appropriate place. This has some physical implications, as one cannot attach two different non-zero surface gravities to a given MKH, despite the fact of being a Killing horizon for multiple Killing vectors.

\begin{corollary}\label{coro:dim}
The maximum possible dimension of $\Kill_\H^{deg}$ is $n=$ {\em dim}$(M)-1$. Therefore, the maximum possible order of a MKH $\H$ is
\begin{enumerate}
\item $m=n$ for fully degenerate $\H$,
\item $m=n+1$ for non-fully degenerate $\H$.
\end{enumerate}
\end{corollary}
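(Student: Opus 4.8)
The plan is to bound the dimension of $\Kill_\H^{deg}$ first, and then read off the two cases from Theorem~\ref{th:algebra}. The key geometric input is that all Killing vectors in $\Kill_\H^{deg}$ are, on the dense subset $\widehat\H$ where they are simultaneously Killing horizons, proportional to a single null vector $\xi$: by (\ref{etaxi}) each $\eta\in\Kill_\H^{deg}$ satisfies $\eta\eqHxi F_\eta\,\xi$ for a scalar $F_\eta$. This means that the evaluation map $\eta\mapsto F_\eta(p)$ at a fixed point $p\in\widehat\H$ is linear from $\Kill_\H^{deg}$ to $\mathbb{R}$. If its kernel were non-trivial, there would be a non-zero $\eta\in\Kill_\H^{deg}$ with $F_\eta(p)=0$, i.e.\ a Killing vector vanishing at $p$; that is allowed pointwise, so a single-point evaluation is not enough.

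The sharper statement I would use is that the map $\eta\mapsto(\,F_\eta(p),\, dF_\eta|_p\,)\in\mathbb{R}\oplus T_p^*\widehat\H$ — or, more invariantly, $\eta\mapsto(\eta|_p,\nabla\eta|_p)$ into $T_pM\oplus\Lambda^2 T_p^*M$ — is injective on all of $\Kill_M$, by the standard fact that a Killing vector is determined by its value and its covariant derivative at one point. Restricting to $\Kill_\H^{deg}$: at $p\in\widehat\H$ we have $\eta|_p=F_\eta(p)\,\xi|_p$ and, differentiating tangentially and using that $\kappa_\xi\eqHhat 0$ on the degenerate horizon together with (\ref{varphi}), the covariant derivative $\nabla\eta|_p$ along directions tangent to $\widehat\H$ is likewise determined by $F_\eta(p)$ and by $X(F_\eta)|_p$ for $X\in\mathfrak{X}(\widehat\H)$. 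Since $\widehat\H$ has codimension one, the data $(F_\eta(p), dF_\eta|_{T_p\widehat\H})$ live in a space of dimension $1+n-1=n$; I expect that a short argument, using that $\xi$ is null and that the horizon second fundamental form relative to $\xif$ vanishes, shows this data-tuple determines $\eta$ completely (the missing transverse derivative is recovered from the Killing equations and the horizon geometry). Hence $\dim\Kill_\H^{deg}\le n$.

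With the bound $\dim\Kill_\H^{deg}\le n$ in hand, the corollary follows immediately. In the fully degenerate case $\Kill_\H=\Kill_\H^{deg}$, so $m=\dim\Kill_\H\le n$; an explicit example attaining $m=n$ (to be produced in Section~\ref{sec:Examples}) shows the bound is sharp. In the non-fully degenerate case, Theorem~\ref{th:algebra} tells us that $\Kill_\H^{deg}$ has dimension exactly $m-1$ and the remaining generator $\xi$ (with $\kappa_\xi\neq0$) is not in $\Kill_\H^{deg}$, so $m-1\le n$, i.e.\ $m\le n+1$; again an example realizing $m=n+1$ is needed, and indeed Section~\ref{sec:Examples} obtains one from a fully degenerate MKH of order $n$ by taking its near horizon geometry, which adjoins the extra non-degenerate Killing vector.

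The main obstacle is the injectivity claim in the second paragraph: one must check carefully that, on a \emph{degenerate} null hypersurface, the pair consisting of the proportionality function $F_\eta$ and its tangential differential at one point really does pin down the full $2$-jet $(\eta,\nabla\eta)$ of the Killing vector there, so that the codimension-one count $1+(n-1)=n$ is not an undercount. This hinges on $\kappa_\xi=0$ (so that the ``$\tau$-direction'' derivative of $F_\eta$ is not independent data — compare the integration of (\ref{PDEaa}) leading to (\ref{expaa}) with $\kappa_\xi=0$ giving $F=f+\kappa_\eta\tau$ and here $\kappa_\eta=0$ too, so $\xi(F_\eta)=0$) and on the vanishing of the null second fundamental form (\ref{varphi}); I would verify these feed exactly the components of $\nabla\eta|_p$ transverse to $\widehat\H$. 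The vector-space and Lie-algebra bookkeeping, and invoking the forthcoming examples for sharpness, are then routine.
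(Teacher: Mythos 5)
Your proposal is correct in outline, but it takes a genuinely different route from the paper's own proof of this corollary. The paper argues at the group level: an Abelian algebra $\Kill_\H^{deg}$ of dimension $n+1$ would force the metric to be homogeneous and locally flat near $\H$, with the algebra acting by translations whose span is all of $T_pM$ at each point --- incompatible with the fact that every element of $\Kill_\H^{deg}$ is proportional to a single null direction along $\H$. Your jet-counting argument is instead the infinitesimal version of what the paper does only later, in Section~\ref{sec:master}, where the master equation (\ref{master2}) is observed to be a second-order linear system in normal form on an $(n-1)$-dimensional cut, so that each solution is fixed by $(f(p),D_Af(p))$ and at most $(n-1)+1=n$ are independent --- exactly your count. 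Your version buys two things: it avoids the somewhat delicate homogeneity/local-flatness step (which tacitly needs the Abelian orbits to be open), and it treats both bounds uniformly, since dropping the constraint $\xi(F_\eta)=0$ makes the same map land in an $(n+1)$-dimensional space and gives $m\le n+1$ directly.

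The step you flag as the main obstacle does close, and more easily than you suggest; no information about the transverse derivative needs to be ``recovered from the horizon geometry'' beyond (\ref{varphi}). Differentiating (\ref{etaxi}) along any $X$ tangent to $\Hhat$ and using (\ref{varphi}) gives $\nabla_X\eta\eqHhat\bigl(X(F_\eta)+F_\eta\,\bm{\varphi}(X)\bigr)\xi$, so the tangential part of $\nabla\eta|_p$ is determined by $(F_\eta(p),dF_\eta|_{T_p\Hhat})$. The transverse part then comes for free from the Killing equation: for a transversal $\ell$ and tangential $X$ one has $g(\nabla_\ell\eta,X)=-g(\nabla_X\eta,\ell)$ and $g(\nabla_\ell\eta,\ell)=0$ by skew-symmetry of $\nabla_\mu\eta_\nu$, and since $T_pM=T_p\Hhat+\mathrm{span}(\ell)$ with $g$ non-degenerate this pins down $\nabla_\ell\eta|_p$. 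Hence $(F_\eta(p),dF_\eta|_{T_p\Hhat})\mapsto(\eta|_p,\nabla\eta|_p)$ is well defined and the composite map is injective by the standard 1-jet rigidity of Killing vectors. As you correctly note, degeneracy enters only through $\xi(F_\eta)=0$ (from (\ref{PDEaa}) with $\k=\kappa_\eta=0$), which is what cuts the target to dimension $1+(n-1)=n$. The remaining bookkeeping --- Theorem~\ref{th:algebra} for the two cases and the examples of Sections~\ref{sec:Examples} and~\ref{App:MaxSym} for sharpness --- matches what the paper does.
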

\begin{proof}
As $\Kill_\H^{deg}$ is Abelian, its dimension can be at most $n+1$. But if it were $n+1$ the spacetime would be homogeneous, and actually locally flat (this follows from the fact that the Riemann tensor on the orbits of a group of motions can be expressed in terms of the structure constants of its Lie algebra, and it vanishes for Abelian groups \cite{Exact}), in a neighbourhood around $\H$, and this is not possible, as the Abelian sub-algebra is generated by
translations, and hence its span is $n+1$ dimensional at every point.
Thus, dim$(\Kill_\H^{deg})$ is at most $n$.
\end{proof}
The bound $m \leq n+1$ is sharp. Examples where
the maximal value $m=n+1$ is attained are the maximally symmetric spacetimes $(M,g)$, see section \ref{sec:Examples} for explicit examples
and Section \ref{App:MaxSym}, where we present the full classification of MKHs in maximally symmetric spacetimes.

Using the notation fixed in Remark \ref{notation}, the expressions (\ref{expaa}) for the elements $\eta_i\in \Kill_\H^{deg}$ then reduce simply to
\be
F_i =f_i e^{-\k \tau} , \hspace{1cm} \xi(f_i) =0 \hspace{1cm} \forall i\in\{2,\dots , m\} \label{As}
\ee
valid for both cases with $\k$ zero or not. Then we have the relations
\be
\eta_i \eqH f_i e^{-\k \tau} \xi , \hspace{1cm} \xi(f_i) =0 \hspace{1cm} \forall i\in\{2,\dots , m\} .\label{etaxi1}
\ee
The freedom (\ref{ufreedom}) translates to a simple redefinition $f_i\rightarrow f_i e^{-\k \tau_0}$ which is consistent given that $\xi (\tau_0)=0$. Note that the zeros of the functions $f_i$ are fixed points of the corresponding Killing vectors. These fixed points of each $\eta_i$ are not part of the Killing horizons $\H_{\eta_i}$, but they do belong to their closure and thus to $\overline\H$. 

Given that $\kappa_{\eta_i}=0$ for all $i\in\{2,\dots,m\}$, the vector fields $\eta_i$ have zero acceleration on 
their corresponding horizons $\H_{\eta_i} \subset \H$,
and thus their integral curves are affinely parametrized null geodesics generating $\H_{\eta_i}$. Then, the relations (\ref{etaxi1}) imply that an affine parameter $\lambda_i$ along the geodesics tangent to $\eta_i$ in
$\H_{\eta_i} \cap \H_{\xi}$
 are given, for the non-fully degenerate case $\k\neq 0$, by
$$
\lambda_i = \frac{1}{\k f_i} e^{\k \tau} , \hspace{1cm} \forall i\in\{2,\dots,m\}
$$
and therefore, the integral curves of $\xi$ in
$\H_{\xi} \cap \H_{\eta_i}$ are incomplete geodesics (the range of the affine parameter $\lambda_i$ cannot be the whole real line).

Using the results in \cite{Boyer}, see also \cite{FN,RW},
and as $\k\neq 0$ is constant, we deduce that these incomplete geodesics do not reach any curvature singularity, and therefore they are only a segment of a larger geodesic in the given spacetime, or the latter is extendable. Actually, the integral curves of $\eta_i$ are longer geodesics if the given spacetime contains them ---otherwise, they could be extended in any proper extension of the spacetime--- and along them $\xi$ vanishes on a co-dimension two null
submanifold $S\subset \overline \H$. Therefore, 
 non-fully degenerate multiple Killing horizons can be seen as a branch of a bifurcate Killing horizon with $\xi$ as the bifurcate Killing vector field and $S:= \{\xi =0\} \cap \overline\H$ as the bifurcation surface. 

\section{Examples}\label{sec:Examples}
In this section we present explicit examples of MKHs with the aim of illustrating the previous results and to gain some insight on their structure. We will also show that all possible types of MKHs exist, fully degenerate or not, and of any possible admissible order.

\subsection{Flat spacetime}\label{Minkowski}
In $(n+1)$-dimensional Minkowski spacetime $(\mathbb{R}^{n+1}, g^{\flat})$, where $g^\flat$ is the flat metric (with vanishing curvature tensor), any null hyperplane is a MKH of maximal order $m=n+1$ (and therefore, non-fully degenerate). To check this, choose a global Cartesian coordinate system $\{t,x^a\}$ such that
\be
g^\flat = -dt^2 + \sum_{a=1}^n (dx^a)^2, \label{gflat}
\ee
and select, for instance, the null hyperplane $\H := \{t=x^1\}$. Let $A\in \{2,\dots,n\}$ and consider the following collection of $n+1$ linearly independent Killing vectors of $(\mathbb{R}^{n+1}, g^{\flat})$:
\begin{align}
\eta_1 &= \xi = x^1\partial_t + t \partial_{x^1}, \label{eta1}\\
\eta_2 &= \partial_t +\partial_{x^1}, \label{eta2}\\
\eta_{A+1} &=x^A\partial_t+t\partial_{x^A} + x^A\partial_{x^1} -x^1\partial_{x^A}  =x^A\left( \partial_t +\partial_{x^1}\right) +(t-x^1) \partial_{x^A}  \label{etaA}.
\end{align}
These are all obviously null, and proportional to $\eta_2$, at $\H$. $\eta_2$ is non-zero everywhere, and thus the entire $\H$ is a Killing horizon for $\eta_2$. On $\H$ we also have $\eta_{A+1}\stackrel{\H}{=} x^A\left( \partial_t +\partial_{x^1}\right)$,
 so that each $\eta_{A+1}$ vanishes on the co-dimension two surface $\H\cap \{x^A=0\}$. Thus the Killing horizon for each $\eta_{A+1}$ is given by $\H_{\eta_{A+1}}=\H\setminus \{x^A=0\}$, has two connected components given by $x^A>0$ and $x^A<0$, but also $\overline{\H}_{\eta_{A+1}}=\H =\overline\H$.
Concerning $\eta_1=\xi$, we have 
$\xi\stackrel{\H}{=} t\left( \partial_t +\partial_{x^1}\right)$ and thus $S:=\H\cap \{t=0\} =\{t=x=0\}$ is a co-dimension two spacelike surface of fixed points for $\xi$. The Killing horizon $\H_\xi$ of $\xi$ has thus two connected components defined by $t>0$, say $\H_1^+$, and by $t<0$, say $\H_1^-$, but again $\overline{\H}_\xi =\H$. Therefore, $\H$ is a multiple Killing horizon of maximal order $m=n+1$. 

All the Killing vectors shown above except $\eta_1=\xi$ are affinely parametrized geodesic vector fields on $\H$, and thus their surface gravities vanish. Also, $\nabla_\xi \xi =\xi$ so that $\k =1$.

Observe that $\H_\xi$, $\xi$ having a set of fixed points at $S$, is a branch of a bifurcate Killing horizon, the second branch being given by the hyperplane $\{t+x^1=0\}$ which provides the future and past connected components $\H_2^+$ and $\H_2^-$ for $t>0$ and $t<0$, respectively. This hyperplane is itself a MKH of maximal order. 

The full classification of MKHs in flat spacetime, as well as (anti)-de Sitter spacetimes, is presented in Section \ref{App:MaxSym}.

\subsection{A double Killing horizon with compact sections}
Consider the two-dimensional de Sitter space $dS_2$ of constant curvature
$\varkappa^2$ and the two-dimensional sphere $\mathbb{S}^2$ with the 
round metric of radius $1/\varkappa$. 
The Nariai spacetime is the product manifold $dS_2 \times \mathbb{S}^2$
endowed with the  product metric. This spacetime is a solution of the $\Lambda$-vacuum Einstein equations with cosmological constant
$\Lambda = \varkappa^2$.
It is straightforward to check that
the Killing algebra is six dimensional with a basis consisting 
on  three
linearly independent Killings vectors of $dS_2$ and
three independent  Killing vectors on the sphere. In standard global 
coordinates of $dS_2$ the Nariai metric takes the form
\begin{align*}
g_N= - dt^2 + \cosh^2(\varkappa t) dx^2 + \frac{1}{\varkappa^2} \gamma_{\mathbb{S}^2}
\end{align*}
where $\gamma_{\mathbb{S}^2}$ is the standard unit metric on the sphere.
The most general Killing vector of this metric is given by
\begin{align*}
\zeta = \left ( A \cos ( \varkappa x) + B \sin (\varkappa x) \right ) \partial_t
+ \left[ \beta + ( B \cos (\varkappa x) - A \sin (\varkappa x) ) \tanh (\varkappa t) 
\right ] \partial_x + \widehat{\zeta}
\end{align*}
where $\widehat{\zeta}$ is a Killing vector on
$(\mathbb{S}^2,\gamma_{\mathbb{S}^2})$.
We consider the null hypersurface $\H$
defined as the connected component of
$\tanh (\varkappa t) - \sin(\varkappa x) =0$ containing $t=x=0$. 
Observe that the range of $x$ is given by
\begin{align}
x \in \left ( - \frac{\pi }{2\varkappa}, \frac{\pi }{2\varkappa} \right ).
\label{range}
\end{align} 
By construction $\H$ contains the sphere at $\{ t=x=0\}$.
Topologically $\H \simeq\mathbb{R} \times \mathbb{S}^2$. 
The null generator of $\H$ is
\begin{align*}
k= \partial_x + \frac{1}{\cos (\varkappa x)} \partial_t.
\end{align*}
It is immediate to check that the most general Killing vector
that is proportional to $k$ on $\H$ is given by
\begin{align}
\zeta = \left ( A \cos (\varkappa x) + B \sin (\varkappa x) \right ) \partial_t
+ \left[ A + \left ( B \cos (\varkappa x) - A \sin (\varkappa x) \right )
\tanh (\varkappa t) \right ] \partial_x.
\label{generator}
\end{align}
On the sphere $S_0 := \{ t=x=0 \}$, the Killing vector (\ref{generator}) evaluates to
\begin{align*}
\zeta|_{S_0} = A \left ( \partial_t + \partial_x \right ).
\end{align*}
Thus 
\begin{align*}
\xi := \sin (\varkappa x) \partial_t + \cos(\varkappa x) \tanh(\varkappa t) \partial_x
\end{align*}
is a Killing vector for which $\H \setminus S_0$ is
a non-degenerate Killing horizon with bifurcation surface at $S_0$.
The linearly independent Killing vector
\begin{align*}
\eta := \cos (\varkappa x) \partial_t + \left[ 1 - \sin (\varkappa x)
\tanh(\varkappa t) \right] \partial_x
\end{align*}
vanishes nowhere in the spacetime, in particular on $\H$. 
The corresponding surface
gravity vanishes. This follows immediately from the fact that 
the square norm of $\eta$ can be written as
\begin{align*}
g(\eta,\eta) = \cosh^2(\varkappa t) \left[ \tanh(\varkappa t) - \sin(\varkappa x) \right]^2
\end{align*}
which has a zero of order two at $\H$.
Thus $\H$ is a degenerate Killing horizon of $\eta$, and given that the closure of $\H\setminus S_0$ is $\H$, $\H$ is a MKH of order two ---a double Killing horizon. 

A direct calculation gives $[\xi,\eta] =-\varkappa \eta$ and thus, according to theorem \ref{th:algebra}, the surface gravity of $\H$ is $\k =-\varkappa=-\sqrt{\Lambda}$.

\subsection{Fully degenerate MKHs of any order}\label{Fully}
We want to ascertain if fully degenerate MKHs exist, and which orders are feasible for them. In this section we provide explicit examples for fully degenerate MKHs of any admissible order $m$.

To that end, we use the following construction. In subsection \ref{Minkowski} we found MKHs of maximal order $m=n+1$. The idea is then to try to retain (part or all of) the Abelian subgroup $\Kill_\H^{deg}$ which is generated by $\{\eta_i\}$ with $i=2,\dots, n+1$ in (\ref{eta2})-(\ref{etaA}), but removing the non-degenerate Killing vector (\ref{eta1}) that generates the bifurcate Killing horizon. To accomplish this, we perform a conformal transformation of the flat metric (\ref{gflat}), that is
\be
g =\Omega g^\flat, \label{confg}
\ee
where $\Omega :\mathbb{R}^{n+1} \rightarrow \mathbb{R}$ is a smooth non-vanishing function. To keep $\eta_2$ as a Killing vector of $g$ we require
\be
\pounds_{\eta_2} g = \pounds_{\eta_2} (\Omega g^\flat) = g^\flat \eta_2(\Omega) = g^\flat \left(\partial_t \Omega +\partial_{x^1} \Omega \right)= 0 \hspace{3mm} \Longrightarrow \hspace{2mm} \Omega (t-x^1,x^A).
\label{t-x}
\ee
Similarly, to keep any of the $\eta_{A+1}$
as Killing vectors of the metric (\ref{confg}) we demand
$$
\pounds_{\eta_{A+1}} g = \pounds_{\eta_{A+1}} (\Omega g^\flat) = g^\flat \eta_{A+1}(\Omega) = g^\flat \left[x^A\left(\partial_t \Omega +\partial_{x^1} \Omega \right)+(t-x^1) \partial_{x^A}\Omega\right]= 0
$$
and using here (\ref{t-x}) 
\be
\partial_{x^A} \Omega =0. \label{xA}
\ee
Hence, by allowing $\Omega$ in (\ref{t-x}) to be independent of a number $q\leq n-1$ of the variables $\{x^A\}$ we have that the corresponding $q$ vector fields $\eta_{A+1}$ 
are Killing vectors of the new metric $g$. As null hypersurfaces and null vectors are preserved by conformal transformations (\ref{confg}), we know that all these ``surviving'' Killing vectors together with $\eta_2$ are tangent to and null on $\H :=\{t=x^1\}$. On the other hand, the remaining $\eta_1$ in (\ref{eta1}) is not a Killing vector in general, because using (\ref{t-x}) 
$$
\pounds_{\eta_1} g = \pounds_{\eta_1} (\Omega g^\flat) = g^\flat \eta_1(\Omega) = g^\flat \left(x^1\partial_t \Omega +t\partial_{x^1} \Omega \right)= (x^1-t)\,  \partial_t\Omega\,  g^\flat  \neq 0
$$
which is non-vanishing in general ---as long as $\Omega$ has non-trivial dependance on $t-x^1$.

We still need to check that the kept Killing vectors have vanishing surface gravity on $\H$, but this must be the case due to theorem \ref{th:algebra} because they all commute. To check it explicitly though, simply notice that for every $\eta_i$
$$
g(\eta_i,\eta_i) =\Omega g^\flat (\eta_i,\eta_i)  \hspace{5mm} \Longrightarrow \hspace{3mm} \mbox{grad} (g(\eta_i,\eta_i)) =\mbox{grad}\Omega \, g^\flat (\eta_i,\eta_i) +\Omega\,  \mbox{grad}(g^\flat (\eta_i,\eta_i)) \stackrel{t=x^1}{=} 0.
$$

The case of maximal order, that is with $m=n$ (so $q=n-1$), has a conformal factor $\Omega (t-x^1)$, and these metrics
\be
g =\Omega (t-x^1) g^\flat, \label{PW}
\ee
describe conformally flat plane waves, known to be solutions of the Einstein-Maxwell equations \cite{Exact} for a null electromagnetic field $F=(dt-dx^1)\wedge v_A  dx^A$ where $v_A$ are functions of $t-x^1$ ---and more generally these are solutions of the Einstein-$p$-form equations for a null $p$-form, arising in higher dimensional theories such as supergravity. 
By using null coordinates
$$
U= t-x^1 , \hspace{1cm} V=t+x^1
$$
the metric can be written in the forms
\be
g=\Omega(U) \left(-dUdV  +\sum_{A=2}^n (dx^A)^2 \right)= -du dV +\Omega(u) \sum_{A=2}^n (dx^A)^2 \label{PW1}
\ee
where $\Omega(U) dU :=d u$. The last expression is the canonical Einstein-Rosen form of the (conformally flat) plane wave. Every null hypersurface $u =$const.\ is a fully degenerate MKH of maximal order $m=n$ in these spacetimes.

As is well known, plane waves such as (\ref{PW1}) can be cast (and actually extended through removable singularities arising at the zeros of $\Omega(u)$) in Kerr-Schild form, where the spacetime is geodesically complete. The extension is given by the new set of coordinates $\{u,v,z^A\}$ defined by (an overdot means derivative with respect to $u$)
$$
V= -2v -\frac{\dot\Omega}{2\Omega}\sum_{A=2}^n (z^A)^2, \hspace{1cm} x^A =\frac{1}{\Omega^{1/2}} z^A, 
$$
so that (\ref{PW1}) becomes
\be
g = 2dudv +\Psi(u) \delta_{AB}z^A z^B du^2  +\sum_{A=2}^n (dz^A)^2 
\ee
with
$$
\Psi (u) := \frac{\ddot\Omega}{2\Omega} - \frac{\dot\Omega^2}{4\Omega^2} .
$$

\subsection{Ricci-flat metrics with fully degenerate MKHs}
Now that we know that fully degenerate MKHs exist and can have any order, we wish to present an example of a spacetime which contains a fully degenerate MKH and solves the vacuum Einstein field equations, that is, its Ricci tensor vanishes. The previous subsections showed us that perhaps plane waves are good candidates for this purpose. Therefore, let us consider the most general {\em vacuum} (i.e. Ricci flat)  plane wave, given by
\be
g= 2dudv + M_{AB}(u) z^A z^Bdu^2+\sum_{A=2}^n (dz^A)^2 , \hspace{1cm} \delta^{AB} M_{AB} =0
\label{PW3}
\ee
where $M_{AB}(u)$ is a trace-free symmetric matrix of functions of $u$. 
%
To exclude the Minkowski case we assume that rank$(M_{AB})\geq 1$.

The most general Killing vector field $\zeta$ for (\ref{PW3}) reads
\be
\zeta= (a_0+a_1 u) \partial_u +(b-a_1 v -\dot{c}_A z^A)\partial_v +(c_A(u)+\epsilon_{AB} z^B)\partial_{z^A}
\ee
where $a_0,a_1,b$ and $\epsilon_{AB}=-\epsilon_{BA}$ are real constants, and $c_A(u)$ are functions satisfying 
\bea
(a_0 +a_1 u) \dot{M}_{AB} +2a_1 M_{AB} &=& 
\epsilon_{AC}M^{C}{}_B + \epsilon_{BC} M^{C}{}_A
, \label{cond_a1}\\
\ddot{c}_A &=& M_{AB}c^B .\label{ODE_b}
\eea
where $A,B$ indices are raised with $\delta^{AB}$.
Hence, the spacetime has at least $2(n-1)+1=2n-1$ Killing vectors which are determined by the parameters $c_A^0:=c_A(0)$ and $c_A^1:=\dot {c}(0)$, which
are the initial data for the 2nd-order ODEs (\ref{ODE_b}), plus $b$.
There might be additional Killing vectors
depending on whether or not $M_{AB}(u)$ is such that (\ref{cond_a1}) admits a non-trivial solution for the constants $(a_0,a_1,\epsilon_{AB})$.

The candidates to a MKH are the hypersurfaces $u=$const. Without loss of generality, let us consider the null hypersurface $\mathcal{H}:=\{u=0\}$, and we are interested in those Killing vectors for which this is a horizon.
This will be the case if and only if $a_0=0$, $\epsilon_{AB}=0$ and $c_A^0=0$.
In that case, $a_1=0$ is also required as otherwise $M_{AB} \propto u^{-2}$, which would be singular at $\mathcal{H}$.
Thus the most general Killing vector in $\Kill_\H$ is given by
$$
\eta = (b-\dot{c}_A z^A)\partial_v +c_A(u)\partial_{z^A}
$$
where all the $c_A$ vanish at $u=0$. In particular
$$
\eta |_\H = (b-c^1_A z^A)\partial_v .
$$
Notice that $g(\eta,\eta)=c_A(u) c^A(u)$ whose gradient vanishes at $u=0$, and thus all the surface gravities are zero, so that $\Kill_\H =\Kill_\H^{deg}$ 
%
and $\H$ is a fully degenerate MKH of order $n$.

\subsection{Near Horizon Geometry: double (or higher) Killing horizons}\label{NH}

Observe that in the previous example with conformally flat metric (\ref{confg}) we could have also kept the non-degenerate Killing $\eta_1$ in (\ref{eta1}) had we also requested that $\Omega_{,t}=0$, and in that way we would obtain MKHs of any order and non-fully degenerate. This is actually a completely general property of fully degenerate Killing horizons of any order $m$, in the sense that they can be promoted to non-fully degenerate MKHs of order at least $m+1$. This general construction will be discussed in the next subsection, as it involves the so-called near horizon geometry which we analyze next.

The metric of any {\it Near Horizon geometry} \cite{KL}, which can be thought of as the ``focused'' local geometry near any degenerate Killing horizon, actually possesses a non-fully degenerate MKH. This can be seen from the explicit expression of the metric in local coordinates $\{u,v,x^A\}$
\be
g_{NH} = 2dv\left(du + 2 u\,  s_A dx^A +\frac{1}{2} u^2 H dv\right) +\gamma_{AB} dx^A dx^B \label{NHG}
\ee
where $\gamma_{AB}$ is the metric on any cut $S\subset \H$, $s_A$ is a one-form on $S$, and $H$ a smooth function on $S$, while the degenerate Killing horizon is given by $\H=\H_{\eta} := \{u=0\}$, where the Killing vector is $\eta=\partial_v$. Observe that $g(\eta,\eta)=u^2 H$ so that $\eta$ is null on $\H_{\eta}$ and obviously $\kappa_{\eta}=0$. 

As noted in \cite{KL}, see also \cite{PLJ,LSW}, the metric (\ref{NHG}) always has another Killing vector given by
$$
\xi = v\partial_v -u\partial_u.
$$
Obviously this Killing vector is null on $\H_{\eta}$ 
and tangent to it, except at $S:=\{u=v=0\}$ where it vanishes. Thus, $\H_{\xi} = \H_{\eta}\setminus \{v=0\}$ is also a Killing horizon for $\xi$, with two connected components and $\overline{\H}_{\xi}=\H_{\eta}$, hence $\{u=0\}$ is (at least) a double Killing horizon. A direct calculation provides 
$$
[\xi,\eta] = -\eta
$$
so that, from theorem \ref{th:algebra} follows that $\{u=0\}$ is non-fully degenerate and that $\k =1$, while $\tau =\ln |v|$. The Killing $\xi$ generates a bifurcate Killing horizon based on $S$ with branches given by $\{u=0\}$ and $\{v=0\}$. Actually, a bifurcate Killing horizon is defined by any cut $\{u=0,v=v_0\}$ on $\H_{\eta}$, with bifurcation Killing vector $\xi -v_0\eta $.

\subsection{From fully to non-fully degenerate MKHs}
The results of the previous subsection provide a method to generate a non-fully degenerate MKH starting from a fully degenerate one. Moreover, combining this method with the results of subsection \ref{Fully} we can construct non-fully degenerate MKHs of any order explicitly. 

The idea consists in taking a fully degenerate MKH of order $m$, and then computing its near horizon geometry (\ref{NHG}). This always provides a non-fully degenerate MKH as seen in subsection \ref{NH}, but to ensure that the construction works we need to check that none of the multiple Killing vectors of the original MKH is lost in the process. And this follows from a classical and very elegant argument by Geroch \cite{G} concerning hereditary properties when taking limits of one-parameter families of spacetimes. Geroch proved that, given a family $(M_\lambda, g_\lambda)_{\lambda>0}$ of spacetimes depending on a continuous parameter $\lambda$ and all of them having $q$ linearly independent Killing vectors, then the limit spacetime defined by taking the limit when $\lambda \rightarrow 0$ (when this limit exists) also has, at least, $q$ linearly independent Killing vectors. This result applies to our construction because the near horizon geometry (\ref{NHG}) is actually defined as follows: nearby a {\em degenerate} Killing horizon, there exist local Gaussian null coordinates such that the metric takes the form
$$
g = 2dv\left(du + 2 u\,  \hat s_A dx^A +\frac{1}{2} u^2 \hat H dv\right) +\hat \gamma_{AB} dx^A dx^B 
$$
where now $\hat H$, $\hat s_A$, and $\hat\gamma_{AB}$ may all depend on $u$ too: they are functions depending on $u$ and $x^A$. Defining the one-parameter family of metrics $\{g_\lambda\}_{\lambda>0}$ by replacing $v\rightarrow v/\lambda$ and $u\rightarrow u\lambda$ and taking the limit $\lambda \rightarrow 0$ leads to the metric (\ref{NHG}) where $H=\hat H|_{u=0}$, $s_A=\hat s_A|_{u=0}$ and $\gamma_{AB}=\hat\gamma_{AB}|_{u=0}$. 

Of course, it could still happen that one of the Killing vectors was lost in the limiting process, and ``replaced" by the new one that the near horizon limit always adds. But this is not possible in the case where the original group of motions is Abelian, as the only possibility is a contraction of the Lie algebra in the sense of \cite{IW,Saletan}, see \cite{BFP}: a higher (or equal) number of structure constants vanish after the limit. Thus, due to theorem \ref{th:algebra}, if we start with a fully degenerate MKH of order $m$, its near horizon geometry necessarily will have a MKH of order (at least) $m+1$.
This line of reasoning also proves that, for any non-fully degenerate MKH of order $m\geq 3$, its near horizon geometry has a MKH of order at least $m$. 
Summarizing, we have established the following result.
\begin{theorem}
\label{NearHor}
Let $\H$ be a multiple Killing horizon of order $m$ and
$(M_{\mbox{\tiny NHG}}, g_{\mbox{\tiny{NHG}}})$ 
be the near-horizon 
geometry of a degenerate Killing vector $\eta$ of $\H$. Then
\begin{itemize}
\item[(i)] If $\H$ is fully degenerate, 
$(M_{\mbox{\tiny NHG}}, g_{\mbox{\tiny{NHG}}})$ admits a multiple Killing horizon
of order at least $m+1$.
\item[(ii)] If $\H$ is non-fully degenerate and $m \geq 3$, then
$(M_{\mbox{\tiny NHG}}, g_{\mbox{\tiny{NHG}}})$  has a multiple Killing horizon
of order at least $m$.
\end{itemize}
\end{theorem}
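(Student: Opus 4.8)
The plan is to combine three ingredients already in place by subsection~\ref{NH}: (a) the explicit construction of the near-horizon geometry, in which one passes to Gaussian null coordinates $\{u,v,x^A\}$ adapted to a \emph{degenerate} Killing vector $\eta$ of $\H$ (so that $\eta=\partial_v$ and $\H=\{u=0\}$), forms the one-parameter family $g_\lambda$ by the rescaling $u\mapsto u\lambda$, $v\mapsto v/\lambda$, and lets $\lambda\to 0$ to reach the metric (\ref{NHG}); (b) Geroch's hereditary-limit theorem \cite{G}: if each member of a continuous family of spacetimes admits $q$ linearly independent Killing vectors and the limit spacetime exists, then it admits at least $q$ linearly independent Killing vectors; and (c) the fact, established in subsection~\ref{NH}, that (\ref{NHG}) always carries the extra, genuinely non-degenerate Killing vector $\xi_{\mbox{\tiny NHG}}=v\partial_v-u\partial_u$, with $[\xi_{\mbox{\tiny NHG}},\partial_v]=-\partial_v$, hence surface gravity $1$ on $\{u=0\}$.

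First I would fix the basis $\{\xi=\eta_1,\eta_2,\dots,\eta_m\}$ of $\Kill_\H$ provided by Remark~\ref{notation}, with $\eta_2,\dots,\eta_m$ degenerate and mutually commuting, and carry out the near-horizon limit with respect to one of the degenerate generators, say $\eta_2$ (at least one is available in both cases). Each $g_\lambda$ is isometric to $g$, so it carries the whole algebra $\Kill_\H$; by Geroch's theorem the limit metric (\ref{NHG}) then admits $m$ linearly independent Killing vectors $\widehat\eta_1,\dots,\widehat\eta_m$. One has to check that these again have $\{u=0\}$, or a dense open subset of it, as a Killing horizon: nullity of $\widehat\eta_i$ on $\{u=0\}$, tangency to $\{u=0\}$, and non-triviality on a dense subset are all inherited, since $\{u=0\}$ is preserved along the family and these are closed conditions stable under the rescaling and the limit, while no $\widehat\eta_i$ can vanish identically or reduce to a multiple of another because Geroch's theorem keeps the $\widehat\eta_i$ linearly independent. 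Hence $\{u=0\}$ is a MKH of the near-horizon spacetime with $\dim\Kill_{\{u=0\}}\ge m$; since $\xi_{\mbox{\tiny NHG}}\in\Kill_{\{u=0\}}$ is non-degenerate, this MKH is non-fully degenerate, which already proves part~(ii).

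To sharpen the bound to $m+1$ in part~(i) I would show that $\xi_{\mbox{\tiny NHG}}$ is independent of $\widehat\eta_1,\dots,\widehat\eta_m$. When $\H$ is fully degenerate, $\Kill_\H$ is Abelian, so under $\lambda\to 0$ it can only \emph{contract} in the In\"on\"u--Wigner/Saletan sense --- at least as many structure constants vanish after the limit as before \cite{IW,Saletan,BFP} --- and since they all vanished to begin with, the $\widehat\eta_i$ still span an $m$-dimensional \emph{Abelian} subalgebra. By the surface-gravity computation carried out in the proof of Theorem~\ref{th:algebra} (the identity $\kappa_\zeta=b^j\kappa_{\eta_j}$), every element of this subalgebra has vanishing surface gravity on $\{u=0\}$; as $\xi_{\mbox{\tiny NHG}}$ has surface gravity $1$ there, it cannot lie in the linear span of the $\widehat\eta_i$, so $\dim\Kill_{\{u=0\}}\ge m+1$, and the presence of the non-degenerate $\xi_{\mbox{\tiny NHG}}$ again makes this MKH non-fully degenerate. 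In the non-fully degenerate situation of part~(ii) this extra independence need not hold --- the appropriately rescaled limit of $\eta_1=\xi$ obeys $[\,\cdot\,,\widehat\eta_2]=-\widehat\eta_2$, precisely the bracket of $\xi_{\mbox{\tiny NHG}}$ with $\widehat\eta_2$, so the two may well coincide --- which is exactly why only the bound $m$ is asserted when $m\ge 3$ (for $m=2$ there would be nothing to prove).

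The step I expect to be the main obstacle is the careful bookkeeping of the limit: exhibiting the family $g_\lambda$ and justifying that Geroch's hereditary argument genuinely applies here (it does, because (\ref{NHG}) is \emph{defined} as that limit), and in particular that the images of $\eta_1,\dots,\eta_m$ under the rescaling converge to genuine, linearly independent Killing vectors of (\ref{NHG}) sharing $\{u=0\}$, rather than some of them degenerating or merging. For part~(i) a second delicate point is invoking the contraction dichotomy sharply enough to guarantee that the limiting $m$-dimensional subalgebra stays Abelian of the same dimension instead of collapsing. Everything else --- nullity and tangency of the $\widehat\eta_i$, and separating $\xi_{\mbox{\tiny NHG}}$ from the degenerate sector by its surface gravity --- is routine given Theorems~\ref{KillH} and~\ref{th:algebra}.
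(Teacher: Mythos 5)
Your proposal is correct and follows essentially the same route as the paper: realizing the near-horizon geometry as the $\lambda\to 0$ limit of a one-parameter family of mutually isometric metrics, invoking Geroch's hereditary theorem to retain $m$ linearly independent Killing vectors sharing $\{u=0\}$, using the In\"on\"u--Wigner/Saletan contraction argument to conclude that the limiting algebra remains Abelian (hence entirely degenerate) in the fully degenerate case, and then appealing to Theorem~\ref{th:algebra} to see that the non-degenerate $v\partial_v-u\partial_u$ lies outside that span, giving order $m+1$ in case (i) and order $m$ in case (ii). Your explicit remarks that nullity and tangency on $\{u=0\}$ are closed conditions preserved in the limit, and that the non-degenerate generator may merge with $v\partial_v-u\partial_u$ in case (ii), are consistent with (and slightly more detailed than) the paper's own discussion.
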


\begin{remark}
Item (ii) implies, in particular, that if $\H$ is of maximal order, then
any near horizon geometry that one may construct from it must
also be of maximal order.
\end{remark}

\begin{remark}
A natural question is whether the NHG spacetime 
$(M_{\mbox{\tiny NHG}}, g_{\mbox{\tiny{NHG}}})$ arising from
a multiple Killing horizon is independent 
of the choice of degenerate Killing vector $\eta$.
This question will be addressed, with a thorough analysis, elsewhere.
\end{remark}

To illustrate the points in Theorem \ref{NearHor}, let us carry over the construction explicitly for the fully degenerate MKHs of subsection \ref{Fully}. Starting with the metric (\ref{confg}), and assuming that $\Omega (t-x^1,x^{A'})$ is independent of a number $q\geq 1$ of the coordinates $\{x^A\}$ so that $\H:=\{t=x^1\}$ is a fully degenerate MKH of order $q+1$, we need to construct its near horizon geometry (\ref{NHG}). To be explicit, we split the set of coordinates $\{x^A\}$ into two subsets $\{x^A\}:=\{x^{A'}, y^\Upsilon\}$ and use the notation 
\begin{align*}
\{x^{A'}\} & =\{x^A\}_{A=2,\dots ,n-q} \quad \quad  (\Longrightarrow A',B',\dots \in\{2,\dots ,n-q\}) ,\\
\{y^\Upsilon\} & = \{x^A\}_{A=n+1-q, \dots , n} \quad \quad  (\Longrightarrow \Upsilon \in\{n+1-q,\dots ,n\}).
\end{align*}
For the construction,
we choose $\eta =\eta_2$ as the degenerate Killing vector (because $\H=\overline\H =\H_{\eta_2}$) and, instead of looking for Gaussian null coordinates around the MKH, we can use the fact that $F,s_A$ and $\gamma_{AB}$ have a clear geometric interpretation as follows: let $S_0\subset \H$ be the co-dimension two submanifold defined by $\{u=0,v=v_0\}$ in the metric (\ref{NHG}) . Then \cite{KL}
\begin{itemize}
\item $\gamma_{AB}$ is the inherited metric on $S_0$
\item $s_A$ is the torsion one-form on $S_0$, defined by $\bm{s}(V) := \bm{\ell} (\nabla_V \eta_2)$ for any $V\in \mathfrak{X}(S_0)$, where $\ell$ is the unique null vector field orthogonal to $S_0$ satisfying $g(\ell,\eta_2)=-1$.
\item $H=2 \gamma(s,s) - \mbox{div} \, s + \frac{1}{2} R|_{S_0}- 
\frac{1}{2}\gamma^{AB}R_{AB} |_{S_0}$
\end{itemize}
where div is the divergence on $S_0$, $R$ is the scalar curvature and $R_{AB}$ are the $AB$-components of the Ricci tensor of $(M,g)$, both evaluated
at $S_0$.

Set by definition $\Omega_0 (x^{A'}) := \Omega |_\H = \Omega|_{t=x^1}$. Then the metric on $S_0$ reads simply
\be
\gamma =\Omega_0 \gamma^\flat \label{confgamma}
\ee
where $\gamma^\flat$ is the flat $(n-1)$-dimensional Euclidean metric. 
Noting that $\bm{\ell}=-\frac{1}{2}(dt+dx^1)$, a straightforward calculation shows that
\be
s=- \frac{1}{2}d\ln\Omega_0 
\label{s=d}
\ee
and therefore only the components $s_{A'}$ are non-identically vanishing. Finally, to compute $H$ we use the fact that $\eta_2$ is null everywhere for the metric (\ref{confg}), and this must be kept for its near horizon geometry, so that $\eta =\eta_2$ is null everywhere. But $g_{NH}(\eta,\eta)=u^2H$, and thus $H$ necessarily vanishes.

Hence, the near-horizon limit of (\ref{confg}) with $\Omega(t-x^1,x^{A'})$ leads us to the metric
$$
g_{NH} = 2dv\left(du - u\,  d\ln\Omega_0 \right) +\Omega_{0}\sum_{A=2}^{n} (dx^A)^2 ,
$$
with $\Omega_0(x^{A'})$ any arbitrary positive function independent of the $q$ coordinates $\{y^\Upsilon\}$ among the $\{x^A\}$. This metric has a non-fully degenerate MKH $\H:=\{u=0\}$ of order $q+2$.

To crosscheck that the construction works fine, we can exhibit the Killing vectors generating $\Kill_\H$, which are given by 
$$
\zeta = (av +c_\Upsilon y^\Upsilon +b)\partial_v -a u \partial_u - 
\frac{1}{\Omega_0} u c^\Upsilon \partial_{y^\Upsilon}
$$
where $a$, $b$ and $c_\Upsilon =c^\Upsilon$ are ($q+2$) arbitrary constants. At $u=0$ all of them are proportional to $\eta_2=\partial_v$, and setting $a=0$ we get $\Kill_\H^{deg}$. 

The case with $\Omega_0=$const.\ is flat spacetime, which thus arises as the near-horizon geometry of the maximal fully degenerate MKH in the conformally flat plane waves (\ref{PW1}). 

\section{The master equation for MKHs}\label{sec:master}
In this section we look for an equation that the proportionality function between different Killing vectors of a given MKH must satisfy.

Let $\H$ be a multiple Killing horizon, and using the notation of Remark \ref{notation} let $\eta\in \Kill^{deg}_\H$ and $\xi\in \Kill_\H$, so that $\kappa_\eta =0$ and $\k$ can be zero or not. Given that $\xif$ is normal and non-zero 
on its corresponding Killing horizon $\H_{\xi} \subset \H$,
we know that on $\H_{\xi}$ there exists a one-form $\bm{\Phi}\in T_{\H_{\xi}}^* M$ such that (for a proof, see \cite{FN})
\be
d \xif \eqH 2\bm{\Phi}\wedge \xif \label{dxi}
\ee
or equivalently
\be
\nabla_\mu \xi_\nu \eqH \Phi_\mu \xi_\nu -\Phi_\nu \xi_\mu \, .\label{dxi1}
\ee
The one-form $\bm{\Phi}$ cannot stay bounded at the zeros of $\xi $ in $\H$ (if any)
because $d \xi$ and $\xi$ cannot vanish simultaneously at any point.
$\bm{\Phi}$ is not univocally defined, as there is the gauge freedom
\be
\bm{\Phi} \rightarrow \bm{\Phi} + {\cal B}  \xif , \label{gauge}
\ee 
for an arbitrary smooth function ${\cal B}:\H_{\xi} \rightarrow \mathbb{R}$. Contracting (\ref{dxi}) with any $X\in \mathfrak{X}(\H_{\xi})$ (and thus 
fulfilling the condition $\xif (X)=0$) one obtains
\be
\nabla_X \xi \eqHxi \bm{\Phi} (X) \xi , \qquad \forall X\in \mathfrak{X}(\H_{\xi}) \label{nablaVxi}
\ee
and, in particular, for $X=\xi$
\be
\nabla_\xi \xi \eqH \bm{\Phi} (\xi) \xi, \quad \Longrightarrow \quad
 \bm{\Phi} (\xi) = \k  .\label{nablaxixi}
\ee
Incidentally, this provides a proof of expression (\ref{kappasquare}) by just squaring (\ref{dxi1}).
Comparing (\ref{varphi}) with (\ref{nablaVxi}) we observe that 
$$
\bm{\Phi}(X) =\bm{\varphi} (X), \hspace{1cm} \forall X\in \mathfrak{X}(\H_{\xi}).
$$
Furthermore, from (\ref{nablaVxi}) and (\ref{nablaxixi})
$$
\bm{\Phi} (\nabla_X \xi) =\k \bm{\Phi} (X) =\k \bm{\varphi}(X) , \hspace{1cm} \forall X\in \mathfrak{X}(\H_{\xi})
$$
which, upon using the constancy of $\bm{\Phi}(\xi)$ on $\H_{\xi}$,
 can be rewritten as
\begin{align}
\xif (\nabla_X \Phi +\bm{\Phi} (X) \Phi) =0
\label{tangency}
\end{align}
so that the vector fields $\nabla_X \Phi +\bm{\Phi} (X) \Phi$ are tangent to
$\H_{\xi}$ for arbitrary $X\in \mathfrak{X}(\H_{\xi})$. 
Another consequence of
(\ref{nablaVxi}) is the well-known fact that $\H_{\xi}$ is totally geodesic, i.e. that
given any pair of (spacetime) vector fields $X,Y$ tangent
to $\H_{\xi}$, the vector $\nabla_X Y$ is also tangent to 
$\H_{\xi}$. This means, in particular, that $\nabla_X Y$ makes
sense as a map
\begin{align}
\nabla: \X (\H_{\xi}) \times \X(\H_{\xi}) \longrightarrow \X(\H_{\xi}).
\label{nabla}
\end{align}
Similarly, for $\eta\in \Kill_\H^{deg}$ we have, on its corresponding Killing
horizon,
\be
d \bm{\eta}\eqHeta 2 \bm{w}\wedge \bm{\eta}, \label{detai}
\ee
for a one-form $\bm{w}\in T^*_\H M$. As before, this one-form diverges at 
fixed points of $\eta$ on $\overline\H$ (in particular, at
the points $\{ f =0\} \subset \H_{\xi}$), and is defined up to the addition of an arbitrary multiple of $\xif$
\be
\bm{w}\rightarrow \bm{w}+ G \xif, \hspace{1cm} G\in C^\infty(\H_{\eta}) .\label{gauge1}
\ee
A similar calculation as above leads to
\be
\nabla_X \eta\eqHeta \bm{w} (X) \eta , \hspace{1cm} \forall 
X\in \mathfrak{X}(\H_{\eta}) \label{nablaVeta}
\ee
and in particular, using $\kappa_\eta=0$,
$$
\bm{\eta} (w) =0 \quad \Longrightarrow \quad \xif (w)  =0 .
$$
Hence, $w$ is tangent to $\H_{\eta}$ everywhere. 
Using (\ref{etaxi1}) together with (\ref{nablaVxi}) and
 (\ref{nablaVeta}) the following equation follows
 on $\Hhat := \H_{\xi} \cap \H_{\eta}$  
$$
\nabla_X (f e^{-\k \tau}) +f e^{-\k \tau} \left(\bm{\Phi} (X) - \bm{w} (X)\right) 
\eqHhat 0, \hspace{1cm} \forall X\in \mathfrak{X}(\Hhat)
$$
or equivalently
$$
\nabla_X f \eqHhat f \left( \bm{w} (X) +\k X(\tau)-\bm{\varphi}(X) \right)
$$
which provides no new information for $X=\xi$, 
and it fully determines the pullback of the one-form $\bm{w}$ to $\Hhat$: let $\iota :\Hhat\rightarrow M$ be the inclusion of $\Hhat$ into the manifold $M$, and let $\iota^\star$ be its pullback, then the previous expression can be rewritten as
\be
 \iota^\star \bm{w}=d \ln f -\k d\tau +\bm{\varphi}  \label{wiexact}
\ee 
where the exterior derivative should be understood as the one in $\Hhat$ as a manifold. Observe that this relation is unaffected by the gauge (\ref{gauge1}), as $\iota^\star \xif =0$, and that $\iota^\star \bm{\Phi}=\bm{\varphi}$. Expression (\ref{wiexact}) recovers the previous result that $\bm{w}$ is ill-defined at the fixed points of $\eta$ in $\H_{\xi}$, where $f$ has zeros. 

Contracting (\ref{D2xi})  with $X\in \mathfrak{X}(\H_{\xi})$ while using (\ref{dxi})
and (\ref{nablaVxi}) one gets on $\H_{\xi}$
$$
\left( \nabla_X \bm{\Phi} +\bm{\Phi}(X) \bm{\Phi}\right)\wedge \xif \eqH X^\lambda \xi_\rho \bm{\Omega}^\rho{}_{\lambda}, \hspace{1cm} \forall X\in \mathfrak{X}
(\H_{\xi})
$$
where $\bm{\Omega}^\rho{}_{\lambda}:= \frac{1}{2} R^{\rho}{}_{\lambda\mu\nu}
dx^{\mu} \wedge dx^{\nu}$ are the 2-forms of curvature. With indices
\be
(X^\sigma\nabla_\sigma \Phi_\mu +\Phi_\sigma X^\sigma \Phi_\mu)\xi_\nu -
(X^\sigma\nabla_\sigma \Phi_\nu +\Phi_\sigma X^\sigma \Phi_\nu)\xi_\mu 
\eqHxi X^\lambda \xi_\rho R^\rho{}_{\lambda\mu\nu} .\label{riemann1}
\ee 
This implies, on the one hand (by (\ref{tangency}))
\be
X^\lambda \xi_\rho \xi^\mu R^\rho{}_{\lambda\mu\nu}\eqHxi 0, \hspace{1cm} \forall X\in \mathfrak{X}(\H_{\xi})
\label{Xxi2}
\ee
which is nothing else that (\ref{Xxi}) ---as $\k$ is constant---, and on the other hand, for $X=\xi$, the existence of a function ${\cal G}:\H_{\xi} \rightarrow \mathbb{R}$ such that
$$
\nabla_\xi \Phi+\k \Phi = {\cal G} \xi, 
$$
which implies $2{\cal G}\k =2\k (\Phi_\rho \Phi^\rho) +\nabla_\xi (\Phi_\rho \Phi^\rho)$. 
An analogous calculation starting from 
$$
\nabla_\lambda \nabla_\mu \eta_\nu = \eta_\rho R^\rho{}_{\lambda\mu\nu}
$$
leads $\forall X\in \mathfrak{X}(\H_{\eta})$ to
$$
(X^\sigma\nabla_\sigma w_\mu +w_\sigma X^\sigma w_\mu)\eta_\nu -
(X^\sigma\nabla_\sigma w_\nu +w_\sigma X^\sigma w_\nu)\eta_\mu 
\eqHeta X^\lambda \eta_\rho R^\rho{}_{\lambda\mu\nu} .
$$
Introducing here (\ref{etaxi1}) this becomes
\be
(X^\sigma\nabla_\sigma w_\mu +w_\sigma X^\sigma w_\mu)\xi_\nu -
(X^\sigma\nabla_\sigma w_\nu +w_\sigma X^\sigma w_\nu)\xi_\mu 
\eqHeta X^\lambda \xi_\rho R^\rho{}_{\lambda\mu\nu}. \label{riemann2}
\ee
which is an alternative expression for the righthand side of (\ref{riemann1}).
For $X=\xi$ this gives
$$
\nabla_\xi w=\hat G  \xi  \hspace{1cm} \Longrightarrow \hspace{1cm}
\nabla_\xi (w_\mu w^\mu) =0.
$$

Combining the two expressions (\ref{riemann1}) and (\ref{riemann2}) we get,
on $\Hhat = \H_{\xi} \cap \H_{\eta}$
\be
Y^\mu\left( X^\sigma\nabla_\sigma \Phi_\mu +\Phi_\sigma X^\sigma \Phi_\mu-
X^\sigma\nabla_\sigma w_\mu -w_\sigma X^\sigma w_\mu\right) =0, \hspace{1cm} \forall X,Y\in \mathfrak{X}(\Hhat) .\label{master}
\ee
This expression together with (\ref{wiexact}) provides a second order PDE for the function $f$ which is the basic fundamental equation of MKHs. We call it the {\em master equation}.

\subsection{The master equation as a PDE on any cut of $\H$}
The contraction of either (\ref{riemann1}) or (\ref{riemann2}) with $\xi^\nu$ gives no information due to (\ref{Xxi2}). Similarly, contraction with two vector fields tangent to $\H$ gives known information, namely that certain
components of the Riemann tensor vanish on $\H$.
 Thus, the relevant information contained in either (\ref{riemann1}) or (\ref{riemann2}) is given by contraction with a vector field transversal to $\H$ everywhere, and a vector field tangent to $\H$ but different from $\xi$. Concerning the master equation (\ref{master}), both vectors $X$ and $Y$ must be different from $\xi$ for it to yield a non-trivial equation.
To extract this information we work on $\widehat \H$, and select a scalar function
$\tau: \Hhat \longrightarrow \mathbb{R}$ as in Section \ref{basics}, i.e. 
satisfying $\xi(\tau)=1$. 
The level sets of this function defines a 
foliation $\{S_\tau\}$ of $\widehat \H$ by spacelike co-dimension two surfaces.
By restricting $\Hhat$ if necessary we may assume that it admits
a cross section, i.e. a spacelike codimension-two surface
crossed  precisely once by each inextendable null generator. Under this assumption, 
the freedom  (\ref{ufreedom}) implies that one of the leaves of the foliation
can be selected arbitrarily, and then the whole foliation is uniquely fixed.
Everything that follows is valid for 
any such choice of $\tau$.


Define the set of vector fields associated to the foliation $\{ S_{\tau}\}$
$$
\mathfrak{X} (\{ S_\tau\} ) :=\{ V\in \mathfrak{X}(\widehat\H), \hspace{3mm} [\xi,V]=0, \hspace{3mm} V(\tau)=0\}.
$$
Note that any vector field $X_0$ in a given leaf $S_{\tau_0}$, $X_0
\in \X(S_{\tau_0})$  gives
rise to an element $X \in \X(\{ S_{\tau} \}$ by simply solving $[\xi,X] =0$
with initial data $X_0$. Conversely, any $X \in \X(\{ S_{\tau} \})$
defines a vector field $X_0$ tangent to $S_{\tau_0}$ by simply $X_0 := X |_{S_{\tau_0}}$. It is immediate to check that this is an isomorphism (see
\cite{MarsSoria} for further details). An easy consequence of this
isomorphism  is that  $\bm{\omega} \in \Lambda (\Hhat)$ vanishes
if and only if it vanishes on $\xi$ and all $V \in \X(\{S_{\tau}\})$. 
We make the statement explicit for later use
\begin{align}
\label{Property}
\mbox{For } \bm{\omega} \in \Lambda(\Hhat): 
\quad \quad \quad \quad 
\left . 
\begin{array}{ll}
\bm{\omega}(V)&=0 \quad \quad  \forall V \in \X (\{ S_{\tau}\}) \\
\bm{\omega}(\xi) &=0 
\end{array}
\right \}
\quad \quad \Longleftrightarrow \quad \quad \bm{\omega} =0.
\end{align}
The following fact will also be needed. Let $\ell \in T_{\Hhat} M$ be
a vector field in $M$ along $\Hhat$, uniquely defined by the conditions of 
being  null, orthogonal to $S_{\tau}$ $\forall \tau$
and satisfying $g(\ell,\xi)=-1$. It follows immediately that
\begin{align*}
\iota_{\star} (\bm{\ell} ) = - d\tau:= \widehat{\bm{\ell}}.
\end{align*}

The Lie derivative along $\xi$ commutes with the
spacetime covariant derivative $\nabla$, and this property descends to
$\nabla_V W$ when this operation is viewed as in (\ref{nabla}). Hence 
$[ \xi, \nabla_V W]=0$ for any $V,W \in \X(\{ S_{\tau} \})$. This allows one to 
define a torsion-free covariant  
derivative $D$ on  $\X(\{ S_{\tau} \})$ by
means of 
\begin{align}
D_V W := \nabla_V W - K(V,W) \xi, \hspace{1cm} \forall V,W \in \mathfrak{X} 
(\{S_\tau\})
\label{GaussIden}
\end{align}
where $K$ is the second fundamental form of $S_{\tau}$ along the transverse normal $\ell$, 
that is
\begin{align}
K(V,W) := - \widehat{\bm{\ell}} \left( \nabla_V W \right), \hspace{1cm} \forall V,W \in \mathfrak{X} (\{ S_\tau\}). 
\label{SecFund}
\end{align}
One has  $D_V W \in \X (\{ S_{\tau} \})$ because (i) 
$\pounds_{\xi} (\nabla_V W ) = \pounds_{\xi} \widehat{\bm{\ell}} = \pounds_{\xi} \xi =0$, hence $\pounds_{\xi} (D_V W) = 0$ and (ii)
$(D_V W) (\tau) = - \widehat{\bm{\ell}} (D_V W) = - \widehat{\bm{\ell}} (\nabla_V W) 
- K(V,W) =0$. By the isomorphism  above, $D$ can also viewed
as a covariant derivative of any of the submanifolds $S_\tau$.
It is immediate to check that this $D$ is actually the Levi-Civita connection
associated to the induced metric 
$$
h(V,W) = g(V,W) , \hspace{1cm} \forall V,W \in \mathfrak{X} (S_\tau).
$$
Note that all $(S_{\tau}, h)$ are isometric for Killing horizons.

Let us introduce the ring of functions $\F(\{ S_{\tau} \})
:= \{ h \in \F(\Hhat) \, ; \,  \xi(h) =0 \}$. It is clear that
$\X (\{S_{\tau}\})$ is a module over $\F(\{ S_{\tau} \})$. Consider its dual
module $\X^{\star} (\{ S_{\tau} \})$, i.e. the set
of $\F(\{ S_{\tau} \})$-linear maps $\bm{\omega} : \X(\{ S_{\tau} \}) 
\longrightarrow \F(\{ S_{\tau}\})$. It is a simple exercise to show that
this module is isomorphic to
\begin{align*}
\mathfrak{X}^* (\{ S_\tau\}) 
:=\{ \bm{\omega}\in \Lambda(\widehat\H), \hspace{3mm} \pounds_\xi \bm{\omega} =0, \hspace{3mm} \bm{\omega} (\xi)=0\},
\end{align*}
and we shall use this representation in the following. The covariant
derivative $D$ extends to the dual $\mathfrak{X}^* (\{ S_\tau\})$ by the standard Leibniz  rule $(D_V \omega)(W) := V (\omega(W)) -\omega( D_V W)$,
where
$V,W \in \X(\{ S_{\tau} \})$.

Let $\bm{\psi}\in T^*_{\Hhat} M$ be any one-form in $M$ along 
$\H$ such that $\pounds_\xi (\iota^\star \bm{\psi})=0$. It follows 
that $\bm{\psi}(\xi) \in \F (\{ S_{\tau} \})$ because
\begin{align*}
\pounds_{\xi} \left ( \bm{\psi} (\xi) \right )
= \pounds_{\xi} \left ( \bm{\psi} (\iota_{\star} (\xi)) \right )
= \pounds_{\xi} \left ( \iota_{\star} (\bm{\psi}) (\xi) \right ) =0.
\end{align*}
Define a $\, \, \widetilde{} \,\, $ operation on such one-forms
$\bm{\psi}$  by
$\widetilde{\bm{\psi}}
:= \iota^\star \bm{\psi} -\bm{\psi} (\xi) d\tau$.  
The property $\pounds_{\xi} \widetilde{\bm{\psi}}=0$ is immediate and,
in addition,
\begin{align*}
\widetilde{\bm{\psi}} (\xi) =
(\iota^\star \bm{\psi}) (\xi)  -\bm{\psi} (\xi) \xi(\tau) =
\bm{\psi} (\iota_{\star} (\xi)) - \bm{\psi} (\xi) =0,
\end{align*}
so $\widetilde{\bm{\psi}} \in \mathfrak{X}^* (\{S_\tau\}) $. A similar
argument establishes 
\begin{align}
\widetilde{\bm{\psi}} (V) = \bm{\psi} (V) \quad \quad 
\forall V \in \X (\{ S_{\tau} \}).
\label{updown}
\end{align}
For any pair
$V,W \in \X (\{ S_{\tau} \})$ we compute
\begin{align}
\left ( \nabla \bm{\psi} \right ) (V,W ) & = \nabla_V ( \bm{\psi} (W))
- \bm{\psi} ( \nabla_V W) 
= 
V (\widetilde{\bm{\psi}} (W) ) - 
 \bm{\psi} \left ( D_V W + K(V,W) \xi \right ) 
\nonumber \\ 
& = 
V (\widetilde{\bm{\psi}} (W) ) - 
 \widetilde{ \bm{\psi}} ( D_V W )
-  K(V,W) \bm{\psi} ( \xi ) \nonumber \\
& = ( D_V \widetilde{\bm{\psi} } ) (W) -
   K(V,W) \bm{\psi} ( \xi ).
\label{Dpsi}
\end{align}
We want to use this construction applied to $\bm{\Phi}$. 
Observe that $\bm{\Phi}$ satisfies $\pounds_{\xi} ( \iota_{\star} (\bm{\Phi}))=0$ 
because 
\begin{align*}
\left . \begin{array}{ll}
    \forall V \in \X(\{S_{\tau}\}) \quad  &\pounds_{\xi} ( \iota_{\star} (\bm{\Phi})) (V )
= \pounds_{\xi} (  \iota_{\star} (\bm{\Phi}) (V))
= \pounds_{\xi} (\nabla_V \xi) = 0 \\
& \pounds_{\xi} ( \iota_{\star} (\bm{\Phi})) (\xi)
= \pounds_{\xi} ( \bm{\Phi} (\xi) ) = \pounds_{\xi} (\kappa_{\xi} ) =0
\end{array}
\right \} 
\quad 
\Longrightarrow \quad \pounds_{\xi} ( \iota_{\star} (\bm{\Phi} )) =0,
\end{align*}
where we used that $\pounds_{\xi}$ commutes with $\nabla$ and
the implication is a consequence of (\ref{Property}). Thus 
$\widetilde{\bm{\Phi}}$ makes sense and in fact
$$
\widetilde{\bm{\Phi}} = \bm{\varphi} -\k d\tau := -\bm{s}
$$
and using (\ref{wiexact}) 
\be
\iota^\star \bm{w} =d\ln f - \bm{s}\label{s-form}.
\ee
By the isomorphism above,
$\bm{s}$ is actually the torsion one-form of each leaf
$S_{\tau}$:
\begin{align*}
\bm{s}(V) = -\bm{\varphi} (V) = -d\tau (\nabla_V \xi)
=  \iota_{\star} (\bm{\ell} )  \left ( \nabla_V \xi \right )
= \bm{\ell}  \left ( \nabla_V \xi \right )
\quad  \quad \quad \forall V \in \mathfrak{X}(\{ S_\tau\}).
\end{align*}

Now we can get the essential information contained in (\ref{riemann1}) as well as in (\ref{master}). Let $\{e_A\}$ be a basis of $\mathfrak{X} (\{S_\tau\})$. Then, contraction of (\ref{riemann1}) with $\ell^\mu e^\nu_B $ and letting $X=e_A$ we get, on using (\ref{Dpsi}) and (\ref{nablaxixi}), 
\be
-D_A s_B -\k K_{AB} +s_A s_B \eqHhat e_A^\lambda \xi_\rho R^\rho{}_{\lambda\mu\nu}\ell^\mu e_B^\nu . \label{Ds}
\ee
This is actually an identity valid for any Killing horizon, be it multiple or not. Analogously, setting $X=e_A$ and $Y=e_B$ in (\ref{master}) we arrive at
\be
D_A D_B\ln f +D_A \ln f D_B \ln f -s_A D_B\ln f -s_B D_A \ln f +\k K_{AB}=0\label{master1}
\ee
which is a PDE non-linear in $\ln f$. An alternative form of this PDE, linear in $f$ reads
\be
D_A D_B f -s_A D_B f -s_B D_A f +\k K_{AB} f =0 .\label{master2}
\ee
This is the master equation in neat form. Given any Killing horizon $\H_\xi$ for the Killing vector $\xi$, any other Killing vector sharing the Killing horizon as a degenerate one must satisfy (\ref{etaxi1}) and (\ref{master2}). By using initial-value formulation techniques on null hypersurfaces and bifurcate horizon properties \cite{CCM,CP,CP1,FRW,HIW} one can actually prove that, conversely,
given a solution $f$ of the above equation on any cut of $\H_\xi$ ---and the appropriate initial conditions for the existence of $\xi$ and $\H_\xi$---, there exists a spacetime with a (non-fully degenerate)
 MKH for $\xi$ and (\ref{etaxi1}). This will be analyzed in \cite{MPS}.

Expression (\ref{master2}) can thus be seen as a linear system of PDEs for $f$ ---and its trace gives an elliptic PDE on $f$. Given that it  is written in normal form, any solution is determined by the values of $f$ and $D_A f$ at any point $p\in \H$. Therefore, (\ref{master2}) has, at most, $(n-1)+1=n$ independent solutions, which gives the maximum possible dimension for $\Kill_\H^{\deg}$ in accordance with Corollary \ref{coro:dim}. Observe that if $\k =0$ then $f=1$ is one of the solutions and $\xi$ itself is degenerate. 

The precise number of independent solutions that the master equation (\ref{master2}) can have depends on the properties of the ambient spacetime $(M,g)$ and on the intrinsic and extrinsic geometry of the foliation $\{S_\tau\}$ for $\H$ via its integrability conditions. These are briefly derived in the next subsection, and the complete analysis of their consequences will be presented in \cite{MPS}.

\subsection{Integrability conditions}
The integrability conditions of (\ref{master2}) are given by the Ricci identity 
$$
(D_C D_A -D_A D_C) D_B f =-D_D f \stackrel{h}{R}{}^D{}_{BCA}
$$
where $\stackrel{h}{R}$ is the curvature tensor of the connection $D$, which coincides with the Riemann tensor of any of the cuts $(S_\tau,h)$. A straightforward calculation using (\ref{master2}) leads to
\bea
D_D f\left[\stackrel{h}{R}{}^D{}_{BCA}-\delta^D_B \left(D_A s_C -D_C s_A \right) +s_B\left(s_A \delta^D_C -s_C \delta^D_A \right)-\delta^D_C D_A s_B +\delta^D_A D_Cs_B \right. \nonumber\\
\left. +\k \left(\delta^D_A K_{CB} -\delta^D_C K_{AB} \right) \right]
+\k f\left(D_A K_{CB} -D_C K_{AB} -s_A K_{CB} +s_C K_{AB}Ê\right)=0  \label{intcond}
\eea
which can be rewritten, on using (\ref{Ds}), as
\bea
D_D f\left[\stackrel{h}{R}{}^D{}_{BCA}-\delta^D_B \left(D_A s_C -D_C s_A \right) +
(\delta^D_C e_A^\lambda -\delta^D_A e_C^\lambda )\xi_\rho R^\rho{}_{\lambda\mu\nu}\ell^\mu e_B^\nu \right]\nonumber \\
+\k f\left(D_A K_{CB} -D_C K_{AB} -s_A K_{CB} +s_C K_{AB}Ê\right)\eqHeta 0 .\label{intcond1}
\eea
Using here for the last term in brackets the Codazzi equation for the foliation $\{S_\tau\}$ we can still write
$$
D_D f\left[\stackrel{h}{R}{}^D{}_{BCA}-\delta^D_B \left(D_A s_C -D_C s_A \right) +
(\delta^D_C e_A^\lambda -\delta^D_A e_C^\lambda )\xi_\rho R^\rho{}_{\lambda\mu\nu}\ell^\mu e_B^\nu \right]
+\k f \ell_\rho R^\rho{}_{\lambda\mu\nu} e^\lambda_B e^\mu_C e^\nu_A \eqHeta 0.
$$
Every MKH lives in a spacetime such that this is satisfied by the function $f$ in (\ref{etaxi}), for $\eta \in \Kill^{deg}_\H$. In particular, the maximum dimension of $\Kill^{\deg}_\H$ is attained whenever the previous condition holds identically, that is, for any values of $f$ and $D_A f$. In other words, when the term in brackets vanishes and the factor multiplying $f$ does too. This allows us to analyze in detail the spacetimes with (fully degenerate or not) MKHs of maximal order, as well as the cases with other values of the order $m$, see \cite{MPS}.

\section{Classification of MKHs in maximally symmetric spacetimes}\label{App:MaxSym}

In this section we study the multiple Killing horizons in the (A)-de Sitter and Minkowski spacetimes of arbitrary dimension $n+1$ at least two. 
Among other things, we show that any point $p$ in these spacetimes is contained in a multiple Killing horizon of maximal order
$n+1$. We start with the A-dS case, which requires a machinery that can then be applied to the Minkowski case.

\subsection{The (A)dS case}
The (anti)-de Sitter space of curvature radius $a >0$,
denoted by $\ads^{n+1}_a$,  is the
 maximally extended and simply connected $(n+1)$-dimensional
($n \geq 1$) Lorentzian manifold
of constant  curvature $K = \frac{\epsilon}{a^2}$ where $\epsilon =1$
in the de Sitter case and $\epsilon =-1$ in the anti-de Sitter case.
We intend to give the full classification of MKHs in these
spaces. From theorem \ref{th:algebra} we know that any such MKH has dim $\Kill^{deg}_\H =m-1$, where $m\geq 2$ is the order of the MKH, so that
to classify the MKHs it suffices to determine all degenerate Killing horizons, and then
find which of those are multiple. 

To that aim, it is convenient to view $\ads_a$ as an embedded hypersurface
in a higher-dimensional flat space. More specifically,
let $\Minkpq$ be the simply connected, complete 
pseudo-Riemannian manifold of vanishing curvature and signature
$(p,q)$. We assume  $p+q = n+2$ and select a Cartesian coordinate system
$\{ x^{\alpha'} \}$ ($\alpha',\beta'
\cdots = 0,\cdots,n+1$) which will stay fixed from now on.
The components of the flat metric $g^\flat$ in these coordinates are
$g^\flat_{\alpha'\beta'} 
= \mbox{diag} \{ \underbrace{-1, \cdots, -1}_{p}, \underbrace{+1,\cdots, +1}_{q}\}$. 

We shall consider the two cases at the same time. Recall that $\epsilon := \pm 1$,
and fix   $2 p = 3- \epsilon$, i.e. when $\epsilon =1$ we work 
with the $(n+2)$-dimensional Minkowski space 
$\M^{1,n+1}$ and when $\epsilon = -1$ we have $\M^{2,n}$. Denote them collectively by $\M_\epsilon^{n+2}$. There exists an isometric immersion of 
$\ads_a^{n+1}$ into $\M_\epsilon^{n+2}$ whose image is
\begin{align*}
\Sigma_{a} := \{ x \in \M_{\epsilon}^{n+2}, \quad  \la x,x\ra= \epsilon a^2 \},
\end{align*}
where $\la \, , \, \ra$ denotes scalar product with $g^\flat$
and we are making use of the affine structure of $\M_{\epsilon}^{n+2}$,
which makes it into a vector space  with inner
product $g^\flat$. 
When $\epsilon =1$ the immersion is in fact a proper embedding. When
$\epsilon =-1$, there is a covering map $\pi : \mbox{AdS}_a^{n+1} \longrightarrow
\widetilde{\mbox{AdS}\,}{}_a^{n+1}$ onto a space which is diffeomorphic
to $\Sigma_a$. Thus, the MKHs in $\ads_a^{n+1}$ can be studied by considering their images in $\Sigma_a$.

The algebra of Killing vectors of $\ads_a^{n+1}$ can be obtained by restriction
in $\Sigma_a$ of the set of Killing vectors in $\M_{\epsilon}^{n+2}$ which 
leave the origin $ o \in \M_{\epsilon}^{n+2}$ invariant, given by
\begin{align*}
\zeta_F|_x = F^{\sharp}(x)
\end{align*}
where $F^{\sharp} := \M_{\epsilon}^{n+2} \longrightarrow \M_{\epsilon}^{n+2}$ is a skew
symmetric linear map, i.e. satisfying for all $x,y \in \M_{\epsilon}^{n+2}$
\begin{align*}
\la F^{\sharp} (x),y \ra = - \la x, F^{\sharp}(y) \ra.
\end{align*}
Given a non-zero vector $Z \in \M_{\epsilon}^{n+2}$  we define
$\la Z\ra^{\perp}$ to be the hyperplane orthogonal to $Z$ passing through the origin, i.e.
the set of points $\{ x \in \M_{\epsilon}^{n+2}: \la Z, x \ra =0\}$.
A point $Z \in \M_{\epsilon}^{n+2}$ is called respectively timelike, null or spacelike 
if $\la Z,Z\ra$ is negative, zero or positive.

We can now state our main result concerning degenerate
Killing horizons in $\ads_a^n$.

\begin{theorem}
\label{degenerate}
Let $\H$ be a degenerate Killing horizon of $\ads_a^{n+1}$. Then there exists
a null, non-zero vector $k \in \M_{\epsilon}^{n+2}$ such that 
$\H$ is a subset of the intersection of $\Sigma_a$ with the hyperplane 
$k^{\perp} \subset \M_{\epsilon}^{n+2}$. Moreover, the set of Killing vectors 
which respect to which an open and dense subset of $\H$
is a degenerate Killing horizon is given by the restriction to 
$\Sigma_a$ of $\zeta_F$ with
\begin{align*}
F^{\sharp} = k \otimes \bm{w} - w \otimes \bm{k}
\end{align*}
and $w \in \M_{\epsilon}^{n+2}$ is a vector linearly independent
of $k$ and satisfying $\la k,w \ra =0$. 
Conversely, for any pair $\{k,w\}$ as before, the Killing vector 
$\zeta := \zeta_{F^{\sharp}} |_{\Sigma_a}$ admits a degenerate Killing horizon given by
the hypersurface 
\begin{align*}
\H_{\zeta} := \{ x \in \Sigma_a \cap \la k \ra^{\perp} \mbox{ such that }
\la w, x \ra \neq 0 \}
\end{align*}
or any open subset thereof.
\end{theorem}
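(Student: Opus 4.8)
The natural strategy is to work entirely inside the quadric model $\Sigma_a \subset \M_\epsilon^{n+2}$, exploiting the fact that a Killing vector $\zeta = \zeta_{F^\sharp}$ has square norm equal to the restriction to $\Sigma_a$ of the quadratic function $N(x) := \la F^\sharp x, F^\sharp x\ra = -\la x, (F^\sharp)^2 x\ra$ on $\M_\epsilon^{n+2}$ (the second equality by skewness of $F^\sharp$, which also makes $\zeta$ automatically tangent to $\Sigma_a$). I would split the proof into the forward classification of a given $\H$ and the converse construction.

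For the forward direction, let $\H$ be a degenerate Killing horizon of $\zeta_{F^\sharp}$ with $F^\sharp \neq 0$. First I would project the ambient gradient $\mathrm{grad}_{\M} N = -2(F^\sharp)^2 x$ onto $T\Sigma_a$ (i.e.\ orthogonal to the position vector $x$, which is normal to $\Sigma_a$): since $N$ vanishes on $\H$ the normal component drops out there, so on $\H$ the intrinsic gradient of $N$ equals $-2(F^\sharp)^2 x$. By the definition of surface gravity (\ref{SurGrav}) and degeneracy ($\kappa_\zeta=0$) this gradient must vanish on $\H$, hence $(F^\sharp)^2 x = 0$ for all $x \in \H$. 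Thus the $n$-dimensional hypersurface $\H$ lies inside $\Sigma_a \cap \ker\big((F^\sharp)^2\big)$, and since the intersection of $\Sigma_a$ with a linear subspace $W$ has dimension at most $\dim W - 1$ (it is a nonconstant level set of $\la x,x\ra$ on $W$), we get $\dim\ker((F^\sharp)^2) \ge n+1$, i.e.\ $\mathrm{rank}\,(F^\sharp)^2 \le 1$.

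The key algebraic step is to deduce from this that $\mathrm{rank}\,F^\sharp = 2$ together with the stated form of $F^\sharp$. I would use the identity $\la F^\sharp a, F^\sharp b\ra = -\la a, (F^\sharp)^2 b\ra$: it shows that $\mathrm{im}\,F^\sharp \cap \ker F^\sharp$ is a totally null subspace of $\M_\epsilon^{n+2}$, hence of dimension $\le \min(p,q) \le 2$; combined with $\mathrm{rank}\,(F^\sharp)^2 = \mathrm{rank}\,F^\sharp - \dim(\mathrm{im}\,F^\sharp \cap \ker F^\sharp) \le 1$ and the fact that the rank of a skew map is even and positive, this forces $\mathrm{rank}\,F^\sharp = 2$. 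Writing the general rank-two skew map as $F^\sharp = a \otimes \bm{b} - b \otimes \bm{a}$ with $a,b$ independent, a short computation of $(F^\sharp)^2$ on $\mathrm{span}\{a,b\}$ shows that $\mathrm{rank}\,(F^\sharp)^2 \le 1$ is equivalent to the Gram determinant $\la a,a\ra\la b,b\ra - \la a,b\ra^2$ vanishing, i.e.\ to $\mathrm{span}\{a,b\}$ being a degenerate plane; choosing a nonzero null vector $k$ in its radical and a $w \in \mathrm{span}\{a,b\}$ independent of $k$, one rewrites $F^\sharp$ — up to an overall scale absorbed into $\zeta$ — as $k \otimes \bm{w} - w \otimes \bm{k}$ with $\la k,w\ra = 0$. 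Then $(F^\sharp)^2 x = -\la w,w\ra \la k,x\ra\, k$, so when $\la w,w\ra \neq 0$ one has $\ker((F^\sharp)^2) = \la k\ra^{\perp}$, whence $\H \subset \Sigma_a \cap \la k\ra^{\perp}$; the exceptional case $\la w,w\ra = 0$ (only possible in anti-de Sitter, where $F^\sharp$ is then nilpotent) is treated directly, showing that the degenerate horizons of $\zeta$ are exactly the $\Sigma_a \cap \la m\ra^{\perp}$ with $m$ ranging over the totally null plane $\mathrm{span}\{k,w\}$. Finally, since $\Sigma_a \cap \la k\ra^{\perp}$ affinely spans $\la k\ra^{\perp}$, the vector $k$ is determined up to scale by $\H$, and combining this with the displayed form of $F^\sharp$ gives the ``moreover'' statement.

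For the converse, given $\{k,w\}$ with $\la k,k\ra = \la k,w\ra = 0$ and $k,w$ independent, set $F^\sharp := k \otimes \bm{w} - w \otimes \bm{k}$ and $\zeta := \zeta_{F^\sharp}|_{\Sigma_a}$, and I would check in turn: on $\Sigma_a \cap \la k\ra^{\perp}$ one has $\zeta(x) = \la w,x\ra k$, so $\zeta$ is null there and nonzero exactly on $\H_\zeta$; the set $\Sigma_a \cap \la k\ra^{\perp}$ is a smooth codimension-two submanifold of $\M_\epsilon^{n+2}$ (the constraint gradients $x$ and $k$ are everywhere independent, since $x \parallel k$ would give $\la x,x\ra = 0 \neq \epsilon a^2$), its null generator is $k$, and $\zeta \parallel k$ is tangent to it; degeneracy follows at once from $\la \zeta,\zeta\ra = \la w,w\ra \la k,x\ra^2$, which vanishes to second order along $\la k\ra^{\perp}$, so that $\mathrm{grad}(\la\zeta,\zeta\ra)$ — hence $\kappa_\zeta$ by (\ref{SurGrav}) — vanishes on $\H_\zeta$; and the excluded locus $\la w,x\ra = 0$ consists precisely of the fixed points of $\zeta$ on $\Sigma_a \cap \la k\ra^{\perp}$, with $\overline{\H_\zeta} = \Sigma_a \cap \la k\ra^{\perp}$ (or any open subset suffices, matching the definition of Killing horizon). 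The main obstacle of the whole argument is the indefinite-signature linear algebra that pins down $\mathrm{rank}\,F^\sharp = 2$ and extracts the null vector $k$, including the exceptional totally null case; everything else reduces to explicit computation in the flat ambient space.
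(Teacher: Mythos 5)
Your proposal is correct and reaches the same conclusion by the same geometric opening (derive $F^{\sharp}\circ F^{\sharp}(x)=0$ on $\H$ from vanishing of the norm and of its gradient, then bound $\mbox{rank}(F^{\sharp}\circ F^{\sharp})\leq 1$ by a dimension count), but the algebraic core is genuinely different from the paper's. The paper splits into the cases $\mbox{rank}(F^{\sharp}\circ F^{\sharp})=1$ and $=0$, writes $F^{\sharp}\circ F^{\sharp}=\mu\, k\otimes\bm{k}$ in the first case, proves $k$ is null from the geometry of $T_x\H$ inside $T_x\Sigma_a$, shows $F^{\sharp}$ maps $\la k\ra^{\perp}$ into $\mbox{span}(k)$, and then invokes the bespoke Lemmas \ref{rank1} and \ref{FsqLemma} (built on Lemmas \ref{TotallyDeg} and \ref{Ortogonality}) to extract the normal form. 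You instead observe that $\mbox{Image}(F^{\sharp})\cap\mbox{Ker}(F^{\sharp})$ is totally degenerate, hence of dimension at most $\min(p,q)\leq 2$, combine this with $\mbox{rank}(F^{\sharp}\circ F^{\sharp})=\mbox{rank}(F^{\sharp})-\dim(\mbox{Image}(F^{\sharp})\cap\mbox{Ker}(F^{\sharp}))$ and evenness of the rank to force $\mbox{rank}(F^{\sharp})=2$, and then use decomposability of rank-two two-forms plus the Gram determinant to land on $F^{\sharp}=k\otimes\bm{w}-w\otimes\bm{k}$ with $\mbox{span}\{k,w\}$ degenerate. This is slicker and more unified: it produces a single normal form covering both $\la w,w\ra\neq 0$ and $\la w,w\ra=0$, rather than treating them as separate cases from the outset, and it bypasses most of the appendix machinery. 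What the paper's longer route buys is precisely the detailed handling of the totally null case.

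That case is also the one spot where your outline is thin: when $\la w,w\ra=0$ (possible only for $\epsilon=-1$), $F^{\sharp}\circ F^{\sharp}=0$, so the kernel argument no longer pins down a hyperplane, and the assertion that the degenerate horizons of $\zeta$ are exactly the sections $\Sigma_a\cap\la m\ra^{\perp}$ with $m$ in the degenerate plane does require an argument: one must check that $\zeta$ has no zeros on $\Sigma_a$ (this uses item (iv) of Lemma \ref{Ortogonality}) and that the family of hyperplanes $\{\cos\alpha\, k+\sin\alpha\, w\}^{\perp}$ cuts $\Sigma_a$ in a foliation by embedded null hypersurfaces, each a degenerate Killing horizon of $\zeta$, so that a connected $\H$ sits inside exactly one leaf. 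Your ``treated directly'' covers the right statement but you should be aware this is where most of the remaining work lives. The converse and the ``moreover'' identification of $k$ up to scale via $\mbox{span}(\H)=\la k\ra^{\perp}$ are handled exactly as in the paper.
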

\begin{remark}
When $\epsilon =1$, $w$ must be spacelike because a causal vector perpendicular to $k$ cannot be linearly independent of $k$.
When $\epsilon =-1$, there is no such restriction and
$w$ is allowed to have any norm (including null).
\end{remark}

The proof of this theorem is somewhat long, and requires several results on skew symmetric linear
maps on pseudo-Riemannian vector spaces. We devote Appendix \ref{ApB} to  establishing the necessary lemmas
and give the proof.


With this theorem above at hand, it is easy to determine the MKHs in $\ads_a^{n+1}$.
\begin{theorem}
Let $\ads_a^{n+1}$ be the (A)-de Sitter spacetime of 
dimension $n+1 \geq 2$ and view this as a hypersurface in $\M_{\epsilon}^{n+2}$ as
described above.  A null hypersurface $\H$ embedded in $\ads_a^{n+1}$
is a multiple Killing horizon if and only
if $\H$ is an open subset of the hypersurface
$\la k \ra ^{\perp} \cap \Sigma_{a}$ where $k \in \M_{\epsilon}^{n+2}$
is non-zero and null. Moreover, $\Kill_\H$ is generated
by the restriction to $\Sigma_a$ of the Killing vectors in $\M_{\epsilon}^{n+2}$ 
$\zeta_{F_{k,Z}} (x) = F_{k,Z}^{\sharp} (x)$, $x \in \M_{\epsilon}^{n+2}$ with $F^{\sharp}_{k,Z} := \M_{\epsilon}^{n+2}
\longrightarrow \M_{\epsilon}^{n+2}$ given by
\begin{align}
\quad F_{k,Z}^{\sharp}: = k \otimes \bm{Z} - Z \otimes \bm{k}, \quad \quad 
Z \in \M_{\epsilon}^{n+2}.
\label{FormFkv}
\end{align}
\end{theorem}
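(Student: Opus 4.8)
The plan is to obtain this essentially as a consequence of Theorem~\ref{degenerate} (the classification of degenerate Killing horizons) and the structural Theorem~\ref{th:algebra}, after isolating one elementary computation. Fix a non-zero null $k\in\M_{\epsilon}^{n+2}$ and put $\mathcal N:=\Sigma_a\cap\la k\ra^{\perp}$. First I would note that $\mathcal N$ (or any connected component of it) is a smooth null hypersurface of $\ads_a^{n+1}$: at $x\in\mathcal N$ one has $\la x,x\ra=\epsilon a^2\neq0$, so $x$ is not proportional to the null vector $k$ and $\la k\ra^{\perp}$ meets $\Sigma_a$ transversally there, while $k$ itself lies in $T_x\Sigma_a=\la x\ra^{\perp}$ (as $\la k,x\ra=0$) and in $T_x\la k\ra^{\perp}=\la k\ra^{\perp}$, so it is a null vector which is both tangent and normal to $\mathcal N$. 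Second, from $F^{\sharp}_{k,Z}=k\otimes\bm{Z}-Z\otimes\bm{k}$ one computes $\zeta_{F_{k,Z}}(x)=\la Z,x\ra\,k-\la k,x\ra\,Z$, hence on $\mathcal N$ simply $\zeta_{F_{k,Z}}|_{\mathcal N}=\la Z,\cdot\ra\,k$; so each such Killing vector is null and tangent to $\mathcal N$ and vanishes precisely on $\mathcal N\cap\la Z\ra^{\perp}$, which is nowhere dense in $\mathcal N$ whenever $Z\notin\la k\ra$ (any nonempty open piece of $\mathcal N$ linearly spans $\la k\ra^{\perp}$, so $\mathcal N\subset\la Z\ra^{\perp}$ would force $Z\in(\la k\ra^{\perp})^{\perp}=\la k\ra$).

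For the ``only if'' direction, let $\H$ be a MKH of order $m$. By Theorem~\ref{th:algebra}, $\dim\Kill_\H^{deg}\geq m-1\geq1$, so there is a non-trivial $\eta\in\Kill_\H^{deg}$ whose degenerate Killing horizon $\H_\eta$ satisfies $\overline{\H_\eta}=\overline{\H}$. Theorem~\ref{degenerate} then provides a non-zero null $k$ with $\H_\eta\subset\mathcal N:=\Sigma_a\cap\la k\ra^{\perp}$. Since $\mathcal N$ is closed, $\overline{\H}=\overline{\H_\eta}\subset\mathcal N$, hence $\H\subset\mathcal N$. As $\H$ and $\mathcal N$ are embedded submanifolds of $\ads_a^{n+1}$ of the same dimension $n$, the inclusion $\H\hookrightarrow\mathcal N$ is an immersion between equidimensional manifolds, hence a local diffeomorphism, hence an open map; thus $\H$ is an open subset of $\mathcal N$, as claimed.

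For the ``if'' direction, let $k$ be non-zero null and let $\H$ be a null hypersurface of $\ads_a^{n+1}$ open in $\mathcal N$ (passing to a connected component of $\mathcal N$ should it be disconnected). Since $\dim\M_\epsilon^{n+2}=n+2\geq3$, choose $Z_1,Z_2$ with $\{k,Z_1,Z_2\}$ linearly independent. By the preliminary computation, $\zeta_i:=\zeta_{F_{k,Z_i}}|_{\Sigma_a}$ is a Killing vector of $\ads_a^{n+1}$ that is null and tangent to $\mathcal N$ and nonzero on the open dense set $\mathcal N\setminus\la Z_i\ra^{\perp}$, so $\H\setminus\la Z_i\ra^{\perp}$ is a Killing horizon of $\zeta_i$ with closure $\overline{\H}$; the $\zeta_i$ are independent because the $F^{\sharp}_{k,Z_i}$ are (a relation $a_1F^{\sharp}_{k,Z_1}+a_2F^{\sharp}_{k,Z_2}=0$ forces $a_1Z_1+a_2Z_2\in\la k\ra$). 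Hence $\H$ is a MKH. For the ``moreover'' statement I would note that $Z\mapsto\zeta_{F_{k,Z}}|_{\Sigma_a}$ descends to an injective linear map $\M_\epsilon^{n+2}/\la k\ra\to\Kill_\H$ (its kernel is exactly $\la k\ra$, and $F^{\sharp}\mapsto\zeta_{F^{\sharp}}|_{\Sigma_a}$ is injective since $\Sigma_a$ affinely spans $\M_\epsilon^{n+2}$), so $\dim\Kill_\H\geq n+1$; conversely $\Kill_\H$ is non-fully degenerate, because any $\zeta_{F_{k,Z}}$ with $\la k,Z\ra\neq0$ lies in $\Kill_\H$ but is not degenerate (by Theorem~\ref{degenerate} the degenerate generators are those with $F^{\sharp}=k\otimes\bm{w}-w\otimes\bm{k}$, $\la k,w\ra=0$, and $Z$ can be congruent modulo $\la k\ra$ to such a $w\perp k$ only if $\la k,Z\ra=0$), so Corollary~\ref{coro:dim} yields $\dim\Kill_\H\leq n+1$. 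Therefore $\dim\Kill_\H=n+1$ and $\Kill_\H$ is exactly the span of the restrictions of the $\zeta_{F_{k,Z}}$.

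The only substantive input is Theorem~\ref{degenerate}; granting it, the remaining points are routine: the transversality, null character and possible disconnectedness of $\mathcal N=\Sigma_a\cap\la k\ra^{\perp}$ — so that one sometimes has to pass to a connected component — together with consistently threading the closure and connectedness clauses carried along by the definitions of Killing horizon and multiple Killing horizon. I do not anticipate any genuine obstacle beyond this bookkeeping.
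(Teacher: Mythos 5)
Your proposal is correct. For the first assertion (that $\H$ must be an open subset of $\Sigma_a \cap \la k \ra^{\perp}$) you follow essentially the paper's route: extract a degenerate generator via Theorem~\ref{th:algebra} and feed it to Theorem~\ref{degenerate}; your extra remarks on transversality and on why the equidimensional inclusion is open merely make explicit what the paper leaves implicit. Where you genuinely diverge is in the identification of $\Kill_\H$. The paper proves the inclusion $\Kill_\H \subset \{ \zeta_{F_{k,Z}}|_{\Sigma_a}\}$ directly: for any $\xi \in \Kill_\H$ it notes that $F^{\sharp}_{\xi}$ must map $\la k \ra^{\perp}$ into $\mbox{span}(k)$ (since $\xi$ is proportional to the unique normal $k$ on a set spanning $\la k \ra^{\perp}$) and then invokes the algebraic Lemma~\ref{Fkperp} to force $F^{\sharp}_{\xi} = k \otimes \bm{Z} - Z \otimes \bm{k}$. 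You instead count dimensions: the map $Z \mapsto \zeta_{F_{k,Z}}|_{\Sigma_a}$ is injective on $\M_{\epsilon}^{n+2}/\la k \ra$, producing an $(n+1)$-dimensional subspace of $\Kill_\H$; you then verify non-full degeneracy and appeal to the upper bound $\dim \Kill_\H \leq n+1$ of Corollary~\ref{coro:dim} to get equality. Both arguments are sound. The paper's version is self-contained within the appendix algebra and delivers the explicit form of every generator without using the general bound; yours is shorter but imports Corollary~\ref{coro:dim} (whose own justification is the least formal step of Section~3) and uses a slightly roundabout non-degeneracy check, which could be replaced by the one-line surface-gravity computation $\kappa = \la k, Z\ra$ that the paper records in the remark following the theorem.
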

\begin{remark}
The collection of vectors $\zeta_{F_{k,Z}}$ is obviously a vector
subspace of all Killing vectors in $\M_{\epsilon}^{n+2}$ leaving invariant the origin 
of  $\M_{\epsilon}^{n+2}$, in agreement with theorem \ref{KillH}. Define 
the equivalence relation, $Z \sim Z' \Leftrightarrow
Z - Z' \in \mbox{span} (k) $. The quotient space, denoted 
$\M_{\epsilon}^{n+2}/k$, is clearly an $(n+1)$-dimensional vector space. It turns out that 
$\Kill_\H$ is isomorphic to $\M_{\epsilon}^{n+2} /k$. Indeed, define the map
\begin{align*}
  \Psi: \M_{\epsilon}^{n+2}/k & \longrightarrow \Kill_{\H} \\
\overline{Z} & \longrightarrow \zeta_{F_{k,Z}} |_{\Sigma_a}
\end{align*}
where $Z$ is any representative in the equivalence class $\overline{Z}$.
This map is well defined (i.e. independent of the representative chosen
in the class) because for $Z' = Z + c k$, $c \in \mathbb{R}$, 
\begin{align*}
F^{\sharp}_{k,Z'} = k \otimes \bm{Z'} - Z' \otimes \bm{k}
= k \otimes (\bm{Z} + c \bm{ k}) - (Z + c k) \otimes \bm{k} =
 k \otimes \bm{Z} - Z \otimes \bm{k}
= F^{\sharp}_{k,Z}
\end{align*}
and the Killing vector  $\zeta_{F_{k,Z'}} = \zeta_{F_{k,Z}}$. The map is obviously
linear. It is also a bijection because 
$\zeta_{F_{k,Z}}  = \zeta_{F_{k,Z'}}$ agree on $\Sigma_a$ if and
only if they agree everywhere, i.e.
$F^{\sharp}_{k,Z} = F^{\sharp}_{k,Z'}$ or, explicitly,
\begin{align}
k \otimes \bm{(Z' -Z)} - (Z'-Z) \otimes \bm{k} =0.
\label{cond}
\end{align}
This clearly holds if and only if $Z'-Z$ proportional to $k$. 
We therefore conclude that the dimension of $\Kill_{\H}$ is $n+1$.
\end{remark}
\begin{proof}
$\H$ has an open and dense subset $\H_{\zeta}$ which is a degenerate Killing
horizon of $\ads_a^{n+1}$ associated to the Killing vector $\zeta$. 
By theorem \ref{degenerate}, this occurs if and only if
there exists $k \in \M_{\epsilon}^{n+2}$
null and non-zero such that $\H$ is an open subset of 
$\Sigma_a \cap \la k \ra^{\perp}$. This proves the first part of
the theorem.

In order to identify $\Kill_\H$,  let $\H_{\xi}$ be a Killing horizon 
(not necessarily degenerate) such that 
$\overline{\H_{\xi} } = \overline{\H_{\zeta}} = \overline{\H}$. Since 
$\Sigma_a \cap \la k \ra^{\perp}$ is closed,  we also have 
$\overline{\H_{\xi}} \subset \Sigma_a \cap \la k \ra^{\perp}$. Let
$F_{\xi}^{\sharp}$ be the endomorphism in $\M_{\epsilon}^{n+2}$ such that
$\xi|_x = F_{\xi}^{\sharp}(x)$, $\forall x \in \Sigma_a$. 
Up to scaling, $k$ is the only normal to $\H_{\xi} \subset \Sigma_a$.
Thus, it must be that at any point $x \in \H_{\xi}$, 
$F_{\xi}^{\sharp} (x) = Z|_x k$ holds, where $Z|_x$ is a non-zero
real number (it may depend on $x \in \H_{\xi}$). Since $\H_{\xi}$ is an
open subset of $\Sigma_a \cap \la k \ra^{\perp}$, it follows that
$\mbox{span}( \H_{\xi} ) = \la k \ra^{\perp}$.
By linearity of $F_{\xi}^{\sharp}$ it follows
\begin{align}
F^{\sharp}_{\xi} (w) = Z|_w k, \quad \quad \forall w \in  \la k\ra^{\perp}.
\label{Fk}
\end{align}
We may apply Lemma \ref{Fkperp} to conclude
that $F_{\xi}^{\sharp}$ is given as
in (\ref{FormFkv}) and hence any
$\xi \in \Kill_\H \setminus \{ \bm{0} \}$ must be 
the restriction to $\Sigma_a$ of $\zeta_{F_{k,Z}}$, as claimed in the theorem.
Conversely, 
any $F^{\sharp}_{k,Z}$ of this form with $Z$ and $k$ linearly independent defines
a Killing vector in $\M_{\epsilon}^{n+2}$ which, when restricted to 
$\Sigma_a \cap  \la k \ra^{\perp}$ gives a null, tangent vector.
Combined with the fact that  when $Z = \alpha k$, $\alpha \in \mathbb{R}$ 
we have $F^{\sharp}_{k,Z} =0$ and therefore $\zeta_{F_{k,Z}} =0$ we conclude that
\begin{align*}
\Kill_\H = \{ \zeta_{F_{k,Z}} |_{\Sigma_a} \}
\end{align*}
and the theorem is proved.
\end{proof}

\begin{remark}
The Killing horizon of $\zeta_{F_{k,Z}} |_{\Sigma_a}$ is (any open
subset of)
\begin{align*}
  \H_{k,Z} = \{ x \in \Sigma_a \cap \la k \ra^{\perp} \mbox{ such that }
\la Z, x \ra \neq 0\}.
\end{align*}
To compute the surface gravity we first note that the square norm
of $\zeta_{F_{k,Z}}$ is
\begin{align*}
\la Z, Z \ra \la k,x \ra^2 - 2 \la k,Z \ra
\la k,x \ra \la Z,x \ra
\end{align*}
whose  gradient evaluated at $x \in \H_{k,v}$ reads
\begin{align*}
 2 \la k ,Z \ra \la Z, x \ra k.
\end{align*}
Given that (at such $x$) $\zeta_{F_{k,Z}} |_x = k \la Z, x \ra$
we conclude from (\ref{SurGrav}) 
\begin{align*}
\kappa_{\H_{k,Z}} = \la k , Z \ra.
\end{align*}
Note that when $Z$ and $k$ are orthogonal, the surface gravity is zero and
we recover the degenerate Killing horizon of theorem \ref{degenerate}.
\end{remark}

\subsection{The Minkowski case}
Using the same notation as above, the general Killing vector $\zeta$ of $\Mink$ is
\begin{align}
\zeta_{z,F^{\sharp}} |_x  = z + F^{\sharp}(x)
\label{KVMink}
\end{align}
where $z \in \Mink$, $F^{\sharp}: \Mink \longrightarrow \Mink$ is a
skew-symmetric endomorphism.
As in the previous subsection, we start with the the classification
of degenerate Killing horizons. 

\begin{theorem}
\label{DegMink}
Let $\H$ be a degenerate Killing horizon of a Killing vector $\zeta$ in
$\Mink$. Then, and only then, one of the two following possibilities hold:
\begin{itemize}
\item[(a)] There exists $z,z' \in \Mink$ with $z$ null and non-zero
such that
$\zeta = z$ and $\H$ is an open subset of the hyperplane $\H_{z',z} := z'
+ \la z \ra^{\perp}$.
\item[(b)] There exist $A \in \mathbb{R}$ and 
$k, w, z' \in \Mink$ with $\{ k,w\}$ linearly independent,
$k$ null, $w$ spacelike and orthogonal to $k$ and $z'$ arbitrary,
such that
\begin{align}
\zeta |_x = A k + k \la w, x \ra - w  \la k, x -z' \ra
\label{formxi}
\end{align}
and  $\H$ is an open subset of the hypersurface 
\begin{align}
(z' + \la k \ra^{\perp}) \setminus S_w
\label{KillHorMin} 
\end{align}
where $S_w$ is the closed, codimension-two null plane defined by
\begin{align}
S_w := - A \la w,w\ra^{-1} w + \mbox{span} (k,w)^{\perp}.
\label{defSw}
\end{align}
\end{itemize}
\end{theorem}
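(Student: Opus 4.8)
The plan is to follow the template of the (A)dS proof: write the candidate Killing vector in the explicit form (\ref{KVMink}), turn ``$\H$ is a degenerate Killing horizon of $\zeta$'' into purely algebraic conditions on the pair $(z,F^\sharp)$, and then classify the skew-symmetric endomorphisms $F^\sharp$ that can occur by means of the canonical-form lemmas collected in Appendix~\ref{ApB}. The starting observation is that, because $\nabla_\mu\zeta_\nu=(F^\sharp)_{\mu\nu}$ is constant, one has $\nabla_\zeta\zeta=F^\sharp(\zeta)$ everywhere, hence $\mathrm{grad}\,g(\zeta,\zeta)=-2F^\sharp(\zeta)$; comparing with (\ref{SurGrav}), and using that $\zeta$ is automatically tangent to the level sets of $g(\zeta,\zeta)$, the Killing horizon $\H$ is \emph{degenerate} if and only if $g(\zeta,\zeta)=0$, $\zeta\neq0$ on $\H$ and $F^\sharp(\zeta|_x)=F^\sharp z+(F^\sharp)^2x=0$ for every $x\in\H$.

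The last condition says $\H$ is contained in the affine subspace $\mathcal{A}:=\{x\in\Mink:(F^\sharp)^2x=-F^\sharp z\}$, which is therefore non-empty and of dimension at least $n$; since its direction space is $\ker(F^\sharp)^2$, this forces $\mathrm{rank}\,(F^\sharp)^2\leq1$. A first lemma from Appendix~\ref{ApB}, valid on any Lorentzian vector space, is that $(F^\sharp)^2=0$ implies $F^\sharp=0$ (its image would be a totally null subspace contained in its kernel, hence at most one-dimensional). In the case $F^\sharp=0$ we get $\zeta=z$ constant, so $g(\zeta,\zeta)=\langle z,z\rangle$ must vanish on the non-empty $\H$: then $z$ is null and non-zero, and a null hypersurface to which the covariantly constant null vector $z$ is everywhere tangent has all its null normals, hence all its generators, parallel to $z$, so it is totally geodesic with parallel generators and thus an open subset of a null hyperplane orthogonal to $z$. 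This gives alternative (a).

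It remains to treat $\mathrm{rank}\,(F^\sharp)^2=1$. Then $\mathcal{A}$ is a hyperplane and $\H$ is open in it; since $\H$ is null the direction space $\ker(F^\sharp)^2$ must be degenerate, i.e. $\ker(F^\sharp)^2=\langle k\rangle^\perp$ for a null, non-zero $k$, and by symmetry of $(F^\sharp)^2$ this gives $\mathrm{Im}\,(F^\sharp)^2=\langle k\rangle$ and $(F^\sharp)^2x=\mu\langle k,x\rangle k$ with $\mu\neq0$. The second, and main, ingredient from Appendix~\ref{ApB} is that a skew $F^\sharp$ with this property is necessarily a null rotation, $F^\sharp=k\otimes\bm{w}-w\otimes\bm{k}$ with $w$ spacelike and orthogonal to $k$ (so $\mu=-\langle w,w\rangle$): in the invariant-subspace decomposition of a skew endomorphism of $\Mink$, any boost block or rotation block would contribute rank two to $(F^\sharp)^2$, and two independent null directions cannot be mutually orthogonal in Lorentzian signature, so only a single null-rotation block can be present. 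This classification of the admissible $F^\sharp$ is the step I expect to require the most work, exactly as in the (A)dS case.

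Finally I would plug this form of $F^\sharp$ back into the remaining conditions. Non-emptiness of $\mathcal{A}$ already forces $\langle k,z\rangle=0$, so $\mathcal{A}$ is the null hyperplane $\{\langle k,x\rangle=\mathrm{const}\}$; imposing that $g(\zeta,\zeta)$ vanish identically on it kills the component of $z$ orthogonal to $\mathrm{span}(k,w)$, so that $z\in\mathrm{span}(k,w)$, which rearranges precisely to $\zeta|_x=Ak+k\langle w,x\rangle-w\langle k,x-z'\rangle$ as in (\ref{formxi}) for suitable $A\in\mathbb{R}$ and $z'\in\mathcal{A}$. Recording where $\zeta$ vanishes inside $\mathcal{A}$ then identifies this locus with the codimension-two null plane (\ref{defSw}), so $\H$ is an open subset of $(z'+\langle k\rangle^\perp)\setminus S_w$; this is alternative (b). The converse—that the Killing vectors listed in (a) and (b) genuinely possess the stated degenerate Killing horizons—is a direct computation using $g(\zeta,\zeta)$ and $\kappa_\zeta\zeta=F^\sharp(\zeta)$.
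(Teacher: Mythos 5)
Your proposal is correct and follows essentially the same route as the paper's proof: reduce degeneracy to $F^{\sharp}(\zeta)=0$ on $\H$, deduce $\mbox{rank}(F^{\sharp}\circ F^{\sharp})\le 1$ from the fact that the $n$-dimensional $\H$ sits inside an affine subspace with direction space $\mbox{Ker}(F^{\sharp}\circ F^{\sharp})$, and settle the two cases via Lemma \ref{FsqLemma} and Lemma \ref{rank1}, yielding $F^{\sharp}=0$ or $F^{\sharp}=k\otimes\bm{w}-w\otimes\bm{k}$. The only (harmless) deviation is how $z$ is pinned down in case (b): the paper imposes tangency, i.e.\ $\zeta|_x\propto k$ on the hyperplane, and reads off $z=Ak+\la k,z'\ra w$ directly, whereas you impose nullity of $\zeta$ there (a single scalar condition, since $g(\zeta,\zeta)$ is constant on that hyperplane once $\la k,z\ra=0$) and need an extra Cauchy--Schwarz step in a spacelike complement of $k$ inside $\la k\ra^{\perp}$ to conclude $z\in\mbox{span}(k,w)$ --- both arguments work and lead to \eqref{formxi}.
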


\begin{proof}
Let $\lambda_{\zeta} := - g^\flat(\zeta,\zeta)$.  The degenerate
Killing horizon  $\H$ must be  a subset of 
$\{ \lambda_{\zeta} =0\} \cap \{ \mbox{grad} (\lambda_{\zeta})=0 \}$. 
Let $z$, $F^{\sharp}$ be such that
$\zeta = \zeta_{z,F^{\sharp}}$.   The square norm of $\zeta$ is
\begin{align*}
- \lambda_{\zeta} = \la z, z \ra + 2 \la x, F^{\sharp} (x) \ra
 \la F^{\sharp}(x), F^{\sharp}(x) \ra
\end{align*}
and the gradient
\begin{align*}
- \mbox{grad} (\lambda_{\zeta}) = - 2 F^{\sharp} (z + F^{\sharp}(x)).
\end{align*}
This implies that for any 
$x_1, x_2 \in \H$
\begin{align*}
(F^{\sharp} \circ F^{\sharp}) (x_1 - x_2 ) = 0
\end{align*}
holds, so $x_1 -x_2$ belongs to the kernel of $F^{\sharp} \circ F^{\sharp}$. In
particular, the tangent space $T_x \H$ at any $x \in \H$ must satisfy
\begin{align*}
T_x \H \subset \mbox{Ker} (F^{\sharp} \circ F^{\sharp} ).
\end{align*}
Since $T_x \H$ is $n$-dimensional it must be 
that $\mbox{dim} (\mbox{Ker} (F^{\sharp} \circ F^{\sharp} ))$ is either
$n$ or $n+1$.
In the latter case, called (a) Lemma \ref{FsqLemma} implies $F^{\sharp}=0$, so that  $\zeta = z$ with $\la z, z
\ra =0$. 
Thus,  $\zeta$ is null and non-zero everywhere  and
$\Mink$ is foliated by Killing horizons of $\zeta$ defined as the
hypersurfaces orthogonal to $z$, i.e.
the hyperplanes $z' + \la z \ra^{\perp}$, $z' \in \Mink$. This proves 
case (a) of the theorem.

Consider next case (b), defined by the condition that
$F^{\sharp} \circ F^{\sharp}$ has rank one, 
or equivalently, there is $k \in \Mink$ non-zero such that
$F^{\sharp} \circ F^{\sharp} = \mu k \otimes \bm{k}$, $\mu \neq 0$.
The kernel of $F^{\sharp} 
\circ F^{\sharp}$ (namely $\la k \ra^{\perp}$)  must contain the null hyperplane
$T_x \H, x \in \H$, so $k$ must be null
and   $T_x \H = \la k \ra^{\perp}$ for all $x \in \H$. Thus, $\H$ must be a subset of one of the hyperplanes
normal to $k$. In other words, there is $z` \in \Mink$ such that
$\H$ is an open subset of 
$\H_{z',k} := z' + \la k \ra^{\perp}$.  To impose the condition
that $\zeta$ is null and tangent to $\H$, we need the form of $F^{\sharp}$.
We apply Lemma \ref{rank1} and 
find that there exists
$w \in \Mink$, orthogonal to, and linearly 
independent of, $k$ such that
\begin{align*}
F^{\sharp} = k \otimes \bm{w} - w \otimes \bm{k}.
\end{align*}
Note that in Lorentzian signature $w$ is necessarily spacelike, so 
$\mu  = - \la w, w \ra <0$. Evaluating $\zeta$ at  $x \in 
\H \subset \H_{z',k}$ one finds
\begin{align*}
\zeta |_x = z + F^{\sharp} (x) = z + k \la w, x \ra - w \la k, x \ra
= z + k \la w, x \ra - w \la k,z' \ra.
\end{align*}
This vector is proportional to the normal of $\H$ (i.e. to $k$) if and only if
$z = w \la k,z' \ra + A k$ for some $A \in \mathbb{R}$. This shows (\ref{formxi}).
To prove  (\ref{KillHorMin}) we simply note that 
$\zeta|_x$, as given in (\ref{formxi})  vanishes at $x \in \H_{z',k}$ 
if and only if   $A + \la w,x \ra =0$. Write $x = - A \la w,w\ra^{-1} w + y$
and this condition becomes $\la w, y \ra =0$, as claimed in the proposition.
The ``only then'' part in case (b) is immediately checked.
\end{proof}

We can now classify the MKHs in the
$(n+1)$-dimensional Minkowski spacetime.

\begin{theorem}
Let $\Mink$ be the Minkowski spacetime of 
dimension $n+1 \geq 2$.
A null hypersurface $\H$ embedded in $\Mink$
is a multiple Killing horizon if and only
if $\H$ is an open subset of a hyperplane
$z' + \la k \ra ^{\perp}$ with $z', k \in \Mink$ and 
$k$ is null and non-zero.  Moreover, $\Kill_\H$ is given by
\begin{align*}
\Kill_\H = \{ \zeta_{A,Z}|_x =  A k + Z \la k,z' \ra +  k \la Z,x \ra - 
Z \la k, x \ra, \quad \quad A \in \mathbb{R}, Z\in \Mink \}.
\end{align*}
\end{theorem}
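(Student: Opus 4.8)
The plan is to mirror the proof of the $\ads$ case, using Theorem \ref{DegMink} (the classification of degenerate Killing horizons in $\Mink$) in place of Theorem \ref{degenerate}. First I would establish the ``only if'' direction: a MKH $\H$ contains, by definition and by Theorem \ref{KillH}/Theorem \ref{th:algebra}, an open dense subset that is a \emph{degenerate} Killing horizon of some $\zeta\in\Kill_\H^{deg}$. Apply Theorem \ref{DegMink} to this $\zeta$. In case (a), $\zeta=z$ with $z$ null and $\H$ an open subset of $z'+\la z\ra^\perp$, which is already of the required form (with $k=z$). In case (b), $\zeta$ has the form (\ref{formxi}) and $\H$ is an open subset of $z'+\la k\ra^\perp$ with $k$ null, again of the required form. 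Conversely, any hyperplane $z'+\la k\ra^\perp$ with $k$ null is a Killing horizon of the null translation $\zeta_{k,0}|_x=k$, which is nowhere zero, hence a (degenerate) Killing horizon; and one checks it is also a Killing horizon for at least one more independent Killing vector (e.g.\ a suitable boost), so $m\geq 2$ and $\H$ is a MKH. This gives the first assertion.

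Next I would identify $\Kill_\H$. Let $\H_\xi$ be any Killing horizon (degenerate or not) with $\overline{\H_\xi}=\overline\H$. Since $z'+\la k\ra^\perp$ is closed, $\overline{\H_\xi}\subset z'+\la k\ra^\perp$, so $\xi$ is everywhere proportional to the constant normal $k$ along $\H_\xi$, and $\mathrm{span}$ of (the tangent directions of) $\H_\xi$ is $\la k\ra^\perp$. Writing $\xi=\zeta_{z,F^\sharp}$, the condition that $\xi|_x\in\mathrm{span}(k)$ for all $x$ in an open subset of $z'+\la k\ra^\perp$ forces, after differentiating, $F^\sharp(w)\in\mathrm{span}(k)$ for all $w\in\la k\ra^\perp$. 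This is exactly the hypothesis of the relevant lemma from Appendix \ref{ApB} (the Minkowski analogue of Lemma \ref{Fkperp}; here one may need the version allowing $\mathrm{Ker}(F^\sharp\circ F^\sharp)$ to have rank $n$ or $n+1$), which yields $F^\sharp=k\otimes\bm{Z}-Z\otimes\bm{k}$ for some $Z\in\Mink$ (with $Z=0$ corresponding to $F^\sharp=0$, i.e.\ the pure-translation case). Feeding this back into $\zeta_{z,F^\sharp}$ and imposing tangency to $z'+\la k\ra^\perp$ (which fixes the translation part $z$ up to the freedom $z\mapsto z+Ak$, exactly as in the proof of Theorem \ref{DegMink}(b)) produces precisely
\begin{align*}
\xi|_x = A k + Z\la k,z'\ra + k\la Z,x\ra - Z\la k,x\ra,
\end{align*}
with $(A,Z)\in\mathbb{R}\times\Mink$. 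Conversely every vector of this form is a genuine Killing vector of $\Mink$ that is null and tangent along $z'+\la k\ra^\perp$, and two of them coincide on that hyperplane iff they coincide everywhere, so the parametrization is the claimed $\Kill_\H$.

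Finally I would record the dimension count: the map $(A,Z)\mapsto\zeta_{A,Z}$ from $\mathbb{R}\times\Mink$ to $\Kill_\H$ is linear and surjective by the above, and its kernel consists of those $(A,Z)$ giving the zero Killing vector; since $Z$ contributes a skew endomorphism $k\otimes\bm Z-Z\otimes\bm k$ which vanishes iff $Z\in\mathrm{span}(k)$, and then $A k + Z\la k,z'\ra=0$ pins down $A$, the kernel is one-dimensional, so $\dim\Kill_\H=(n+2)-1=n+1$, consistent with Corollary \ref{coro:dim} and with the fact that $\Mink$ is maximally symmetric. The main obstacle is the linear-algebra step: extracting the precise form of $F^\sharp$ from the condition $F^\sharp(\la k\ra^\perp)\subset\mathrm{span}(k)$ with $k$ null. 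This is delicate because $k$ lies in its own orthogonal complement, so one cannot simply diagonalize; it requires the dedicated lemmas on skew-symmetric maps in Lorentzian signature collected in Appendix \ref{ApB} (in particular the rank-one analysis of $F^\sharp\circ F^\sharp$), and care must be taken that the argument covers both $F^\sharp=0$ and $F^\sharp\neq0$ uniformly. Everything else is a routine transcription of the $\ads$ proof with $\Sigma_a$ replaced by an affine hyperplane and with the translational freedom bookkept explicitly.
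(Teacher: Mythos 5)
Your proposal is correct and follows essentially the same route as the paper: reduce to a degenerate Killing vector via Theorem \ref{th:algebra}, apply Theorem \ref{DegMink} to locate $\H$ in a null hyperplane $z'+\la k\ra^{\perp}$, then characterize any further Killing vector sharing the horizon by showing $F^{\sharp}$ maps $\la k\ra^{\perp}$ into $\mbox{span}(k)$ and invoking Lemma \ref{Fkperp} (which already covers the $F^{\sharp}=0$ case uniformly, so no separate version is needed), with the translation part fixed up to $z\mapsto z+Ak$. The only cosmetic differences are that you verify the converse of the first assertion directly with a boost, whereas the paper lets it follow from the explicit description of $\Kill_\H$, and your dimension count reproduces the paper's subsequent remark.
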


\begin{proof}
$\H$ has an open and dense subset $\H_{\zeta}$ which is a degenerate
Killing horizon of $\Mink$.
By Theorem \ref{DegMink} we know that $\H_{\zeta}$ is an open subset of a hyperplane 
$\H_{z',k} := z' + \la k \ra^{\perp}$ where $k \neq 0$ is null.  To show that 
$\H$ is a multiple horizon (and also to determine $\Kill_{\H}$)
we need to find the most general Killing vector
$\zeta_{z,F^{\sharp}}$ admitting a Killing horizon, denoted by
$\H_{z,F^{\sharp}}$  such that
$\overline{\H_{z,F^{\sharp}}} = \overline{\H_{\zeta}} = \overline{\H}$. 
$\H_{z,F^{\sharp}}$ is an open subset of $z' + \la k \ra^{\perp}$, 
so the condition that $\zeta$ is null and tangent to $\H_{z,F^{\sharp}}$
on $\H_{z,F^{\sharp}}$, namely
\begin{align*}
\zeta_{z,F^{\sharp}} |_x = z + F^{\sharp} (x)  = f|_x k, \quad \quad \forall 
x \in \H_{z,F} 
\end{align*}
must hold. By linearity this relation extends
to all $c + \la k \ra^{\perp}$. Thus, for all
$X \in \la k \ra^{\perp}$ 
\begin{align*}
F^{\sharp} (X) = f|_X k - F^{\sharp}(z') -  z
\end{align*}
holds.
This applies, in particular to
$X=0$ from which $z = f|_0 k - F^{\sharp}(z')$
and thus
\begin{align*}
F^{\sharp} (X) = (f|_X - f|_0) k, \quad \quad \forall X \in \la k \ra^{\perp}.
\end{align*}
which allows us to conclude that there is $Z \in \Mink$
such that
\begin{align*}
F^{\sharp} = k \otimes \bm{Z} - Z \otimes \bm{k}.
\end{align*}
Note that this implies $z = f|_0 k - F^{\sharp} (z') = f|_0 k - k \la Z, z'\ra 
+ Z \la k,z' \ra = A k + Z \la k,z' \ra$, after redefining $A:= f|_0 
- \la Z, z' \ra$. We have proved the inclusion 
\begin{align*}
\Kill_{\H} \subset \{ \zeta|_x =  A k + Z \la k,z' \ra +  k \la Z,x \ra - 
Z \la k, x \ra, \quad \quad A \in \mathbb{R}, Z \in \Mink \}.
\end{align*}
The reverse  inclusion (and hence equality)
is immediate, since the Killing  vector $\zeta_{A,Z}$ (with obvious notation)
 is tangent and null at the hyperplane $z' + \la k \ra^{\perp}$
and vanishes only on the lower dimensional subset
\begin{align*}
S_{A,Z} := \{ Y \in  z' + \la k \ra^{\perp} ; A + \la Z,Y\ra =0 \}.
\end{align*}
\end{proof}

\begin{remark}
Two Killing vectors $\zeta_{A,Z}$ and $\zeta_{A',Z'}$ agree iff and only if $Z' - 
Z = a k$ and $A' = A - a \la k ,z' \ra$, for some arbitrary constant
$a$.  Thus, the dimension
of $\Kill_\H$ is $n+1$.
\end{remark}

The surface gravity of the Killing horizon associated to
$\zeta_{A,Z}$ is computed easily as follows
\begin{align*}
- \lambda_{A,Z} := g^\flat (\zeta_{A,Z},\zeta_{A,Z}) 
=  \la Z, Z \ra \la k , z' - x \ra^2
 + 2 \la k, Z \ra \left ( A + \la Z, x \ra \right ),
\la k , z'- x\ra 
\end{align*}
so its gradient is
\begin{align*}
\mbox{grad} (\lambda_{A,Z} )
=  2 \la Z, Z \ra \la k, z'- x \ra k
- 2 \la k, Z \ra \la k , z'- x \ra Z + 
2 \la k ,Z \ra  \left ( A + \la Z, x \ra \right ) k,
\end{align*}
which evaluated on $z' + \la k \ra^{\perp}$ gives
\begin{align*}
\mbox{grad} (\lambda_{A,Z} ) |_{z' + \la k \ra^{\perp}} 
= 2 \la k ,Z \ra \left ( A + \la Z, x \ra \right ) k
= 2 \la k, Z \ra \zeta_{A,Z} |_{z'+ \la k \ra^{\perp}}
\end{align*}
and the surface gravity is $\kappa_{A,Z} = \la k, Z \ra$.

\section*{Acknowledgments}
MM acknowledges financial support under projects
FIS2015-65140-P (Spanish MINECO/FEDER) and
SA083P17 (Junta de Castilla y Le\'on).
TTP acknowledges financial support by the Austrian Science Fund (FWF)
P~28495-N27.
JMMS is supported under Grants No. FIS2017-85076-P (Spanish MINECO/AEI/FEDER, UE) and No. IT956-16 (Basque Government).

\appendixpage
\appendix


\section{Lower bound on co-dimension of fixed-point sets for Killing vectors}\label{app:1}
Here we recall the following well-known fact, which we nevertheless
prove for completeness. A Killing vector
is {\bf non-trivial} if it is not the zero vector field.
\begin{lemma}
\label{codimension_two}
Let $(M,g)$ be an $(n+1)$-dimensional spacetime and $\xi$ a non-trivial Killing
vector. Then the set of zeros of $\xi$ has co-dimension at least two.
\end{lemma}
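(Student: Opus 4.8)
The plan is to work locally around an arbitrary zero $p$ of $\xi$ and to show that, in a neighbourhood of $p$, the zero set of $\xi$ is contained in a totally geodesic submanifold of co-dimension at least two. Two ingredients enter: the rigidity of Killing fields (a Killing vector is determined by its $1$-jet at a point), together with a small piece of linear algebra about skew-symmetric endomorphisms of a Lorentzian vector space.

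First I would record that, at any zero $p$ of a \emph{non-trivial} Killing vector, the endomorphism $A_p\in\mathrm{End}(T_pM)$ defined by $A_p(v):=\nabla_v\xi|_p$ is non-zero. Indeed, if both $\xi|_p=0$ and $\nabla\xi|_p=0$, then along any geodesic $\gamma$ issuing from $p$ the pair $(\xi\circ\gamma,\nabla\xi\circ\gamma)$ solves a linear first-order ODE — this is precisely the integrated content of the identity (\ref{D2xi}), which expresses $\nabla\nabla\xi$ algebraically in $\xi$ — with vanishing initial data, hence vanishes identically; thus $\xi$ and $\nabla\xi$ vanish on a normal neighbourhood of $p$. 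The subset of $M$ on which both $\xi$ and $\nabla\xi$ vanish is therefore open, obviously closed, and non-empty by assumption, so by connectedness of $M$ it is all of $M$, contradicting non-triviality.

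Next I would linearise the flow. Since $\xi|_p=0$, the local flow $\phi_t$ of $\xi$ fixes $p$ and consists of local isometries; hence in geodesic normal coordinates centred at $p$ it acts linearly, with differential $d\phi_t|_p$ a one-parameter group of $g_p$-isometries whose infinitesimal generator is, up to sign, $A_p$ (and $A_p$ is skew-symmetric with respect to $g_p$ because $\nabla_\mu\xi_\nu$ is antisymmetric). Consequently, writing $U$ for a normal neighbourhood of $p$, if $q=\exp_p(v)\in U$ is a zero of $\xi$ then $\exp_p\big(\exp(tA_p)v\big)=\phi_t(q)=q=\exp_p(v)$ for all small $t$, so $\exp(tA_p)v=v$ and hence $A_pv=0$. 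Thus the zero set of $\xi$ inside $U$ is contained in $\exp_p(\ker A_p)$, a submanifold of dimension $\dim\ker A_p=(n+1)-\mathrm{rank}\,A_p$.

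It remains to prove $\mathrm{rank}\,A_p\ge 2$; this is the only genuinely delicate point, since in Lorentzian signature one must rule out a rank-one skew endomorphism. Suppose $\mathrm{rank}\,A_p=1$, so $A_pv=\phi(v)\,u$ for a non-zero $u\in T_pM$ and a non-zero linear functional $\phi$ on $T_pM$. Skew-symmetry of $A_p$ gives $\phi(v)\,g_p(u,w)=-\phi(w)\,g_p(u,v)$ for all $v,w$; choosing $v$ with $\phi(v)=1$ yields $g_p(u,w)=-g_p(u,v)\,\phi(w)$ for all $w$, and then taking $w=v$ forces $g_p(u,v)=-g_p(u,v)$, i.e. $g_p(u,v)=0$, whence $g_p(u,\cdot)\equiv 0$ — impossible since $g_p$ is non-degenerate and $u\neq 0$. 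Hence $\mathrm{rank}\,A_p\neq 1$, and combined with $A_p\neq 0$ from the first step this gives $\mathrm{rank}\,A_p\ge 2$; therefore the zero set near $p$ has co-dimension at least two, and since $p$ was an arbitrary zero, so does the whole zero set of $\xi$. The main obstacle is exactly this last linear-algebra step: the remaining parts are the standard Killing rigidity argument, but the Lorentzian inner product admits null vectors, so the familiar Euclidean fact that skew matrices have even rank is not available and needs the short direct argument above.
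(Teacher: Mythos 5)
Your proof is correct, and its two essential ingredients are exactly those of the paper's proof: the linear-algebra fact that a skew-symmetric endomorphism of a vector space with non-degenerate inner product cannot have rank one (your computation with $A_pv=\phi(v)\,u$ is the same as the paper's argument that $G^\sharp=k\otimes\bm{a}$ contradicts skew-symmetry), and the rigidity statement that $\xi|_p=0$ together with $\nabla\xi|_p=0$ forces $\xi\equiv 0$. The one place where you genuinely diverge is in how you control the zero set near $p$: the paper invokes Kobayashi's theorem that the fixed-point set of a Killing vector is a finite collection of smooth embedded submanifolds, and then observes that the tangent space of such a submanifold at $p$ lies in $\ker(\nabla\xi|_p)$; you instead linearize the isometry flow in normal coordinates and show directly that every zero $\exp_p(v)$ in a normal neighbourhood satisfies $A_pv=0$, so the zero set is locally contained in $\exp_p(\ker A_p)$. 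Your route is self-contained (no external citation needed) and in fact slightly stronger locally, since it bounds the zero set by a single totally geodesic submanifold of codimension $\mathrm{rank}\,A_p\ge 2$ rather than relying on the a priori smooth-submanifold structure; the paper's route is shorter on the page at the cost of the reference. The logical organization also differs cosmetically (you argue directly, establishing $A_p\neq 0$ first; the paper argues by contradiction from $\mathrm{codim}\le 1$), but this does not change the substance.
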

\begin{proof}
We consider the relevant case $n \geq 1$. We know that the zeros of a Killing vector
form a finite collection of smooth embedded submanifolds ${\mathcal S_i}$
\cite{Kobayashi}. Let $p$ be a point in one of them, say ${\mathcal S}_1$ and
assume that $\mbox{dim} ({\mathcal S}_1) \geq  n$. Let $G^{\sharp}$ be the
endomorphism $T_p M \longrightarrow T_p M$ defined by
$g(G^{\sharp}(Z),Z') =  d \bm{\xi} (Z,Z'), \hspace{2mm} \forall Z,Z'\in T_pM$. Since 
$d \bm{\xi} |_p$ is a two-form in $T_p M$, $G^{\sharp}$ is skew symmetric 
with respect to $g|_p$. The tangent plane $T_p {\mathcal S}_1$
lies in the kernel of $G^{\sharp}$, so its dimension is at least $n$,
or equivalently rank $(G^\sharp)\in\{0,1\}$. If rank $G^\sharp =1$ then $G^\sharp = k \otimes \bm{a}$ for some vector $k\in T_pM$ and some one-form $\bm{a}\in \Lambda_p M$, which is clearly incompatible with the skew-symmetry of $G^\sharp$ ---as $g$ is non-degenerate--- unless $\bm{a}=0$.
Thus, $G^{\sharp} =0$. i.e. $d\bm{\xi}|_p =0$. This immediately implies that $\xi$ is a trivial Killing vector.
\end{proof}

The previous theorem can be considered to hold for $n=0$ too if the statement is understood as saying that
$\xi$ cannot have zeros. For assume
$p \in M$ were a fixed point of
$\xi$ and select a coordinate chart  $\{ x\}$ containing $p$, with $x_p := x(p)$.
The metric can be written as $g = j(x) dx^2$, with
$j$ non-zero in the domain of the chart. The Killing could be written as $\xi 
= l (x) \partial_x$ with $l(x_p)=0$. 
The condition of being a Killing vector is
\begin{align*}
\pounds_\xi( g) = 0 \quad \quad \Longleftrightarrow \quad \quad
 l \frac{dj}{dx} - 2 j \frac{dl}{dx} =0.
\end{align*}
Since $l(x_p)=0$, uniqueness of solutions of ODE would imply $l(x) = 0$ everywhere, so the Killing would be trivial.

\section{Proof of Theorem \ref{degenerate}}
\label{ApB}

In order to prove Theorem \ref{degenerate}
we need several  algebraic lemmas
on skew symmetric linear maps. 
Several of these results are likely to be known in the 
mathematics literature, but they are not standard
knowledge in the relativity community (given that they involve
various signatures). So we provide a proof
for completeness.

\begin{lemma}
\label{TotallyDeg}
Let $(V, g^{\flat})$ be an $n$-dimensional vector space and $g^{\flat}$ a pseudo-riemannian
inner product of signature $\{p,q\}$. Let  
$\Pi$ be a linear subspace with the property that $g^{\flat}$ restricted to $\Pi$
is identically zero (we call such spaces {\bf totally degenerate}).
Then the dimension of $\Pi$ is bounded above by
$\mbox{min}(p,q)$, and this bound is sharp.
\end{lemma}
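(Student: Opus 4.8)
The plan is to prove the bound by exhibiting two injective linear maps out of $\Pi$, using the orthogonal splitting of $V$ into a maximal negative-definite and a maximal positive-definite subspace, and then to prove sharpness by an explicit construction.

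First I would fix, by Sylvester's law of inertia, an orthogonal decomposition $V = V_- \oplus V_+$ with $g^{\flat}|_{V_-}$ negative definite, $\dim V_- = p$, and $g^{\flat}|_{V_+}$ positive definite, $\dim V_+ = q$, and let $\pi_\pm : V \to V_\pm$ be the corresponding projections. Before anything else I would note that the hypothesis ``$g^{\flat}$ restricted to $\Pi$ is identically zero'' is, over $\mathbb{R}$, equivalent to $g^{\flat}(v,v) = 0$ for all $v \in \Pi$, by the polarization identity $2 g^{\flat}(u,v) = g^{\flat}(u+v,u+v) - g^{\flat}(u,u) - g^{\flat}(v,v)$; the diagonal form is what the argument actually uses.

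Next comes the key step. Take $v \in \Pi$ and write $v = \pi_-(v) + \pi_+(v)$. Orthogonality of the splitting and $g^{\flat}(v,v) = 0$ give $g^{\flat}(\pi_-(v),\pi_-(v)) + g^{\flat}(\pi_+(v),\pi_+(v)) = 0$, where the first summand is $\le 0$ and the second $\ge 0$. Hence if $\pi_-(v) = 0$ then $g^{\flat}(\pi_+(v),\pi_+(v)) = 0$, which forces $\pi_+(v) = 0$ by positive-definiteness of $g^{\flat}|_{V_+}$, so $v = 0$. Thus $\pi_-|_\Pi$ is injective and $\dim \Pi \le \dim V_- = p$; the symmetric argument (swapping the roles of $V_-$ and $V_+$) yields $\dim \Pi \le q$, and therefore $\dim \Pi \le \min(p,q)$.

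For sharpness I would assume without loss of generality $p \le q$, choose bases $\{e_i\}_{i=1}^{p}$ of $V_-$ with $g^{\flat}(e_i,e_j) = -\delta_{ij}$ and $\{f_j\}_{j=1}^{q}$ of $V_+$ with $g^{\flat}(f_i,f_j) = \delta_{ij}$, and set $\Pi := \mathrm{span}\{\, e_i + f_i : 1 \le i \le p \,\}$. These vectors are linearly independent, and $g^{\flat}(e_i+f_i, e_j+f_j) = -\delta_{ij} + \delta_{ij} = 0$, so $\Pi$ is totally degenerate of dimension exactly $\min(p,q)$. Honestly, the argument is routine; the only points requiring a little care are the polarization remark reconciling the two forms of the degeneracy condition, and the degenerate extreme cases $p = 0$ or $q = 0$, where the bound correctly reduces to $\dim \Pi = 0$, consistent with a definite inner product having no nonzero null vectors.
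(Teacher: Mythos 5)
Your proof is correct and rests on the same idea as the paper's: decompose $V$ orthogonally into definite subspaces and observe that a vector of $\Pi$ whose negative-definite component vanishes must vanish entirely, so that $\dim\Pi$ is controlled by the definite factors; the sharpness construction is identical. You package the key step as injectivity of the projection $\pi_-|_{\Pi}$, which is a slightly cleaner and more symmetric formulation than the paper's contradiction argument (which posits a $(p+1)$-dimensional totally degenerate subspace and extracts a linear dependence), but the underlying mechanism is the same.
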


\begin{proof}
By interchanging $g^{\flat}$ with $-g^{\flat}$, we may assume without loss of generality
that $p \leq q$. Let $\{e_i\}$ by an orthonormal basis of $(V,g^{\flat})$ and
consider the vector space $\Pi_0 = \mbox{span} ( e_1 + e_{p+1}, e_2 + e_{p+2},
\cdots, e_{p} + e_{2p})$, which has dimension $p$. Since $\la e_i + e_{p+i}, e_j + e_{p+j} \ra
= \la e_i, e_j \ra + \la e_{p+i}, e_{p+j} \ra = - \delta_{ij} + \delta_{ij} =0$,
the restriction $g^{\flat} |_{\Pi_0}$ is identically zero. Thus,
the upper bound claimed in the lemma is attained.

It remains to
show that  any totally degenerate vector subspace $\Pi$ satisfies
$\mbox{dim} (\Pi ) \leq p$. We argue by contradiction, so let $\Pi$ by a
totally degenerate space of dimension  $p+1$ and $\{ v_1,
\cdots, v_{p+1}\}$ a basis of  $\Pi$.  
The orthogonal decomposition $V =
\mbox{span} \{ e_1, \cdots e_p\} \oplus \{ e_{p+1}, \cdot, e_{p+q} \}$ 
allows us to  decompose any $v \in V$ as
 $v = v^{\parallel} + v^{\perp}$. It is clear that 
$\{ v_1^{\parallel}, \cdots, v_{p+1}^{\parallel} \}$ is a linearly dependent subset.
By reordering vectors if necessary we may assume that
$v_{p+1}^{\parallel} = \sum_{i=1}^{p} a_i v_i^{\parallel}$. The fact that $g^{\flat}|_{\Pi} =0$ implies, for all $a,b = 1, \cdots, p+1$, 
\begin{align}
0 = \la v_a, v_b \ra = \la v_a^{\parallel} + v_a^{\perp},
v_b^{\parallel} + v_b^{\perp} \ra = 
\la v_a^{\parallel}, v_b^{\parallel} \ra + \la v_a^{\perp}, v_b^{\perp} \ra 
\quad \Longleftrightarrow \quad 
\la v_a^{\perp}, v_b^{\perp} \ra  = - \la v_a^{\parallel}, v_b^{\parallel} \ra. 
\label{orto}
\end{align}
Let us compute
\begin{align*}
\Big \la v^{\perp}_{p+1}  - \sum_{i=0}^{p} a_i v_i^{\perp}, 
v^{\perp}_{p+1} - \sum_{i=0}^{p} a_i v_i^{\perp} \Big \ra & = 
\la v^{\perp}_{p+1} , v^{\perp}_{p+1} \ra
- 2 \sum_{i=1}^p a_i \la v_{p+1}^{\perp}, v_i^{\perp} \ra
+ \sum_{i=1}^{p} \sum_{j=1}^{p} a_i a_j \la v_i^{\perp}, v_j^{\perp} \ra \\
& = 
- \la v^{\parallel}_{p+1} , v^{\parallel}_{p+1} \ra
+ 2 \sum_{i=1}^p a_i \la v_{p+1}^{\parallel}, v_i^{\parallel} \ra
- \sum_{i=1}^{p} \sum_{j=1}^{p} a_i a_j \la v_i^{\parallel}, v_j^{\parallel} \ra \\
& = -
\Big \la v^{\parallel}_{p+1} - \sum_{i=0}^{p} a_i v_i^{\parallel}, 
v^{\parallel}_{p+1} - \sum_{i=0}^{p} a_i v_i^{\parallel} \Big \ra =  0,
\end{align*}
where in the third equality we used (\ref{orto}). Since
$v^{\perp}_{p+1} - \sum_{i=0}^{p} a_i v_i^{\perp}$ lies in a $q$-dimensional vector
subspace where $g^{\flat}$ is positive definite it must be
$v^{\perp}_{p+1} = \sum_{i=0}^{p} a_i v_i^{\perp}$, but then
also $v_{p+1} = \sum_{i=0}^{p} a_i v_i$, which is a contradiction.
\end{proof} 

We shall also need the following property of totally degenerate subspaces of maximal dimension.  
\begin{lemma}
\label{Ortogonality}
Let $(V, g^{\flat})$ satisfy the same assumptions as in Lemma \ref{TotallyDeg}.
Let $\Pi$ be a totally degenerate vector subspace of maximal dimension
$r:= \mbox{min}(p,q)$ and $\{ k_1, \cdots, k_r\}$ a basis of $\Pi$. Select
any $r$-dimensional vector subspace $T$ with the property that $g^{\flat} |_T$
is negative definite (if $p \leq q$) or positive definite
(if $p \geq q$) and for any $v \in V$ write 
$v= v^{\parallel} + v^{\perp}$ 
according to the direct sum decomposition $V = T \oplus T^{\perp}$.
Then the following properties
hold:
\begin{itemize}
\item[(i)] The set $\{ k_1^{\parallel}, \cdots k_r^{\parallel} \}$ is linearly
independent.
\item[(ii)] The set $\{ k_1^{\perp}, \cdots k_r^{\perp} \}$ is linearly independent.
\item[(iii)] The vector space $\Pi_T :=
\mbox{span}  \{ k_1^{\parallel}, \cdots,
k_r^{\parallel}, k_1^{\perp}, \cdots, k_r^{\perp} \}$ is $2r$-dimensional and 
$g^{\flat} |_{\Pi_T}$ has signature $\{ r,r\}$. Moreover, there
exists an orthonormal basis $\{e_1,  \cdots, e_{2r} \}$ of $\Pi_T$ with the
properties 
\begin{itemize}
\item[(a)]  $\mbox{span} \{ e_1, \cdots, e_r \} = \mbox{span}\{
 k_1^{\parallel}, \cdots k_r^{\parallel} \}$.
\item[(b)]  $\mbox{span} \{ e_{r+1}, \cdots, e_{2r} \} = \mbox{span}\{
 k_1^{\perp}, \cdots k_r^{\perp} \}$.
\item[(c)]  $\Pi = \mbox{span} \{ e_1+ e_{r+1}, \cdots, e_r + e_{2r} \}$.
\end{itemize}

\item[(iv)] A vector $v \in V$ is orthogonal to $\Pi$ if and only
if there exists $\overline{v} \in   \Pi_T^{\perp}$ such that 
$v- \overline{v} \in \Pi$.
\end{itemize}
\end{lemma}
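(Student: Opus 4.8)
The plan is to deduce everything from the two orthogonal splittings $V=T\oplus T^{\perp}$ and $V=\Pi_T\oplus\Pi_T^{\perp}$. Replacing $g^{\flat}$ by $-g^{\flat}$ interchanges the signature $\{p,q\}$ and the two alternatives ``negative definite''/``positive definite'', so without loss of generality I would assume $p\leq q$, hence $r=p$; then $g^{\flat}|_T$ is negative definite, in particular non-degenerate, so $V=T\oplus T^{\perp}$ and $g^{\flat}|_{T^{\perp}}$ is positive definite of dimension $q$. Write $\pi^{\parallel}\colon V\to T$ and $\pi^{\perp}\colon V\to T^{\perp}$ for the associated projections, so $v=v^{\parallel}+v^{\perp}$. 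The first observation is that $\Pi\cap T=\{0\}$ and $\Pi\cap T^{\perp}=\{0\}$: a vector in either intersection is $g^{\flat}$-null (being in $\Pi$, where $g^{\flat}$ vanishes) and lies in a definite subspace, hence is zero. Therefore $\pi^{\parallel}|_{\Pi}$ and $\pi^{\perp}|_{\Pi}$ are injective, which yields (i) and (ii) immediately and shows that $P:=\mbox{span}\{k_1^{\parallel},\dots,k_r^{\parallel}\}$ and $Q:=\mbox{span}\{k_1^{\perp},\dots,k_r^{\perp}\}$ are both $r$-dimensional.

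For (iii), since $P\subset T$ and $Q\subset T^{\perp}$ one has $P\perp Q$ with $g^{\flat}|_P$ negative definite and $g^{\flat}|_Q$ positive definite, so $\Pi_T=P\oplus Q$ is $2r$-dimensional of signature $\{r,r\}$; being non-degenerate, $g^{\flat}|_{\Pi_T}$ gives $V=\Pi_T\oplus\Pi_T^{\perp}$. The substantive step --- and the one I expect to be the only real (if mild) obstacle --- is producing the orthonormal basis that is compatible with $\Pi$ in the sense of (c). For this I would introduce the isomorphism $\phi:=\pi^{\perp}\circ(\pi^{\parallel}|_{\Pi})^{-1}\colon P\to Q$, so that $\Pi$ is exactly the graph $\{u+\phi(u):u\in P\}$ (every $w\in\Pi$ equals $w^{\parallel}+w^{\perp}=w^{\parallel}+\phi(w^{\parallel})$), and then check that $\phi$ is \emph{anti-isometric}: expanding $0=g^{\flat}\bigl(u+\phi(u),\,u'+\phi(u')\bigr)$ with the help of $g^{\flat}|_{\Pi}=0$ gives $g^{\flat}(\phi(u),\phi(u'))=-g^{\flat}(u,u')$. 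Taking any orthonormal basis $\{e_1,\dots,e_r\}$ of the negative-definite space $(P,g^{\flat}|_P)$ and setting $e_{r+i}:=\phi(e_i)$ then automatically produces an orthonormal basis $\{e_1,\dots,e_{2r}\}$ of $\Pi_T$ with (a) and (b) built in, while (c) holds because $e_i+e_{r+i}=e_i+\phi(e_i)\in\Pi$ and these $r$ vectors are linearly independent (their $T$-components $e_i$ are), hence span the $r$-dimensional space $\Pi$.

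Part (iv) is then bookkeeping with the two splittings. For ``if'': if $v-\bar v\in\Pi$ with $\bar v\in\Pi_T^{\perp}$, then $\bar v\perp\Pi$ (since $\Pi\subset\Pi_T$) and $v-\bar v\perp\Pi$ (since $g^{\flat}|_{\Pi}=0$), hence $v\perp\Pi$. For ``only if'': decompose $v=v_0+\bar v$ along $V=\Pi_T\oplus\Pi_T^{\perp}$; then $v_0=v-\bar v\perp\Pi$, and writing $v_0=\sum_{i=1}^{r}(c_ie_i+d_ie_{r+i})$ the conditions $0=g^{\flat}(v_0,e_j+e_{r+j})$ force $d_j=c_j$ for all $j$, i.e.\ $v_0=\sum_j c_j(e_j+e_{r+j})\in\Pi$, so $\bar v$ is the sought vector. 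Apart from the anti-isometry observation in (iii), I do not anticipate any difficulty; the rest is a careful but routine use of the orthogonal decompositions.
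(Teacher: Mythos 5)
Your proof is correct and follows essentially the same route as the paper's: both build the adapted orthonormal basis by transporting an orthonormal basis of $\mbox{span}\{k_i^{\parallel}\}$ to $\mbox{span}\{k_i^{\perp}\}$ using the relation $\la k_i^{\parallel},k_j^{\parallel}\ra=-\la k_i^{\perp},k_j^{\perp}\ra$ forced by total degeneracy (your anti-isometry $\phi$ is exactly the paper's coefficient matrix $a_i^j$ applied to the $k_j^{\perp}$), and part (iv) is handled identically. Your treatment of (i)--(ii) via injectivity of the projections (kernel $=\Pi\cap T^{\perp}=\{0\}$ since a null vector in a definite subspace vanishes) is a slightly more direct packaging of the paper's contradiction argument, but rests on the same observation.
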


\begin{proof}
Item (i) uses a similar argument as in the previous proof.
Indeed, if $\{ k_1^{\parallel}, \cdots k_r^{\parallel} \}$ were
linearly independent, say $k_r^{\parallel} = 
\sum_{i=1}^{r-1} k_i^{\parallel}
$, by the argument in the proof of 
\ref{TotallyDeg} we would have that 
$k_r^{\perp} - \sum_{i=1}^{r-1} a_i
k_i^{\perp}$ has zero norm and belongs to a space (namely $T^{\perp}$)
 where the metric is positive or negative definite.  Hence, this vector is
zero and we conclude that $k_r = \sum_{i=1}^{r-1} k_i^{\parallel}$, which 
is a contradiction. The proof of item (ii) follows the same steps.

To show (iii), we first note that  the orthogonal decomposition
$V = T \oplus T^{\perp}$ implies that 
$\Pi_T := \mbox{span} \{ k_1^{\parallel}, \cdots,k_r^{\parallel} \}
 \oplus \mbox{span} \{ k_1^{\perp}, \cdots,k_r^{\perp} \}$. The dimension
of $\Pi_T$ is $2r$ as a consequence of (i) and (ii) and  the
signature of $g^{\flat}|_{\Pi_T}$ is clearly $\{r,r\}$ 
because  $g^{\flat} |_{T}$ and $g^{\flat}|_{T^{\perp}}$  are positive and negative definite,
or viceversa.

Given that  $\mbox{span} 
\{ k_1^{\parallel}, \cdots, k_r^{\parallel} \}$ 
endowed with the restriction of $g^{\flat}$
defines a riemannian vector space, we can apply the Gram-Schmidt
orthonormalization procedure to define an adapted orthonormal basis
$\{ e_1, \cdots, e_r\}$. It follows that 
$e_i
= \sum_{j=1}^r a_i^j k_j^{\parallel}$. 
We claim that the vectors $e_{r+i} :=
 \sum_{j=1}^r a_i^j k_j^{\perp}$, $i=1,\cdots,p$, 
define an orthonormal basis of 
$\mbox{span} \{ k_1^{\perp}, \cdots, k_r^{\perp} \}$. Indeed,
the conditions $\la k_i,k_j \ra =0$ are equivalent to
$\la k_i^{\parallel}, k_j^{\parallel} \ra = - \la k_i^{\perp}, k_j^{\perp} \ra$, and 
then
\begin{align*}
\la e_{r+i}, e_{r+j} \ra & =
\Big \la \sum_{l=1}^r a_i^l k^{\perp}_l,
\sum_{m=1}^r a_i^m k^{\perp}_m \Big \ra = 
\sum_{l=1}^r \sum_{m=1}^r a_i^l a_j^m  \la k_l^{\perp}, k_m^{\perp} \ra \\
& = - 
\sum_{l=1}^r \sum_{m=1}^r a_i^l a_j^m  \la k_l^{\parallel}, k_m^{\parallel} \ra
= - \Big \la \sum_{l=1}^r a_i^l k^{\parallel}_l,
\sum_{m=1}^r a_i^m k^{\parallel}_m \Big \ra =  - \la e_i, e_j \ra = - \sigma 
\delta_{ij}
\end{align*}
where $\sigma := g^{\flat}(e_1,e_1)$. If we denote by
$(b^{i}_j)$ the inverse matrix of 
$(a^i_j)$ it follows that
\begin{align*}
k_i = k_i^{\parallel}+ k_i^{\perp} =  
\sum_{j=1}^r b_i^j   e_i
+   \sum_{j=1}^r b_i^j   e_{r+i}
= 
\sum_{j=1}^r b_i^j  \left ( e_i + e_{r+i} \right ),
\end{align*}
which in particular implies that $\Pi = 
\mbox{span} \{ k_1, \cdots, k_r \}$ is also 
$ \Pi = \mbox{span} \{ e_i + e_{r+i} \}$. This proves (iii).

To establish (iv),  decompose $v = 
\sum_{i=1}^r a_i e_i + b_i e_{r+i} + \overline{v}$ according to the
orthogonal decomposition $V = \Pi_T \oplus
\Pi_T^{\perp}$. The condition that $v$ is orthogonal to all
$\{ k_i \}$, i.e. to all $\{ e_i + e_{r+i} \}$ imposes
$\sigma ( a_i - b_i )=0$ and we conclude
\begin{align*}
v - \overline{v} = \sum_{i=1}^r a_i (e_i + e_{r+i} ) \in \Pi
\end{align*}
as claimed.
\end{proof}

\begin{lemma}
\label{Ker}
Let $F^{\sharp}$ be a skew-symmetric endomorphism in an $n$-dimensional 
vector space
$V$ endowed with an inner product $g^{\flat}$ of signature $\{ p, q\}$. 
If $\mbox{dim} (\mbox{Ker} (F^{\sharp} ) \geq n-1$ then $F^{\sharp}=0$ and
conversely.
\end{lemma}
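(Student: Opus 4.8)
The plan is to prove Lemma \ref{Ker} by a short rank argument, treating the two inclusions separately. The converse direction (if $F^{\sharp}=0$ then $\mbox{Ker}(F^{\sharp})=V$, so its dimension is $n\ge n-1$) is trivial, so the substance is the forward implication. Suppose $\dim(\mbox{Ker}(F^{\sharp}))\ge n-1$. By rank-nullity, $\mbox{rank}(F^{\sharp})\le 1$. If the rank is $0$ there is nothing to prove, so assume $\mbox{rank}(F^{\sharp})=1$. Then the image of $F^{\sharp}$ is spanned by a single nonzero vector $k\in V$, and since every output of $F^{\sharp}$ is proportional to $k$, there is a one-form $\bm{a}\in\Lambda_p V$ (necessarily nonzero) with $F^{\sharp}=k\otimes\bm{a}$, i.e. $F^{\sharp}(x)=\bm{a}(x)\,k$ for all $x\in V$.

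Next I would invoke skew-symmetry to derive a contradiction, exactly as in the proof of Lemma \ref{codimension_two} in Appendix \ref{app:1}. Skew-symmetry with respect to $g^{\flat}$ means $g^{\flat}(F^{\sharp}(x),y)=-g^{\flat}(x,F^{\sharp}(y))$ for all $x,y$. Substituting $F^{\sharp}=k\otimes\bm{a}$ gives $\bm{a}(x)\,g^{\flat}(k,y)=-\bm{a}(y)\,g^{\flat}(k,x)$ for all $x,y$, i.e. the bilinear form $\bm{a}\otimes\bm{k}$ is symmetric where $\bm{k}$ is the metric dual of $k$; equivalently $\bm{a}\otimes\bm{k}+\bm{k}\otimes\bm{a}=0$ as a symmetric $2$-tensor, forcing $\bm{a}\wedge\bm{k}$ to... more directly: pick $y$ with $g^{\flat}(k,y)\ne0$ (possible since $g^{\flat}$ is non-degenerate and $k\ne0$); then $\bm{a}(x)=-\bm{a}(y)g^{\flat}(k,x)/g^{\flat}(k,y)$, so $\bm{a}$ is proportional to $\bm{k}$, say $\bm{a}=\lambda\bm{k}$ with $\lambda\ne0$. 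But then $F^{\sharp}=\lambda\,k\otimes\bm{k}$ is a symmetric endomorphism; combined with skew-symmetry this gives $\lambda\,k\otimes\bm{k}=0$, hence $\lambda=0$, contradicting $\bm{a}\ne0$. Therefore $\mbox{rank}(F^{\sharp})=1$ is impossible and $F^{\sharp}=0$.

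I expect no real obstacle here: this is essentially the same elementary non-degeneracy argument already used in Appendix \ref{app:1}, and the only care needed is to phrase the ``rank one skew map is zero'' step cleanly in a general signature (the signature plays no role beyond non-degeneracy of $g^{\flat}$). If one prefers a coordinate-free phrasing, one can say: the two-form $\bm{F}$ defined by $\bm{F}(x,y):=g^{\flat}(F^{\sharp}(x),y)$ has the same rank as $F^{\sharp}$ by non-degeneracy of $g^{\flat}$; but a nonzero two-form has even rank $\ge 2$, so $\mbox{rank}(F^{\sharp})\ne 1$, and $\mbox{rank}(F^{\sharp})\le 1$ forces $F^{\sharp}=0$. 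This is the cleanest route and is the one I would write up.
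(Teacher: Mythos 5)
Your proof is correct and follows essentially the same route as the paper's: both use rank--nullity to reduce to the rank-one case $F^{\sharp}=k\otimes\bm{a}$ and then eliminate it via skew-symmetry together with non-degeneracy of $g^{\flat}$. The only difference is cosmetic --- the paper concludes by observing that $\bm{a}$ vanishes on the dense complement of $k^{\perp}$ and hence everywhere by continuity, whereas you argue purely algebraically (first $\bm{a}\propto\bm{k}$, then the diagonal of the skew-symmetry condition kills the proportionality constant), and your closing remark that a nonzero two-form has even rank is an even more compact packaging of the same fact.
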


\begin{proof}
If $\mbox{dim} (\mbox{Ker} (F^{\sharp} ) = n$ there is nothing to prove, so let us assume that the kernel has dimension $n-1$, i.e. $\mbox{rank} (F^{\sharp} )
= 1$ or, equivalently that there exists a non-zero vector 
$k \in V$ such that
$F^{\sharp} (u) = \a(u) k$, for all $u \in v$. By linearity $\a(u)$ is a one-form,
hence a continuous linear map. By skew-symmetry
\begin{align*}
0 = \la u, F^{\sharp}(u) \ra = \a(u) \la u,k \ra. 
\end{align*}
Thus, $a(u) =0 $ on all vectors not lying in $k^{\perp} := \{ v \in V,
\la k,v \ra )=0 \}$. The inner product being non-degenerate,  $k^{\perp}$ 
has dimension at most $n-1$ and hence its complementary is dense in 
$V$. The one-form $\a$ vanishes on this set and hence everywhere
by continuity.
\end{proof}

\begin{lemma}
\label{FsqLemma}
Let $F^{\sharp}$ be a skew-symmetric endomorphism in a  vector space
$V$ endowed with an inner product $g^{\flat}$ of signature $\{ p, q\}$. 
The condition  $F^{\sharp} \circ F^{\sharp} =0$. is equivalent to
\begin{itemize}
\item[(i)] If $p=1$ or $q=1$:  $F^{\sharp} =0$
\item[(2)] If $p=2$ and $q \geq 2$:  $F^{\sharp} = k \otimes \bm{\ell} - 
\ell \otimes k$, where $\{ k,\ell\}$ is a basis of a two-dimensional
totally degenerate linear subspace.
\end{itemize}
\end{lemma}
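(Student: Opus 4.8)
The plan is to reduce the whole statement to a fact about the image of $F^\sharp$. First I would observe that skew-symmetry with respect to the non-degenerate $g^\flat$ gives $\mathrm{Ker}(F^\sharp) = (\mathrm{Im}\,F^\sharp)^\perp$: indeed $F^\sharp v = 0$ iff $\langle F^\sharp v, w\rangle = 0$ for all $w$, and $\langle F^\sharp v,w\rangle = -\langle v, F^\sharp w\rangle$. Hence $F^\sharp\circ F^\sharp = 0$ is equivalent to $\mathrm{Im}(F^\sharp)\subseteq \mathrm{Ker}(F^\sharp) = (\mathrm{Im}\,F^\sharp)^\perp$, i.e. to $\Pi := \mathrm{Im}(F^\sharp)$ being totally degenerate. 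By Lemma \ref{TotallyDeg} this forces $\dim\Pi \le \min(p,q)$, and by Lemma \ref{Ker} a skew-symmetric endomorphism can never have rank exactly $1$, so $\dim\Pi \in \{0\}\cup\{2,3,\dots\}$.

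The two cases of the lemma then fall out at the level of dimensions. In case (i), $p=1$ or $q=1$ forces $\min(p,q)\le 1$, so $\dim\Pi\le 1$, hence $\dim\Pi = 0$ and $F^\sharp = 0$; the converse is trivial. In case (ii), $p=2$ and $q\ge 2$ give $\min(p,q)=2$, so $\dim\Pi\in\{0,2\}$; when $\dim\Pi=0$ again $F^\sharp=0$ (which I would note is also admissible, to make the equivalence literally correct), and the substantial subcase is $\dim\Pi=2$, where $\Pi$ is a two-dimensional totally degenerate plane.

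To pin down the form of $F^\sharp$ in that subcase I would pick a basis $\{k,\ell\}$ of $\Pi$ and write $F^\sharp(\cdot) = \alpha(\cdot)\,k + \beta(\cdot)\,\ell$ with $\alpha,\beta$ one-forms. Since $F^\sharp$ vanishes on $\mathrm{Ker}(F^\sharp)=\Pi^\perp$ and $k,\ell$ are independent, $\alpha$ and $\beta$ vanish on $\Pi^\perp$; as the annihilator of $\Pi^\perp$ is spanned by the metric duals $\bm k,\bm\ell$ — linearly independent because $g^\flat$ is non-degenerate — one gets $\alpha = a_1\bm k + a_2\bm\ell$ and $\beta = b_1\bm k + b_2\bm\ell$ for scalars. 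Imposing $\langle F^\sharp u,v\rangle = -\langle u, F^\sharp v\rangle$ and comparing the coefficients of the four independent bilinear expressions $\langle k,u\rangle\langle k,v\rangle$, $\langle\ell,u\rangle\langle\ell,v\rangle$, $\langle k,u\rangle\langle\ell,v\rangle$, $\langle\ell,u\rangle\langle k,v\rangle$ yields $a_1=b_2=0$ and $b_1=-a_2=:-c$, with $c\ne 0$ since the rank is $2$. Absorbing $c$ into $k$ gives $F^\sharp = k\otimes\bm\ell - \ell\otimes\bm k$, and the converse is the one-line check that $F^\sharp(k)=F^\sharp(\ell)=0$ by total degeneracy, so $F^\sharp\circ F^\sharp = 0$ and $F^\sharp$ is skew.

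I do not expect a serious obstacle here: the only points demanding care are verifying that the four scalar bilinear forms above are genuinely linearly independent (a consequence of the independence of $\bm k,\bm\ell$) so that the coefficient matching is legitimate, checking that rescaling $k$ leaves $\{k,\ell\}$ a basis of the same totally degenerate plane, and — as already flagged — recording explicitly that $F^\sharp=0$ must be included alongside the stated rank-two form in case (ii).
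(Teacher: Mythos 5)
Your proposal is correct, and its skeleton coincides with the paper's: both reduce $F^{\sharp}\circ F^{\sharp}=0$ to the statement that $\Pi:=\mbox{Image}(F^{\sharp})$ is totally degenerate, invoke Lemma \ref{TotallyDeg} to bound $\dim \Pi$ by $\min(p,q)$, and use Lemma \ref{Ker} to exclude rank one, so that case (i) gives $F^{\sharp}=0$ and case (ii) leaves only $\dim\Pi\in\{0,2\}$. Where you genuinely diverge is in pinning down $F^{\sharp}$ when $\dim\Pi=2$. The paper routes this through Lemma \ref{Ortogonality}: it builds the auxiliary space $\Pi_T$ with an adapted orthonormal basis $\{e_1,e_2,e_3,e_4\}$, writes $\Pi=\mbox{span}\{e_1+e_3,e_2+e_4\}$, and determines the two one-forms by evaluating skew-symmetry on the basis vectors. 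You instead work with an arbitrary basis $\{k,\ell\}$ of $\Pi$, observe that $\mbox{Ker}(F^{\sharp})=(\mbox{Image}\,F^{\sharp})^{\perp}=\Pi^{\perp}$ forces the two coefficient one-forms to lie in the annihilator of $\Pi^{\perp}$, which by non-degeneracy is exactly $\mbox{span}\{\bm{k},\bm{\ell}\}$, and then read off the coefficients from skew-symmetry. Your route is more elementary and self-contained (it never needs the orthonormalization machinery of Lemma \ref{Ortogonality}, which the paper in any case reuses elsewhere), at the cost of the explicit adapted basis that the paper's argument produces as a by-product. Your two flagged points of care are well taken and both check out: the four tensor products $\bm{k}\otimes\bm{k}$, $\bm{\ell}\otimes\bm{\ell}$, $\bm{k}\otimes\bm{\ell}$, $\bm{\ell}\otimes\bm{k}$ are indeed linearly independent because $\bm{k},\bm{\ell}$ are, so the coefficient matching is legitimate; and the observation that $F^{\sharp}=0$ must be admitted alongside the rank-two normal form in case (ii) for the stated equivalence to be literally true is a small but genuine precision that the paper's own statement glosses over.
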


\begin{proof}
By skew-symmetry 
\begin{align*}
\la F^{\sharp} \circ F^{\sharp}(u), v\ra = - \la F^{\sharp}(u), F^{\sharp}(v) \ra,
\quad \quad \forall u,v \in V
\end{align*}
so the condition $F^{\sharp} \circ F^{\sharp} =0$ is equivalent to
the linear space $\Pi := \mbox{Image} (F^{\sharp})$ being totally degenerate.
By Lemma \ref{TotallyDeg} the dimension
of $\Pi$ is at most one when $p=1$ and at most two when $p=2, q\geq 2$.
$\mbox{rank} (F^{\sharp}) \leq 1$ is equivalent to
$\mbox{dim} (\mbox{ker} (F^{\sharp} ) \geq n-1$ and by Lemma \ref{Ker} 
this happens if and only if  $F^{\sharp} =0$. 

It remains to consider the case $p=2, q \geq 2$ with $\Pi$ two-dimensional.
Let $\{k_1,k_2\}$ be a basis and fix a two
two-dimensional linear subspace $T \in V$ with negative definite
induced inner product. As before  the orthogonal 
decomposition $V = T \oplus T^{\perp}$ allows us to write
$v = v^{\parallel} + v^{\perp}$ for any vector  $v$.
By Lemma \ref{Ortogonality} we know that $\Pi = \mbox{span} \{ e_1 + e_{3}, e_2+ e_4 \}$ where $\{ e_i\}$ is an orthonormal basis of 
$\Pi_T := \mbox{span} \{ k_1^{\parallel}, k_2^{\parallel} \} \oplus 
\mbox{span} \{  k_1^{\perp}, k_2^{\perp} \}$ which is adapted to the direct sum decomposition
$T \oplus T^{\perp}$. As $\Pi$ has been defined as  the image of $F^{\sharp}$, 
there exist two non-zero one-forms
$\a,\b$ such that
\begin{align}
F^{\sharp} (u) = \a(u) (e_1 + e_3) + \b(u) (e_2+ e_4).
\label{Fsharp1}
\end{align}
By skew symmetry, any $u \in \Pi_T^{\perp}$ must satisfy $F^{\sharp}(u) \in
\Pi_T^{\perp}$. Thus $\a |_{\Pi_T^{\perp}} = \b |_{\Pi_T^{\perp}}=0$. Also by
skew symmetry $F^{\sharp}(e_i)$ ($i=1,2,3,4$) is perpendicular to $e_i$, so
$\a(e_1)= \a(a_3) = \b(e_2)=\b(e_4)=0$ and, in addition,
\begin{align*}
0 = \la F^{\sharp} (e_1), e_2  \ra +
\la e_1, F^{\sharp}(e_2) \ra = \b(e_1) - \a(e_2)  \quad \quad \Longleftrightarrow
\quad \quad \b(e_1) = \a(e_2).
\end{align*}
Applying $F^{\sharp}$ to (\ref{Fsharp1})  we find 
\begin{align*}
0 & = F^{\sharp} \circ F^{\sharp} (u) = 
\a(u) ( \b(e_1) + \b(e_3) )    (e_2 + e_4)
+ \b(u) ( \a(e_2) + \a(e_4)) (e_1 + e_3) \\
&  \Longleftrightarrow \quad \quad
\b(e_1) + \b(e_3)=0 \quad \mbox{and} \quad
\a(e_2) + \a(e_4)=0,
\end{align*}
where we used the fact that neither $\a$ nor $\b$ can vanish identically
(otherwise the rank of $F^{\sharp}$ would not be two).
Putting things together,  there exist a
non-zero constant $\alpha := \a(e_4)$ such that
$\a = \alpha ( \bm{e_2} + \bm{e_4}  )$ and
$\b = - \alpha ( \bm{e_1} + \bm{e_3} )$. We conclude that
\begin{align*}
F^{\sharp} = \alpha  ( \bm{e_2} + \bm{e_4} ) \otimes (e_1 + e_3 )
- \alpha  ( \bm{e_1} + \bm{e_3} ) \otimes (e_2 + e_4),
\end{align*}
which is $F^{\sharp} = k \otimes \bm{\ell} - \ell  \otimes \bm{k}$ after
defining $k = e_1 + e_3$ and $\ell = \alpha (e_2 + e_4)$. Since
$\{ k, \ell\}$ is a basis of $\Pi$ the lemma is proved.
\end{proof}

\begin{lemma}
\label{rank1}
Let $F^{\sharp}$ be a skew-symmetric endomorphism in a vector space $V$ endowed
with an inner product $g^{\flat}$ of signature $\{p,q\}$ with either
$p$ or $q$ different from zero.  Assume that
$F^{\sharp} \circ F^{\sharp} = \mu k \otimes \bm{k}$ with $k \in V$ non-zero
and null and $\mu \neq 0$. Suppose, moreover, that
\begin{itemize}
\item[(i)] $V$ is of Lorentzian signature, or
\item[(ii)] $\mbox{Image} (F^{\sharp} |_{\la k \ra^{\perp}}) \subset   \mbox{span}( k) $.
\end{itemize}
Then, and only then, there exists $w \in V$, linearly independent
and orthogonal to $k$ such that
\begin{align}
  F^{\sharp} = k \otimes \bm{w} - w \otimes \bm{k}, \quad
\quad \mu = - \la w ,w \ra.
\label{formF}
\end{align}
\end{lemma}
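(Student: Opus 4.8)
The plan is to prove the two implications separately. The ``only then'' direction is a direct computation: if $F^\sharp = k\otimes\bm{w} - w\otimes\bm{k}$ with $\langle k,w\rangle = 0$ and $k$ null, then $F^\sharp(k) = -\langle k,k\rangle w = 0$ and $F^\sharp(w) = \langle w,w\rangle k$, so $F^\sharp\circ F^\sharp = -\langle w,w\rangle\,k\otimes\bm{k}$, giving $\mu = -\langle w,w\rangle$; moreover $F^\sharp$ maps $\langle k\rangle^\perp$ into $\mathrm{span}(k)$, so condition (ii) holds automatically, and $w$ and $k$ are linearly independent precisely because $\langle w,w\rangle = -\mu\neq 0$ while $\langle k,k\rangle = 0$. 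The substance of the lemma is the forward implication, which I would organize as follows.

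First I would extract the structure forced by $F^\sharp\circ F^\sharp = \mu\,k\otimes\bm{k}$ with $\mu\neq0$ and $k$ null, non-zero. Put $W := \mathrm{Image}(F^\sharp)$; skew-symmetry gives $\mathrm{Ker}(F^\sharp) = W^\perp$. Since $\langle k,\cdot\rangle\neq 0$, one has $\mathrm{Image}(F^\sharp\circ F^\sharp) = \mathrm{span}(k)$, hence $F^\sharp\neq0$, $k\in W$, and therefore $W^\perp\subseteq\langle k\rangle^\perp$. The key structural claims are that $\mathrm{rank}(F^\sharp) = 2$ and $F^\sharp(k) = 0$. Rank one is impossible: $\mathrm{rank}(F^\sharp)=1$ would mean $F^\sharp = k\otimes\bm{a}$, which is incompatible with skew-symmetry unless $\bm{a} = 0$ (pairing and setting both arguments equal shows the symmetric product of $\bm{a}$ and $\langle k,\cdot\rangle$ vanishes, forcing $\bm{a} = 0$ as $\langle k,\cdot\rangle\ne0$). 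For the upper bound, in case (ii) the map $F^\sharp|_{\langle k\rangle^\perp}$ has image in $\mathrm{span}(k)$, so its kernel --- which equals $W^\perp$, since $W^\perp\subseteq\langle k\rangle^\perp$ --- has dimension $\ge n-2$, whence $\mathrm{rank}(F^\sharp)\le 2$; in case (i) one uses $F^\sharp(W) = \mathrm{Image}(F^\sharp\circ F^\sharp) = \mathrm{span}(k)$, so $W\cap W^\perp$ has codimension one in $W$, and since $W\cap W^\perp$ is totally degenerate, Lemma \ref{TotallyDeg} bounds its dimension by $\min(p,q)=1$ in Lorentzian signature, giving $\dim W\le 2$. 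Thus $\dim W = 2$ and $W\cap W^\perp$ is one-dimensional.

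For $F^\sharp(k) = 0$: in case (ii), $k\in\langle k\rangle^\perp$ gives $F^\sharp(k) = c\,k$, and $c^2 k = (F^\sharp\circ F^\sharp)(k) = \mu\langle k,k\rangle k = 0$ forces $c = 0$; in case (i), $W/(W\cap W^\perp)$ carries a non-degenerate inner product on a one-dimensional space and hence has no non-zero null vectors, so the null vector $k\in W$ must lie in $W\cap W^\perp$, giving $F^\sharp(k)=0$ and $\langle k,W\rangle = 0$. Now I would reconstruct $w$: choose $u_0$ with $\langle k,u_0\rangle = 1/\mu$ and set $w := -\mu\,F^\sharp(u_0)$. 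Then $F^\sharp(w) = -\mu(F^\sharp\circ F^\sharp)(u_0) = -\mu k$, $\langle w,w\rangle = -\mu^2\langle u_0,(F^\sharp\circ F^\sharp)u_0\rangle = -\mu^2\langle u_0,k\rangle = -\mu\neq0$, and $\langle k,w\rangle = 0$ since $w\in W$. In particular $w$ and $k$ are independent and $W = \mathrm{span}(k,w)$, so $\mathrm{Ker}(F^\sharp) = \mathrm{span}(k,w)^\perp$. Consider $G^\sharp := F^\sharp - (k\otimes\bm{w} - w\otimes\bm{k})$, which is skew-symmetric; one checks $G^\sharp(k) = 0$ and $G^\sharp(w) = F^\sharp(w) - \langle w,w\rangle k = -\mu k - (-\mu)k = 0$, so $\mathrm{span}(k,w)\subseteq\mathrm{Ker}(G^\sharp)$ and hence $\mathrm{Image}(G^\sharp)\subseteq\mathrm{span}(k,w)^\perp$. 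On the other hand $G^\sharp(x)$ differs from $F^\sharp(x)$ by an element of $\mathrm{span}(k,w)$, so $\mathrm{Image}(G^\sharp)\subseteq\mathrm{Image}(F^\sharp) = \mathrm{span}(k,w)$; since $\langle w,w\rangle\neq0$, the radical of $g^\flat|_{\mathrm{span}(k,w)}$ is $\mathrm{span}(k)$, so $\mathrm{Image}(G^\sharp)\subseteq\mathrm{span}(k)$, i.e. $G^\sharp = k\otimes\bm{b}$ for a one-form $\bm{b}$. Skew-symmetry of $k\otimes\bm{b}$ forces $\bm{b} = 0$ exactly as before, so $F^\sharp = k\otimes\bm{w} - w\otimes\bm{k}$ with $\mu = -\langle w,w\rangle$, as claimed.

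The main obstacle is the step where one must pin down $\mathrm{rank}(F^\sharp) = 2$ and $F^\sharp(k) = 0$ under \emph{either} hypothesis (i) or (ii): the Lorentzian case needs exactly the totally-degenerate dimension bound of Lemma \ref{TotallyDeg}, while case (ii) must yield the same conclusions from the image condition alone, and getting both to land cleanly takes some care. Once these structural facts are in place, recovering $w$ and identifying $F^\sharp$ is essentially forced.
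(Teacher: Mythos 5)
Your proof is correct, and while its overall skeleton matches the paper's (define $w$ as $-F^{\sharp}$ of a suitably normalized transversal vector, exploit $F^{\sharp}\circ F^{\sharp}=\mu\, k\otimes\bm{k}$ to compute $F^{\sharp}(w)$, $\la w,w\ra$ and $\la k,w\ra$, show $F^{\sharp}(k)=0$, and identify $\mbox{Image}(F^{\sharp})=\mbox{span}(k,w)$), you diverge from the paper in two genuine ways. First, the paper disposes of the Lorentzian case by a pointwise reduction of (i) to (ii): it computes that $F^{\sharp}(u)$ is null and orthogonal to $k$ for every $u\in\la k\ra^{\perp}$, and invokes the Lorentzian fact that such a vector must be proportional to $k$. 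You instead argue structurally on $W=\mbox{Image}(F^{\sharp})$: from $F^{\sharp}(W)=\mbox{span}(k)$ you get $\dim(W\cap W^{\perp})=\dim W-1$, bound this radical by Lemma \ref{TotallyDeg}, conclude $\mbox{rank}(F^{\sharp})=2$, and then place $k$ in the radical via the one-dimensional non-degenerate quotient $W/(W\cap W^{\perp})$ --- a route the paper reserves for Lemma \ref{FsqLemma} rather than for this lemma. Second, your endgame subtracts the candidate $k\otimes\bm{w}-w\otimes\bm{k}$ and shows the skew-symmetric remainder has image in the radical of $\mbox{span}(k,w)$, hence vanishes; the paper instead expands $F^{\sharp}=k\otimes\a+w\otimes\b$ and solves for the one-forms. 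Your version buys a more uniform treatment of the two hypotheses at the cost of invoking the totally-degenerate machinery; the paper's is more computational but shorter in the Lorentzian case. Both are complete; the only cosmetic slip is the phrase ``$\mbox{Image}(G^{\sharp})\subseteq\mbox{Image}(F^{\sharp})$'', where what you actually use (and what is true) is $\mbox{Image}(G^{\sharp})\subseteq\mbox{Image}(F^{\sharp})+\mbox{span}(k,w)=\mbox{span}(k,w)$.
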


\begin{proof}
Let $\ell \in V$ be transverse to the codimension-one vector
subspace  $\la k \ra^{\perp}$. Since $\la k , \ell\ra \neq 0$ we may (after scaling 
$\ell$ if necessary) assume that  $\la \ell, k \ra = 1$. Define
$w := -F^{\sharp} (\ell)$ and observe
\begin{align*}
F^{\sharp} (w) = - F^{\sharp} \circ F^{\sharp} (\ell)= - \mu k. 
\end{align*}
Thus, for all $u \in V$,
\begin{align*}
\la k, F^{\sharp}(u) \ra & = - \frac{1}{\mu} \la F^{\sharp} (w),
F^{\sharp}(u) \ra = \frac{1}{\mu} \la w, F^{\sharp} \circ F^{\sharp} (u)
\ra  = \la w, k \ra \la k, u \ra,  \\
\la F^{\sharp} (u), F^{\sharp}(u) \ra
& = - \la u , F^{\sharp} \circ F^{\sharp} (u) \ra = - \mu \la k,u \ra^2.
\end{align*}
In particular, for $u \in \la k \ra^{\perp}$, $F^{\sharp} (u)$ is null and
orthogonal to $k$. In Lorentzian signature, this can only occur if
and only if $F^{\sharp} (u)$ is proportional to $k$ and we fall into
case (ii). We may thus assume (ii) irrespectively of the signature.

We first prove $F^{\sharp} (k)=0$. Indeed, under (ii), there is $\nu \in
\mathbb{R}$ such that $F^{\sharp}(k) = \nu k$. Since
\begin{align*}
F^{\sharp} \circ F^{\sharp}(k) = \mu k \la k, k \ra =0
\quad \quad \Longrightarrow \quad \quad F^{\sharp} (\nu k ) = \nu^2 k =0
\end{align*}
we conclude that $\nu$ must vanish, i.e. 
$F^{\sharp}(k) = 0$. From $V = \mbox{span} ( \ell ) \oplus \la k \ra^{\perp}$, it follows
that $\mbox{Image} (F^{\sharp}) = \mbox{span} (k,w)$ and, moreover, that
$\{w, k\}$ are linearly independent
(otherwise $\mbox{Image} (F^{\sharp}) = \mbox{span} (k) \subset \mbox{Ker}
F^{\sharp}$ and $F^{\sharp} \circ F^{\sharp}$ would be zero, contradicting the 
assumptions). There
exists two one-forms $\a, \b \in V^{\star}$ such that
\begin{align*}
    F^{\sharp} (u) = \a(u)  k  + \b(u)  w, \quad \quad
\forall u \in V.
\end{align*}
Applying $F^{\sharp}$ yields 
\begin{align*}
\mu \la k, u \ra k = F^{\sharp} \circ F^{\sharp} (u) = 
F^{\sharp} ( \a(u)  k  + \b(u)  w ) = 
 - \b(u)  \mu k
\quad \quad \Longleftrightarrow \quad \quad \b = - \bm{k}.
\end{align*}
Skew symmetry then forces $\a =  \bm{w}$ so that
\begin{align*}
F^{\sharp} = k \otimes \bm{w} - w \otimes \bm{k}
\end{align*} 
and we still need to impose
\begin{align*}
- \mu k = F^{\sharp} (w) = k \la w, w \ra - w \la k, w \ra
\quad \quad \Longleftrightarrow \quad \quad
\la k, w \ra = 0, \,\, \la w, w \ra = -\mu 
\end{align*}
This proves the ``then'' part of the lemma. The ``only then'' is immediate
since an $F^{\sharp}$ given by (\ref{formF}) with $k$ null and $w$ perpendicular
to $k$ immediately satisfies 
$F^{\sharp} \circ F^{\sharp} = - \la w, w \ra k \otimes \bm{k}$
\end{proof}

\begin{lemma}
\label{Fkperp}
Let $F^{\sharp}$ be a skew-symmetric endomorphism in a vector space $V$ endowed
with an inner product $g^{\flat}$ of signature $\{p,q\}$ with either
$p$ or $q$ different from zero.  Assume that there exists $k \in V$, non-zero and
satisfying $\la k ,k \ra =0$ such that
$F^{\sharp} |_{\la k \ra^{\perp}}$ takes values in $\mbox{span}( k )$.
Then there exists $v \in V$ such that
\begin{align}
  F^{\sharp} = k \otimes \bm{v} - v \otimes \bm{k}
\label{Fkv}
\end{align}
\end{lemma}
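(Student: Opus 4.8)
The plan is to reduce Lemma~\ref{Fkperp} to a short verification performed on a convenient direct-sum decomposition of $V$. First I would note that, since $\la k, k\ra =0$, the vector $k$ itself lies in $\la k\ra^{\perp}$, and since $g^{\flat}$ is non-degenerate and $k\neq 0$ (which is where the hypothesis that $p$ or $q$ is nonzero enters), there exists $\ell \in V$ with $\la \ell, k\ra =1$. Because $\la k\ra^{\perp}$ has codimension one and $\ell \notin \la k\ra^{\perp}$, this yields the splitting $V = \mbox{span}(\ell) \oplus \la k\ra^{\perp}$. The hypothesis that $F^{\sharp}|_{\la k\ra^{\perp}}$ takes values in $\mbox{span}(k)$ is, by linearity, exactly the statement that there is a linear functional $\alpha$ on $\la k\ra^{\perp}$ with $F^{\sharp}(u) = \alpha(u)\, k$ for every $u \in \la k\ra^{\perp}$.

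Next I would propose the candidate $v := - F^{\sharp}(\ell)$ and check that the endomorphisms $F^{\sharp}$ and $k\otimes \bm{v} - v\otimes \bm{k}$ coincide; by linearity it is enough to test them on $\ell$ and on an arbitrary $u \in \la k\ra^{\perp}$, and skew-symmetry of $F^{\sharp}$ does all the work. On $\ell$ one has $(k\otimes \bm{v} - v\otimes \bm{k})(\ell) = \la v,\ell\ra k - \la k,\ell\ra v = \la v,\ell\ra k - v$, and $\la v,\ell\ra = -\la F^{\sharp}(\ell),\ell\ra = 0$, so the right-hand side equals $-v = F^{\sharp}(\ell)$. On $u \in \la k\ra^{\perp}$ we have $\la k,u\ra =0$, hence $(k\otimes \bm{v} - v\otimes \bm{k})(u) = \la v,u\ra k$, while $\la v,u\ra = -\la F^{\sharp}(\ell),u\ra = \la \ell, F^{\sharp}(u)\ra = \alpha(u)\la \ell,k\ra = \alpha(u)$, so the right-hand side equals $\alpha(u) k = F^{\sharp}(u)$. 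Therefore $F^{\sharp} = k\otimes \bm{v} - v\otimes \bm{k}$, and the converse implication is immediate since any such $F^{\sharp}$ clearly maps $\la k\ra^{\perp}$ into $\mbox{span}(k)$.

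I do not expect any genuine obstacle here: once the decomposition $V = \mbox{span}(\ell)\oplus \la k\ra^{\perp}$ is set up, the proof is a one-line computation. The only points deserving a little care are the existence of $\ell$ (non-degeneracy of $g^{\flat}$ together with $k\neq 0$) and the systematic use of skew-symmetry, first to make the $\la v,\ell\ra$ term vanish and then to identify $\la v,\cdot\ra$ with the functional $\alpha$ on $\la k\ra^{\perp}$. It is worth observing that null-ness of $k$ is not actually needed for the computation itself; it merely guarantees $k\in\la k\ra^{\perp}$, which is the form in which the lemma is invoked in the proof of Theorem~\ref{degenerate}.
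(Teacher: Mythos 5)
Your proof is correct, and it rests on the same two ingredients as the paper's: the splitting $V=\mbox{span}(\ell)\oplus\la k\ra^{\perp}$ with $\la \ell,k\ra\neq 0$, and the candidate vector built from $F^{\sharp}(\ell)$ (the paper's $v=cs$ with $s:=F^{\sharp}(\ell)$ reduces to your $v=-F^{\sharp}(\ell)$ once $\ell$ is normalized). The execution differs in a way worth noting: the paper first splits into the case where $F^{\sharp}(\ell)$ is proportional to $k$ --- there it invokes Lemma~\ref{Ker} to conclude $F^{\sharp}=0$ --- and the case where $\{k,F^{\sharp}(\ell)\}$ is independent, where it writes $F^{\sharp}=k\otimes\a+s\otimes\b$ and pins down $\a,\b$ from the hypothesis and skew-symmetry. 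Your direct verification on the two summands absorbs both cases at once (skew-symmetry forces $\la v,\ell\ra=0$, so if $v$ were a nonzero multiple of $k$ one would get a contradiction with $\la k,\ell\ra=1$; the degenerate situation is exactly $v=0$, $F^{\sharp}=0$, and the identity still holds), so you avoid both the case split and the appeal to Lemma~\ref{Ker}. Your closing remark that nullity of $k$ plays no role in the computation is also accurate; it only matters for how the lemma is used in Theorem~\ref{degenerate}.
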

\begin{proof}
We follows a similar path as in the proof of Lemma (\ref{rank1}).
Let $\ell \in  V$ be a vector transverse to $\la k \ra^{\perp}$
and define $s := F^{\sharp} (\ell)$. 
Since $\mbox{span} ( \ell ) \oplus \la k \ra^{\perp} = V$, the hypothesis
of the lemma implies that $\mbox{Image} (F^{\sharp}) =  
\mbox{span} (k,s)$.  If $s$ is proportional to $k$, then
the rank of $F^{\sharp}$ is at most one
and Lemma \ref{Ker} implies that $F^{\sharp}=0$ 
which is of the form  (\ref{Fkv}) with $v$ proportional to $k$.
Thus, we may assume that  $v$ and $k$ are linearly independent.  
There exists two one-forms $\a, \b$ in the dual space
$V^{\star}$ such that
\begin{align*}
F^{\sharp} = k \otimes \a + s \otimes \b.
\end{align*}
The condition $F^{\sharp}(u)$ proportional to $k$
for all $u \in \la k \ra ^{\perp}$ requires 
$\b |_{\la k \ra^{\perp}}  =0$ (here we use that $k,s$ are linearly independent), or
equivalently
$b = - c k$ for some non-zero constant  $c$
(if it were zero, then $s = F^{\sharp} (\ell)$ would not be linearly
independent of $k$). By skew-symmetry, we conclude
\begin{align*}
F^{\sharp} = c \left (  k \otimes \bm{s} - s \otimes \bm{k} \right )
\end{align*}
which is (\ref{Fkv}) after defining $v =  c s$. 
\end{proof}

All the ingredients to prove the theorem are already in place.

\begin{proof}[Proof of Theorem \ref{degenerate}] 
Let $\zeta$ be the Killing vector of $\ads_a$ for which $\H$ is
a degenerate Killing horizon. We view $\H$ as a codimension submanifold of
$\M_{\epsilon}^{n+2}$ and we know there is  a skew-symmetric $F^{\sharp} : 
\M_{\epsilon}^{n+2} \longrightarrow \M_{\epsilon}^{n+2}$ such that $\zeta = \zeta_{F^{\sharp}} 
|_{\Sigma_a}$.  Let $\lambda := - g_{\tinyads_a}(\zeta,\zeta)$ be (minus)  
the square norm of $\zeta$ in the (A)-de Sitter  space.  By  definition 
of degenerate Killing horizon,
$\lambda |_x =0$ and $\mbox{grad} (\lambda) |_x =0$ at all
points $x \in \H$. The function
$\lambda$ is the restriction to $\Sigma_a$ of (minus) the square norm
of $\zeta_{F^{\sharp}}$, which is
\begin{align*}
\widetilde{\lambda} (x) 
:= - \la F^{\sharp} (x), F^{\sharp} (x) \ra = \la x, F^{\sharp} \circ F^{\sharp} (x) \ra.
\end{align*}
The gradient $\mbox{grad} (\lambda) |_x$ vanishes if and and only if
$\mbox{grad} (\widetilde{\lambda}) (x)$ is normal to $\Sigma_a$ at $x$. 
$\Sigma_a$ admits $\la x, x \ra  - \epsilon a^2$ as
defining  function, so the normal vector to this hypersurface
is $n = x$. The gradient is
$\mbox{grad} (\widetilde \lambda) = 2 F^{\sharp} \circ F^{\sharp} (x)$, so at every
point $x \in \H$, there must exist a real number $b|_x$ such that
\begin{align*}
F^{\sharp} \circ F^{\sharp} (x) = b|_x x.
\end{align*}
In addition it must be that $\widetilde{\lambda} |_x =0$, i.e.
$\la x, F^{\sharp} \circ F^{\sharp} (x) \ra = 0 $ and we conclude, taking into 
account that 
$x$ is non-null,
\begin{align*}
F^{\sharp} \circ F^{\sharp} (x) = 0, \quad \quad \forall x \in \H.
\end{align*}
This condition is linear in $x$ and $\H$ is everywhere transversal to 
the rays $\sigma x$, $\sigma \in \mathbb{R}$. In addition, the dimension
of $\H$ is $n-1$. Thus, the kernel
$F^{\sharp} \circ F^{\sharp}$ must be at least of dimension $n$ (equivalently,
the rank of $F^{\sharp} \circ F^{\sharp}$ is at most one). We now distinguish
two cases (a) 
$\mbox{rank} (F^{\sharp} \circ F^{\sharp} ) =1$ or 
(b) $\mbox{rank} (F^{\sharp} \circ F^{\sharp} ) =0$.

We start with  (a). Let $k$ be a generator of 
$\mbox{Image} (F^{\sharp} \circ F^{\sharp})$. Since $F^{\sharp} \circ  F^{\sharp}$
is symmetric, there is $\mu \in \mathbb{R} \setminus \{ 0 \}$ such that
\begin{align}
F^{\sharp} \circ  F^{\sharp} = \mu k \otimes \bm{k}. 
\label{Fsq}
\end{align}
The kernel of $F^{\sharp}
\circ F^{\sharp}$ is therefore $\Pi_k := \la k \ra^{\perp}$ which is a
codimension one hyperplane of $\M_{\epsilon}^{n+2}$. As shown above, a necessary condition for $x \in \Sigma_a$ to lie in a Killing horizon of $\zeta_{F^{\sharp}}$
is that  $ x \in \Pi_k$. Since $\Pi_k$ is transverse to $\Sigma_a$,
the intersection $\Sigma_a \cap \Pi_k$ is a smooth codimension one submanifold
in $\Sigma_a$, and $\H$ must be an open subset thereof. 
For any fixed  $x  \in \H$,  the tangent plane
 $T_x \H$ is
a codimension-two  vector subspace of $\M_{\epsilon}^{n+2}$ (we make the usual
identification of $\M_{\epsilon}^{n+2}$ and $T_{p} \M_{\epsilon}^{n+2}$ induced by the affine
structure).  Moreover
$T_x \H$ satisfies
\begin{align}
\label{inclusion}
T_x \H \subset \Pi_k, \quad \quad
T_x \H \subset \la x \ra^{\perp},
\end{align}
the first because $\H$ is a hypersurface of the linear
space $\Pi_k$ 
and the second because $T_x \Sigma_a 
= \la x \ra ^{\perp}$.  
The property $x \in \H \subset \Pi_{k}$, i.e. $x$ normal to $k$ also says says
that $k$ is  tangent to $T_x \Sigma_a$.
By (\ref{inclusion}) 
$k$ is a normal  vector of the  null hyperplane $T_x \H$ 
within the Lorentzian vector space $T_x \Sigma_a$. This can only occur
if $k$ has zero norm $\la k, k \ra =0$. Moreover 
$\H$ being a Killing horizon of $\zeta$ requires that 
the Killing vector at $x$ is proportional to $k$, i.e
\begin{align}
F^{\sharp} (x) = q|_x k, 
\label{Fx}
\end{align}
with $q|_x$  non-zero given that $\zeta|_x$ does not
vanish anywhere on its Killing horizon. 
Applying $F^{\sharp}$  to (\ref{Fx}) one finds
\begin{align}
F^{\sharp} \circ F^{\sharp} (x) = q|_x F^{\sharp} (k) = \mu k \la k, x \ra 
=0   \quad \quad
\Longrightarrow \quad \quad F^{\sharp} (k) = 0 
\end{align}
where in the second equality we used (\ref{Fsq}).  Take any vector
$s$ tangent to $T_x \H$. Given that $\la s,x \ra = \la s, k \ra =0$,
skew symmetry implies
\begin{align*}
\la F^{\sharp} (s),  k  \ra = \la F^{\sharp} (s), x \ra = 0,
\end{align*}
so $F^{\sharp}(s)$ is also tangent to $T_x \H$. Moreover,
\begin{align*}
0 = - \mu \la k,s \ra^2  =  - \la F^{\sharp} \circ F^{\sharp}(s), s \ra
= \la F^{\sharp}(s), F^{\sharp}(s) \ra
\end{align*}
so $F^{\sharp}(s)$ has zero norm. It must therefore be that 
$F^{\sharp}  (s) \in \mbox{span}(k)$ for all vectors in $T_x \H$. Using  $\la x
\ra \oplus T_x \H_{\zeta} = \la k \ra^{\perp}$  we conclude that
$F^{\sharp}$ maps $\la k \ra^{\perp}$ into $\mbox{span}( k)$. Thus, by
Lemma \ref{rank1} 
there exists $w \in \M_{\epsilon}^{n+2}$ linearly independent to $k$, orthogonal to $k$
and satisfying $\la w, w \ra = -\mu \neq 0$ such that
$F^{\sharp} = k \otimes \bm{w} - w \otimes \bm{k}$. This proves the ``if'' part
of the theorem in case (a).  For the converse,
we check that, given such $k$ and $w$, the Killing vector $\zeta_{F^{\sharp}}$ admits
as degenerate Killing horizon the hypersurface
\begin{align}
\H_{\zeta} := (\Sigma_a \cap \la k \ra^{\perp})  \setminus \{ \la w, x \ra = 0 \}.
\label{defHconverse}
\end{align}
Indeed, the square norm  of $\zeta := \zeta_{F^{\sharp}} |_{\Sigma_a}$
is $\lambda = - g_{\tinyads_a} (\zeta, \zeta) = - \mu \la k ,x \ra^2$ which vanishes
on $\Sigma_a \cap \la k \ra ^{\perp}$. This is a smooth null embedded
hypersurface of $\ads_a$  and $\zeta$ restricted to this hypersurface takes the form  $\zeta = k \la w, x \ra$, so it is  tangent, null, and non-zero 
exactly  on $\H_{\zeta}$. Moreover, this Killing 
horizon is degenerate because   $d \lambda \eqHzeta 0$. This concludes the 
proof of the theorem in case (a).

We now consider case (b), i.e. we assume 
$F^{\sharp} \circ F^{\sharp}  =0$.
By  Lemma \ref{FsqLemma} (item (i))  we see  that this can
only happen in the anti-de Sitter case (i.e. $\epsilon =-1$) and in
dimension  $n \geq 2$. Applying item (ii) in the same lemma, there is $\{\ell_0,k_0\}$
basis of a two-dimensional totally degenerate linear subspace 
such that $F^{\sharp} = k_0 \otimes \bm{\ell_0} - \ell_0 \otimes \bm{k_0}$.
The Killing vector $\zeta$ is null everywhere 
which opens up the possibility
that $\antids$ is foliated by Killing horizons of $\zeta$. To confirm this we need
to check first that $\zeta$ vanishes nowhere. Assume, on the contrary that
there is $x \in \Sigma_a$ where $\zeta |_x =0$. Then
\begin{align*}
0 =  \zeta|_x = \zeta_{F^{\sharp}} |_x = F^{\sharp} (x) = k_0 \la \ell_0,x \ra  
- \ell_0 \la k_0, x \ra.
\end{align*}
By linear independence this can only happen if $\la \ell_0,x \ra =
\la k_0, x \ra =0$.   Applying item (iv) of Lemma \ref{Ortogonality} we 
conclude that $x = \overline{x} + a_1 k_0 + a_2 \ell_0$ for some constants $a_1,a_2$.
Moreover $x \in \Sigma_a$ so
\begin{align*}
- a^2 = \la x, x \ra = \big \la \overline{x} + a_1 k_0 + a_2 \ell_0, \overline{x}
+ a_1 k_0 + a_2 \ell_0 \big \ra = \la \overline{x}, \overline{x} \ra 
\end{align*}
which is impossible since $\overline{x}$ lies in a space with positive
definite inner product. 
Thus $\zeta$ has no zeros, and the Fr\"obenius theorem 
(see \cite{Lee}) implies immediately 
that $\antids_a$ is foliated the 
Killing prehorizons. We want to show that,
in fact, the foliation is by Killing horizons, i.e. that 
the leaves are embedded submanifolds (and identify them  explicitly).
Consider the collection of hyperplanes 
$\Pi_{\alpha} := \{ \cos \alpha  k_0  + \sin \alpha \, \ell_0 \}^{\perp} 
\subset \M_{\epsilon =-1}^{n+2}$, where $\alpha \in \mathbb{S}^1$.
and
 define 
$\H_{\alpha} := \Sigma_a \cap \Pi_{\alpha}$. 
The hyperplane $\Pi_{\alpha}$ is transverse to $\Sigma_a$. Indeed,
being both submanifolds of codimension one, they can fail to be transverse
only at points $x \in \antids_a$ where $T_x \Sigma_a = \Pi_{\alpha}$. 
This coincidence occurs iff
the corresponding normal vectors are parallel, i.e. iff $x 
= \nu ( \cos \alpha k_0 + \sin \alpha \ell_0)$ for some non-zero $\nu$.
But this immediately contradicts $\la x, x \ra = -a^2 \neq 0$.

Transversality of $\Pi_{\alpha}$ and $\Sigma_a$ implies that
$\H_{\alpha}$ is an embedded
submanifold of $\antids_a$. We claim that $\H_{\alpha}$ is a Killing
horizon of $\zeta$. Note first that the two vectors
\begin{align*}
k_1 &:= \cos \alpha k_0 + \sin \alpha \ell_0,  \\
\ell_1 &:= - \sin \alpha k_0 + \cos \alpha \ell_0
\end{align*}
are linearly independent (hence a basis of the totally degenerate
plane $\Pi$) and satisfy
\begin{align*}
F^{\sharp} := k_0 \otimes \bm{\ell_0} - \ell_0 \otimes \bm{k_0}
= k_1 \otimes \bm{\ell_1} - \ell_1  \otimes \bm{k_1}.
\end{align*}
Moreover by construction $k_1$ is tangent to $\Pi_{\alpha}$ (because
$k_1$ is orthogonal to itself).
At any point $x \in \H_{\alpha} \subset \Pi_{\alpha}$ the Killing vector $\zeta$ 
takes the form
\begin{align*}
\zeta|_x = k_1 \la \ell_1, x \ra - \ell_1 \la k_1,x \ra
= k_1 \la \ell_1, x \ra
\end{align*}
Thus $\zeta|_x$ is null, non-zero and tangent to $\H_{\alpha}$. Moreover $\H_{\alpha}$ a null hypersurface
of $\antids_a$ 
because 
$k_1$ is normal to $\H_{\alpha}$  (any vector $v \in T_p \H_\alpha$ must also belong to $\Pi_{\alpha}$, which requires
$\la v, k_1 \ra =0$). We conclude that 
$\H_{\alpha}$ is a Killing horizon of $\zeta$.  Note that 
$\Pi_{\alpha} = \Pi_{\alpha + \pi}$, so we may restrict $\alpha$ to lie in
$(-\pi/2,\pi/2]$. We claim that the collection
of such $\{ \H_{\alpha} \}$,  defines a foliation of $\antids_a$ by
embedded null hypersurfaces. Indeed, assume 
$\alpha \neq \beta$ then $\Pi_{\alpha} \cap \Pi_{\beta}$ is the collection
of points $x_0 \in \M_{\epsilon =-1}^{n+2}$ orthogonal to both $k_0$ and $\ell_0$,
which are characterized in item (iv) of Lemma \ref{Ortogonality}. 
We have shown above that none of of these points belongs to $\Sigma_a$.
Thus $\H_{\alpha} \cap \H_{\beta} = \varnothing$. The collection
$\{ \H_{\alpha}\}$ defines a foliation provided for any $x \in \Sigma_a$,
there is $\alpha \in (-\pi/2,\pi/2]$  such that $x \in \H_{\alpha}$. But this
is clear because  the union $\bigcup_{\alpha \in \mathbb{S}^1} \Pi_{\alpha} 
= \M_{\epsilon}^{n+2}$, since for any
$x \in \M_{\epsilon=-1}^{n+2}$, the equation 
\begin{align}
\cos \alpha \la x, k_0 \ra + \sin \alpha \la x, \ell_0 \ra =0 
\label{eqalpha}
\end{align}
always admits solutions for $\alpha$ in this interval. 

We can now finish the proof of the theorem in case
(b). Let  $\H$ be the degenerate Killing horizon in the statement of the theorem, $\zeta$ any Killing vector of $\antids_a$ for which
either $\H$ or an open and dense subset  thereof is a degenerate Killing horizon of $\zeta$,
and assume that $\zeta = \zeta_{F^{\sharp}}$ with $F^{\sharp} \circ F^{\sharp} =0$.
Let $k_0$, $\ell_0$ be such that 
$F^{\sharp} = k_0 \otimes \bm{\ell_0} - \ell_0 \otimes \bm{k_0}$. 
Fix $x \in \H$ and solve (\ref{eqalpha}). Since 
$x \in \H \subset \Sigma_a$ not both $\la x, k_0\ra$ and $\la x,\ell_0\ra =0$
vanish, and the equation admits precisely  one  solution $\alpha_0
\in (\pi/2,\pi/2]$. The hypersurface
$\H_{\alpha_0} := \Pi_{\alpha_0} \cap \Sigma_a$ is a maximal Killing
horizon of $\zeta$. Thus $\H$ is a subset of $\H_{\alpha_0}$. 
Setting $k = k_1$ and $w = \ell_1$ the direct part of the theorem follows.
The converse  is clear from the results above.
\end{proof}

\end{document}